\documentclass[11pt]{article}
\usepackage{geometry}
\geometry{left=3.3cm, right = 3.3cm, top = 2.2cm, bottom = 2.5cm}
\usepackage[english]{babel}
\usepackage{mathtools}
\usepackage{amsmath}
\usepackage{amsthm}
\usepackage{amssymb}
\usepackage{graphicx}
\usepackage{color}
\usepackage{cancel}
\usepackage{hyperref}
\usepackage{verbatim} 
\newtheorem{theorem}{Theorem}

\newtheorem{proposition}{Proposition}[section]
\newtheorem{definition}{Definition}[section]
\newtheorem{lem}{Lemma}[section]
\newtheorem{cond}{Condition}
\usepackage[colorinlistoftodos]{todonotes}
\usepackage{float}
\usepackage{multirow}
\setcounter{section}{0}
\usepackage{authblk}

\newcommand{\pr}{\mathbb{P}}

\newcommand{\ex}{\mathbb{E}}
\newcommand{\A}{\mathcal{A}}

\newcommand{\X}{\mathcal{X}}

\newcommand{\Q}{\mathcal{Q}}

\newcommand{\M}{\mathcal{M}}

\newcommand{\beq}{\begin{equation}}
\newcommand{\eeq}{\end{equation}}

 \numberwithin{equation}{section}

\theoremstyle{definition}

\theoremstyle{definition}

\theoremstyle{definition}
\newtheorem{lemma}{Lemma}

\theoremstyle{definition}
\newtheorem{corollary}{Corollary}

\theoremstyle{definition}

\theoremstyle{definition}

\theoremstyle{definition}

\newcommand{\bfm}[1]{\ensuremath{\mathbf{#1}}}

\def\eps{\varepsilon}

     \def\EE{\mathbb{E}}

     \def\OO{\mathbb{O}}
     \def\PP{\mathbb{P}}
     
     \def\RR{\mathbb{R}}

\def\bu{\bfm u}     
   \def\bV{\bfm V}

\def\calA{{\cal  A}} \def\cA{{\cal  A}}
 
\def\calC{{\cal  C}} \def\cC{{\cal  C}}
\def\calD{{\cal  D}} 
\def\calE{{\cal  E}}

\def\calI{{\cal  I}}

\def\calM{{\cal  M}} \def\cM{{\cal  M}}

\def\calP{{\cal  P}} \def\cP{{\cal  P}}
\def\calQ{{\cal  Q}} 
\def\calR{{\cal  R}} 
 
\def\calT{{\cal  T}} 
\def\calU{{\cal  U}}

\def\calX{{\cal  X}}

\def\thetamax{\theta_{\scriptscriptstyle \sf max}}
\def\thetamin{\theta_{\scriptscriptstyle \sf min}}

\def\submax{\scriptscriptstyle \sf max}
\def\submaxx{\scriptscriptstyle \sf 2,max}

\def\what{\widehat}
\def\dmin{d_{\scriptscriptstyle \sf min}}
\def\dmax{d_{\scriptscriptstyle \sf max}}

\usepackage{mathtools}
\DeclarePairedDelimiter\ceil{\lceil}{\rceil}

\begin{document}

\title{Community Detection for Hypergraph Networks via Regularized Tensor Power Iteration}  
\author[1]{Zheng Tracy Ke\thanks{The research of Zheng Tracy Ke was supported in part by the NSF grant DMS-1925845}}
\author[2]{Feng Shi\thanks{The research of Feng Shi was supported in part by the Data@Carolina initiative}}
\author[3]{Dong Xia\thanks{The research of Dong Xia was supported in part by the Hong Kong RGC Grant ECS 26302019 }}
\affil[1]{Harvard University}
\affil[2]{University of North Carolina at Chapel Hill}
\affil[3]{Hong Kong University of Science and Technology}
\date{\today}
\maketitle

\begin{abstract}
To date, social network analysis has been largely focused on pairwise interactions. The study of higher-order interactions, via a hypergraph network, brings in new insights. We study community detection in a hypergraph network. A popular approach is to project the hypergraph to a graph and then apply community detection methods for graph networks, but we show that this approach may cause unwanted information loss. We propose a new method for community detection that operates directly on the hypergraph. At the heart of our method is a regularized higher-order orthogonal iteration (reg-HOOI) algorithm that computes an approximate low-rank decomposition of the network adjacency tensor. Compared with existing tensor decomposition methods such as HOSVD and vanilla HOOI, reg-HOOI yields better performance, especially when the hypergraph is sparse. Given the output of tensor decomposition, we then generalize the community detection method SCORE \cite{jin2015fast} from graph networks to hypergraph networks. We call our new method Tensor-SCORE. 

In theory, we introduce a degree-corrected block model for hypergraphs (hDCBM), and show that Tensor-SCORE yields consistent community detection for a wide range of network sparsity and degree heterogeneity. As a byproduct, we derive the rates of convergence on estimating the principal subspace by reg-HOOI, with different initializations, including the two new initialization methods we propose, a diagonal-removed HOSVD and a randomized graph projection. 

We apply our method to real hypergraph networks and obtain encouraging results. It suggests that exploring higher-order interactions provides additional information not seen in graph representations.  
\end{abstract}

\section{Introduction}\label{sec:intro}
Social networks is a standard tool for representing complex social relations. Conventional social networks are {\it graph} networks which only represent pairwise interactions. However, many social relationships involve more than two subjects, and {\it hypergraph} networks become natural representations. A hypergraph consists of a set of nodes $V$ and a set of hyperedges $E$. Each hyperedge is a subset of $V$; if this subset contains $m$ nodes, it is called an order-$m$ hyperedge. A graph can be viewed as a special hypergraph with only order-$2$ hyperedges. There are many examples of hypergraph networks, and Table~\ref{tb:datasets} lists the two datasets we will analyze in this paper. 

\begin{table}[bt!]
\centering \label{tb:datasets}
\caption{Hypergraph network data sets analyzed in this paper. }
\scalebox{0.9}{ 
\begin{tabular}{|l|l|l|l|l|c|c|}
\hline
 Dataset & Node & Hyperedge & \#Nodes &  \#Hyperedges    \\
\hline
Legislator &  legislator  & bill (multiple sponsors) & 120    &  802  \\
MEDLINE  & disease   & paper (multiple diseases mentioned)  & 190 &  9,351\\        
\hline
\end{tabular}
} 
\end{table}

It is believed that, going beyond pairwise interactions, the study of higher-order social interactions is useful to bring new insight of underlying social connections \cite{benson2016higher}. For example, higher-order interactions of species are found to be a possible mechanism for maintaining a stable coexistence of diverse competitors \cite{grilli2017higher}. In this paper, we are interested in utilizing a hypergraph network for community detection, i.e., to cluster nodes into socially similar groups (``communities").  

A popular idea for the analysis of hypergraph networks is to project the hypergraphs to graphs and then apply existing methods for graph networks. Often, this has been done in data pre-processing and we are not even aware: Many network datasets were originally hypergraphs but converted to graphs, where each order-$m$ hyperedge was converted to an $m$-clique in the graph. Examples include the academic collaboration networks \cite{grossman2002evolution,newman2001structure,ji2017coauthorship}, which are converted from coauthorship hypergraph networks, and the congress voting networks \cite{fowler2006connecting,lee2016time}, which are converted from co-sponsorship hypergraph networks. Even in the case where the hypergraph representation is kept, the follow-up statistical analysis may still project it to a graph. One example is the use of hypergraph Laplacian \cite{zhou2007learning,ghoshdastidar2017consistency,chien2018community,kim2018stochastic}, which actually converts the hypergraph to a weighted graph, where the weight of an edge between two nodes is proportional to the counts of hyperedges that involve both nodes. 

While this approach of projection-to-graph allows us to conveniently apply existing methods for graph networks, it may cause unwanted information loss. This is clearly seen in the Legislator dataset. In Figure~\ref{fig:Legislator-spectral}, we plot the spectral embedding (exact meaning is in Section~\ref{subsec:Legislator}) of the projected graph and the original hypergraph, respectively. 
For this dataset, it is known that the nodes (i.e., legislators) are divided into 7 parties. We can see that the parties are merely inseparable in the projected graph but are clearly separable in the hypergraph. Such information loss can be characterized mathematically. In Section~\ref{subsec:example}, we provide an example showing that some nonzero population ``eigenvalues" of the hypergraph become zero when the hypergraph is projected to a graph. This motivates us to develop community detection methods that operate on hypergraphs directly.

A faithful representation of a hypergraph is its adjacency tensor. Let $H=(V,E)$ denote a hypergraph with $n$ nodes, where $V=\{1,2,\ldots,n\}$. When every hyperedge $e\in E$ is of order $m$, we call $H$ an {\it $m$-uniform hypergraph}. We are primarily interested in uniform hypergraphs, but we will also discuss non-uniform hypergraphs in Section~\ref{subsec:non-uniform}. 
The adjacency tensor of an $m$-uniform hypergraph is an $m$-way array $\A\in\mathbb{R}^{n^m}$,
where, for all $1\leq i_1,\ldots,i_m\leq n$, 
\[
\A(i_1,\ldots,i_m)=\begin{cases}
1, &\mbox{if $\{i_1,\ldots,i_m\}$ is a hyperedge},\\
0, &\mbox{otherwise}. 
\end{cases}
\]
By convention, we do not allow self-hyperedges, so that $\A(i_1,\ldots,i_m)$ is always zero when some of  $i_1,\ldots,i_m$ are equal.

Our main idea is to conduct a Tucker decomposition \cite{tucker1966some} on $\A$. The Tucker decomposition is a form of higher order PCA, and it outputs a factor matrix whose columns will be used as ``eigenvectors" of the tensor. Popular methods for computing the Tucker decomposition include higher-order singular value decomposition (HOSVD) \cite{de2000multilinear} and higher-order orthogonal iteration (HOOI) \cite{de2000best}. HOSVD unfolds the tensor to a huge rectangular matrix and conducts singular value decomposition. HOOI starts with an initial solution and runs power iteration.  
These methods have been existing for nearly 40 years, but their statistical properties were not investigated until recently. In a setting where the data tensor contains normal entries, it was shown that HOSVD is statistically sub-optimal \cite{richard2014statistical} and that HOOI can significantly improve HOSVD and attain statistical optimality \cite{zhang2018tensor}. In this paper, we show that similar insight holds when the data tensor is hypergraph adjacency. Different from the Gaussian setting in   \cite{richard2014statistical,zhang2018tensor,xia2019sup}, the adjacency tensor can be extremely sparse, and so we have to carefully modify the HOOI algorithm. We propose a regularized HOOI (reg-HOOI), where at each power iteration, we truncate large entries of the factor matrix to control its $\ell^\infty$-norm. Our method can reliably compute the Tucker decomposition of a hypergraph adjacency tensor, even when it is very sparse.  

\begin{figure}[t] \label{fig:Legislator-spectral}
\centering
\includegraphics[width=.45\textwidth]{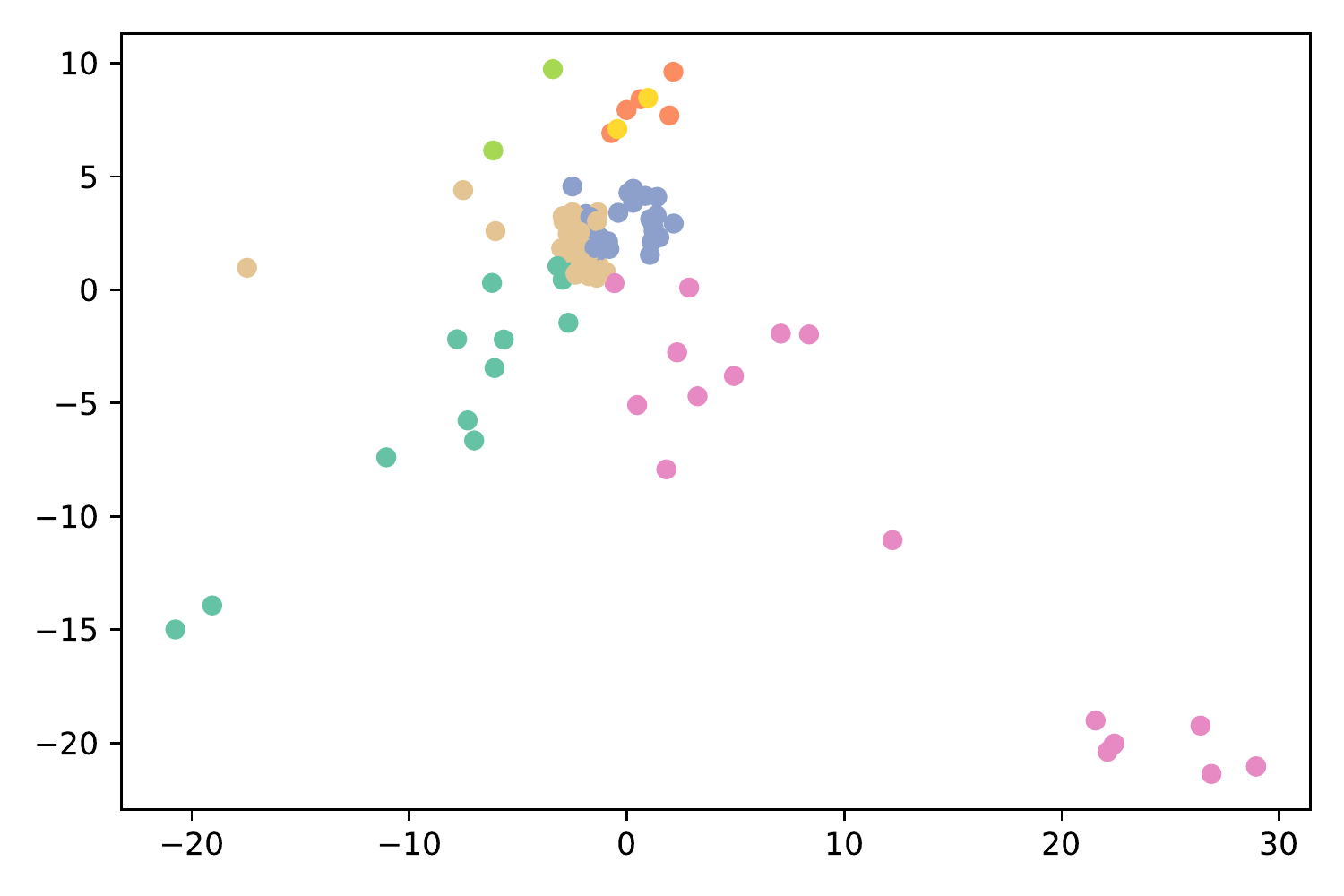}%
\includegraphics[width=.45\textwidth]{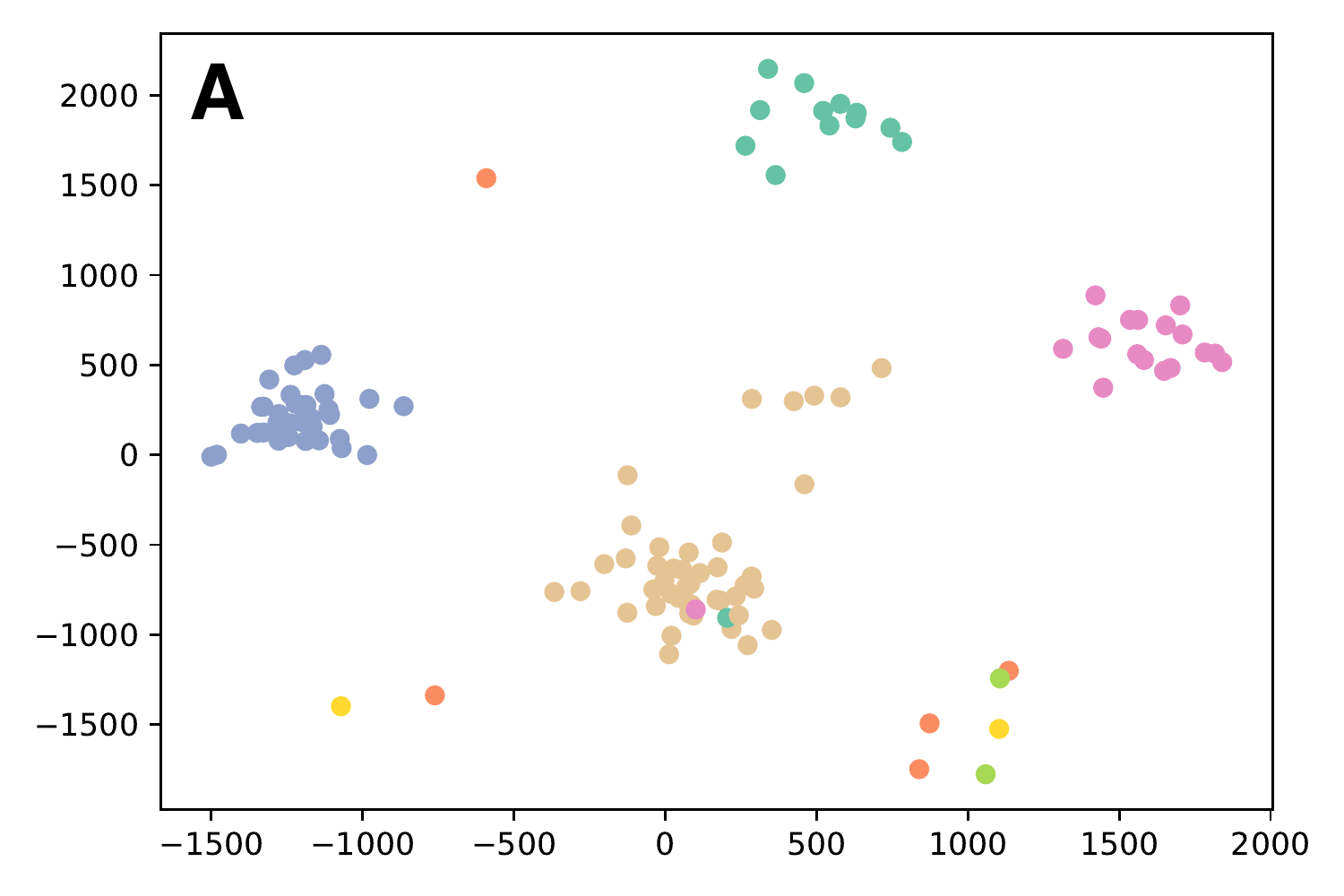}
\caption{Projecting a hypergraph to a graph causes information loss (dataset: Legislator). Left: spectral embedding of projected graph. Right: spectral embedding of hypergraph. Colors correspond to different parties in the Peruvian congress. The separation of parties in the right panel is much more clear, resulting in higher accuracy in recovering true parties (clustering error rate: 48.3\% for the left and 11.7\% for the right).}
\end{figure}

The output of the Tucker decomposition is a factor matrix $ \hat{\Xi}\in\mathbb{R}^{n\times K}$, where $K$ is the prespecified Tucker rank which we set as the number of communities. Let $\hat{\xi}_1,\ldots,\hat{\xi}_K\in\mathbb{R}^n$ be the columns of $\hat{\Xi}$. They can be viewed as ``eigenvectors" of $\A$. We hope to use these ``eigenvectors" for community detection. A straightforward idea is to apply the $k$-means clustering to rows of $\hat{\Xi}$.  Unfortunately, this approach performs unsatisfactorily since these ``eigenvectors" are heavily influenced by degree heterogeneity.  
A similar phenomenon has been observed for typical graph networks, and \cite{jin2015fast} introduced a systematic way to remove the effects of degree heterogeneity from eigenvectors, called {\it scale-invariant mapping}. In detail, letting $h:\mathbb{R}^K\mapsto \mathbb{R}_+$ be a homogeneous function ($h(ax)=ah(x)$ for all $x\in\mathbb{R}^K$ and $a>0$), \cite{jin2015fast} proposed the following row-wise normalization on $\hat{\Xi}$: 
\beq \label{SCOREq}
\hat{r}_i = \frac{1}{h(\hat{x}_i)} \hat{x}_i, \qquad \mbox{where $\hat{x}_i\in\mathbb{R}^K$ is the $i$-th row of $\hat{\Xi}$}, \qquad 1\leq i\leq n. 
\eeq 
It was shown that this operation removes the effects of degree heterogeneity in $\hat{x}_i$ and the clustering based on $\hat{r}_i$ yields satisfactory community detection. There are many choices of $h$, and the special case with $h(x)$ equal to the first coordinate of $x$ is commonly known as the SCORE normalization \cite{jin2015fast}. Whether this idea continues to work for hypergraphs was unknown, since $\hat{\Xi}$ in our problem is from Tucker decomposition of a tensor but not spectral decomposition of a matrix. Under a degree-corrected block model for hypergraphs, we generalize the insight in \cite{jin2015fast} and show that the SCORE normalization continues to work. 

Finally, we arrive at our method, \emph{tensor-SCORE}, a two-step method for community detection on hypergraph networks: 
\begin{itemize}
\item First, apply the regularized HOOI algorithm to obtain $\hat{\Xi}$. 
\item Second, use \eqref{SCOREq} to obtain $\hat{r}_1,\ldots,\hat{r}_n$ and perform $k$-means clustering.
\end{itemize}

\subsection{Related works}
There has been a short literature on spectral methods for hypergraph community detection. One approach is to project a hypergraph to a weighted graph, where the edge weight is the count of hyperedges involving two nodes, and then apply standard spectral clustering to the projected graph,  either on the adjacency matrix \cite{ghoshdastidar2015provable} or the Laplacian  \cite{zhou2007learning,ghoshdastidar2017consistency}. Another approach is to conduct HOSVD on the adjacency tensor and then perform clustering on the output factor matrix \cite{ghoshdastidar2014consistency, ghoshdastidar2015spectral}.

Our method has major differences from the above ones. First, we use tensor power iteration to obtain the spectral decomposition of a hypergraph. This is significantly different from the projection-to-graph approach \cite{ghoshdastidar2015provable,zhou2007learning,ghoshdastidar2017consistency} and the HOSVD approach \cite{ghoshdastidar2014consistency, ghoshdastidar2015spectral}. The projection-to-graph approach could result in information loss (see Section~\ref{subsec:example}) in certain settings and hence, its success requires additional conditions on the spectrum of the projected graph and these conditions are hard to check in practice. The HOSVD approach faces no such issue, but it yields a slower rate of convergence than that of ours (see Section~\ref{sec:theory}). Second,  
the existing methods rely on a relatively ideal assumption of {\it no degree heterogeneity}. However, we deal with a much more realistic setting where we allow for {\it (potentially severe) degree heterogeneity}. To our best knowledge, there are no existing methods for hypergraph community detection that deal with degree heterogeneity.  

Another line of works studied statistical limits of community detection in hypergraphs \cite{angelini2015spectral,ahn2017information,lin2017fundamental,chien2018community,kim2018stochastic}. These works are primarily interested in extremely sparse networks where the average node degree is  bounded. To get sharp minimax rate \cite{lin2017fundamental,chien2018community} or phase transition threshold \cite{ahn2017information,kim2018stochastic}, they have to restrict to a relatively narrow model where the network has no degree heterogeneity and satisfies particular symmetry conditions (e.g., equal probability for cross-community hyperedges, equal-size communities, etc.). However, our focus is to develop practical algorithms for real networks, and it is crucial to work on a realistic model that allows degree heterogeneity and does not need symmetry conditions. We also note that our method applies to a wide range of sparsity levels such that the average node degree can grow with $n$ slowly at a logarithmic rate.

\begin{table}[hb!]
\centering
\caption{Comparison with existing methods on community detection for hypergraphs.}
\vspace{2pt}
\scalebox{.95}{\begin{tabular}{l|c|c|c}
\hline
& HOSVD & Graph projection & Tensor-SCORE\\
& \cite{ghoshdastidar2014consistency, ghoshdastidar2015spectral} & \cite{ghoshdastidar2015provable,zhou2007learning,ghoshdastidar2017consistency,chien2018community,ahn2017information} & (our method)\\
\hline
Allow degree heterogeneity &  no & no & yes\\
Deal with sparse networks & no & yes & yes\\
Avoid information loss & yes & no & yes\\
\hline
\end{tabular}}
\end{table}

A key component of our method is the regularized power iteration algorithm applied to the network adjacency tensor. This is connected to recent interests in applying power iteration for tensor decomposition and tensor completion \cite{richard2014statistical,hopkins2015tensor, xia2017statistically,liu2017characterizing,zhang2018tensor}. These works assume the data tensor consists of Gaussian entries, so the results are not applicable here. We recognize that the standard power iteration is unsatisfactory in our setting, since the adjacency tensor can be very sparse. We introduce a regularized version of power iteration and show that it yields the desired accuracy on the output factor matrix. In particular, we prove that power iteration can significantly improve HOSVD, generalizing the conclusion of \cite{richard2014statistical,zhang2018tensor} from Gaussian settings to network settings. 

Another key component of our method is the SCORE normalization \eqref{SCOREq} on the factor matrix from power iteration. This is connected to existing works on community detection for graph networks (e.g., \cite{bickel2009nonparametric,rohe2011}), especially those that accommodate degree heterogeneity \cite{jin2015fast,qin2013regularized,JiZhuMM,gao2018community}. The approach of normalizing eigenvectors to remove degree effects has been well understood and accepted in the literature of community detection for graph networks, but its extension to hypergraph networks is yet to be attempted. We show that similar ideas continue to work for hypergraph networks.

\subsection{Our contributions}
\begin{itemize}
\item We propose Tensor-SCORE as a new method for hypergraph community detection. This is the first method that uses tensor power iteration technique for community detection and also the first method that accommodates degree heterogeneity. 
\item We introduce a degree-corrected block model for hypergraphs (hDCBM), which generalizes the stochastic block model for hypergraphs (hSBM) \cite{ghoshdastidar2014consistency} by allowing degree heterogeneity. 
 Under hDCBM, we show that our method achieves consistent community detection for a wide range of sparsity and degree heterogeneity. In the special case of hSBM, the error rate of our method is faster than those of HOSVD-based methods \cite{ghoshdastidar2014consistency, ghoshdastidar2015spectral} and compares favorably with the best known rate of spectral methods (e.g., \cite{ghoshdastidar2017consistency}). 
\item A key component of our method is a regularized power iteration algorithm for computing Tucker decomposition of a hypergraph adjacency tensor. Our theory provides the first theoretical guarantee of power iteration on hypergragh tensors, which complements the known results for Gaussian tensors. 
\item We apply our method to two real hypergraphs, with encouraging results. The results give clear evidence that exploring higher-order interactions provides additional information not seen in pairwise interactions. 
\end{itemize}
The remaining of this paper is organized as follows. In Section~\ref{sec:background}, we review the background of tensor decomposition and introduce necessary terminologies. Readers who are familiar with these materials can skip this section. In Section~\ref{sec:method}, we describe our method, Tensor-SCORE, and explain its rationale. Section~\ref{sec:theory} contains the main theoretical results, where we conduct sharp spectral analysis of power iteration and present the rate of convergence of Tensor-SCORE. 
Section~\ref{sec:experiment} contains simulations and real data results.  
Discussions can be found in Section~\ref{sec:discuss}, and proofs are relegated to the appendix.

\section{Some background of tensor decomposition} \label{sec:background}
We review some terminology of tensors that will be repeatedly used in this paper. We also review the standard tensor decomposition methods, such as HOSVD and HOOI. Readers who are familiar with these materials can skip this section.
 
We call $\X=\{\X(i_1,\ldots,i_m)\}_{1\leq i_1,\ldots,i_m\leq n}$ an 
$m$-way tensor of dimension $n$. A tensor $\X$ is supersymmetric if $\X(i_1,\ldots,i_m)=\X(j_1,\ldots,j_m)$ whenever $(j_1,\ldots,j_m)$ is a permutation of $(i_1,\ldots,i_m)$. Let $T^m(\mathbb{R}^n)$ denote the set of all $m$-way tensors of dimension $n$, and let $S^m(\mathbb{R}^n)$ be the subset of supersymmetric tensors. 

\begin{definition}[Diagonal of a tensor] \label{def:diagonal}
For any tensor $\X\in T^m(\mathbb{R}^n)$, $\mathrm{diag}(\X)$ is the tensor whose $(i_1,\ldots,i_m)$ element is equal to the corresponding element of $\X$ if some indices of $i_1,\ldots,i_m$ are identical, and is equal to $0$ otherwise.
\end{definition}
\noindent
Our definition of $\mathrm{diag}(\X)$ is different from the conventional one (e.g., in \cite{kolda2009tensor}). We use this definition for its convenience in dealing with self-edges. 

Matricization is the operation of re-arranging a tensor $\X\in T^m(\mathbb{R}^n)$ into an $n\times n^{m-1}$ matrix. In this paper, we are primarily interested in the mode-1 matricization, denoted as $\cM_1(\X)$, where the $(i_1,\ldots,i_m)$-th element of $\X$ is mapped to the $(i_1, 1+\sum_{k=2}^m(i_k-1)n^{k-2})$-th element of the matrix. It is more convenient to define matricization using slices of a tensor. A slice of a tensor is a matrix by fixing $(m-2)$ of the indices. Let $\X^{(i_3,\ldots, i_m)}\in\mathbb{R}^2$ be a slice of $\X$ by fixing the last $(m-2)$ indices, and there are a total of $n^{m-2}$ such slices. Then, the matricization of $\X$ is formed by placing these slices together. For example, for $m=4$, $\cM_1(\X)=[\X^{(11)}, \X^{(12)},\ldots, \X^{(nn)}]\in\mathbb{R}^{n\times n^3}$.

We can multiply a tensor by one matrix or multiple matrices. For a tensor $\X\in T^m(\mathbb{R}^n)$, a matrix $U\in\mathbb{R}^{K\times n}$, and an integer $1\leq j\leq m$, the $j$-mode product $\X\times_j U$ is an $m$-way tensor with only $n^{m-1}K$ entries, where the $j$-th index only ranges from 1 to $K$ and the $(i_1,\ldots, i_{j-1},k, i_{j+1},\ldots, i_m)$-th element of $\X\times_j U$ equals to $\sum_{i_j=1}^n\X(i_1,\ldots,i_m)U(k,i_j)$. We can multiply $\X\times_j U$ by $U$ again on another mode $k\neq j$. In this paper, we often use two forms of multiplications: $\X\times_2U\times_3\cdots\times_m U$, which is an $n\times K\times \cdots\times K$ tensor, and $\X\times_1U\times_2\cdots\times_m U$, which is an $m$-way tensor of dimension $K$. The spectral norm of $\X$ is defined as $\|\X\|=\max_{h_1,\ldots,h_m\in\mathbb{R}^n: \|h_k\|=1}\X\times_1h^{\top}_1\times_2\cdots\times_mh^{\top}_m$.

\begin{definition}[Tucker decomposition] \label{def:Tucker}
For a tensor $\X\in T^m(\mathbb{R}^{n})$, if there exists $\mathcal{C}\in T^m(\mathbb{R}^K)$ and $B_1,\ldots,B_m\in\mathbb{R}^{n\times K}$ such that, $\X = \cC\times_1 B_1\times_2\cdots \times_m B_m$, or equivalently, for all $1\leq i_1,\ldots,i_m\leq n$, 
\[
\X(i_1,\ldots,i_m)=\sum_{1\leq k_1,\ldots, k_m\leq K} \mathcal{C}(k_1,\ldots,k_m)\prod_{j=1}^m B_j(i_j,k_j),
\]
then we write $\X=[\mathcal{C}; B_1,\ldots,B_m]$ and call it a rank-$K$ Tucker decomposition of $\X$, where $\mathcal{C}$ is called a core tensor and $B_1,\ldots,B_m$ are called factor matrices.
\end{definition}

The Tucker decomposition is not unique. For example, for an orthogonal matrix $O\in\mathbb{R}^{K\times K}$, let $\mathcal{C}^*=\mathcal{C}\times_1O\times_2\cdots \times_m O$ and $B_m^*=B_m O^{\top}$. It is easy to see that $[\mathcal{C}; B_1,\ldots,B_m]=[\mathcal{C}^*; B^*_1,\ldots,B^*_m]$. Therefore, what we care is the column space of the factor matrices.   

Given $K\geq 1$ and a supersymmetric tensor $\X\in S^m(\mathbb{R}^{n})$, it is interesting to find its approximate rank-$K$ Tucker decomposition, i.e., to find $\cC\in S^m(\mathbb{R}^K)$ and $B\in\mathbb{R}^{n\times K}$ such that $\X\approx [\cC; B,\ldots,B]$. The factor matrix of $B$ is of particular interest. We review two common methods for extracting $B$. One is   
higher-order singular value decomposition (HOSVD) \cite{de2000multilinear}. It conducts SVD on the mode-1 matricization ${\cal M}_1(\X)\in\mathbb{R}^{n\times n^{m-1}}$ and takes the first $K$ left singular vectors to form $B$. 
The other is higher-order orthogonal iteration (HOOI) \cite{de2000best}. It starts from an initial solution $B^{(0)}$ and iteratively updates $B^{(k)}$ to $B^{(k+1)}$ as follows: Obtain $G^{(k)}=\cM_1\bigl(\X\times_2(B^{(k)})^{\top}\times_3\cdots\times_m(B^{(k)})^{\top}\bigr)\in\mathbb{R}^{n\times K^{m-1}}$. Conduct SVD on $G^{(k)}$ and take the first $K$ left singular vectors to form $B^{(k+1)}$.

\section{Tensor-SCORE for hypergraph community detection} \label{sec:method}
We first introduce a degree-corrected block model for hypergraphs in Section~\ref{subsec:hDCBM}. Next, in Section~\ref{subsec:ideal}, we consider an oracle case where the probability of all hyperedges are known, so that a ``signal" tensor is available. We study the output of tensor decomposition and explain the rationale of the normalization \eqref{SCOREq} in our setting. In Section~\ref{subsec:real}, we introduce our method, Tensor-SCORE, where the key behind it is a regularized HOOI algorithm for reliably extracting a factor matrix from $\A$. Last, in Section~\ref{subsec:example}, we compare our method with the approach of projecting a hypergraph to a graph, and in Section~\ref{subsec:non-uniform}, we discuss the extension to non-uniform hypergraphs.  

\subsection{The hDCBM model and its tensor form} \label{subsec:hDCBM}
Recall that we have an $m$-uniform hypergraph $H=(V,E)$, where $V=\{1,2,\ldots,n\}$. We assume that the nodes in the network consist of $K\geq 2$ disjoint communities
$
V = V_1\cup V_2\cup \cdots\cup V_K.  
$
Let $\theta_i>0$ be the degree heterogeneity parameter associated with node $i$, $1\leq i\leq n$. Let $\calP\in S^m (\mathbb{R}^{K})$ be a nonnegative symmetric tensor. We further assume that each entry of the adjacency tensor $\A$ is a Bernoulli random variable, independent of others, i.e., 
\beq \label{Mod1}
\big\{\A(i_1,\ldots,i_m):  1\leq i_1<i_2<\ldots<i_m\leq n\big\} \mbox{ are independent}.
\eeq
We model that 
\beq \label{Mod2}
\pr\big(\A(i_1,\ldots,i_m) =1 \big) = \calP(k_1,\ldots,k_m)\prod_{j=1}^m \theta_{i_j}, \qquad \mbox{if $i_j\in V_{k_j}$, $1\leq j\leq m$}. 
\eeq
We call \eqref{Mod1}-\eqref{Mod2} the hypergraph degree-corrected block model (hDCBM).  

In hDCBM, the low-dimensional tensor $\calP$ characterizes the difference across communities and the parameters $\theta_i>0$ capture the individual degree heterogeneity. When $m=2$, hDCBM reduces to DCBM for a typical graph \cite{karrer2011stochastic}. When all the $\theta_i$'s are equal, hDCBM reduces to the stochastic block model for hypergraphs (hSBM) \cite{ghoshdastidar2014consistency}. 

We now introduce a tensor form of hDCBM.  
For each $1\leq i\leq n$, we define a membership vector $\pi_i\in\mathbb{R}^K$ such that $\pi_i(k)=1\{i\in V_k\}$, $1\leq k\leq K$. Then, the right handside of \eqref{Mod2} can be equivalently written as $\sum_{1\leq k_1,k_2,\ldots,k_m\leq K} \calP(k_1,\ldots,k_m)\prod_{j=1}^m \theta_{i_j}\pi_{i_j}(k_j)$. 
This has almost the same form as in Definition~\ref{def:Tucker}, except for the case where some indices in $(i_1,\ldots,i_m)$ are equal. Write $\Pi=[\pi_1,\ldots,\pi_n]'$ and $\Theta=\mathrm{diag}(\theta_1,\ldots,\theta_n)$. We introduce a tensor ${\cal Q}\in S^m(\mathbb{R}^n)$ by 
\beq \label{mod-Q}
\Q=[\calP; \Theta\Pi, \ldots,\Theta \Pi].  
\eeq
Then, $\Q$ and $\ex[\A]$ are the same except for the diagonal entries (Definition~\ref{def:diagonal}). We write
$
\A = \Q - \mathrm{diag}(\Q) + (\A - \ex[\A]).  
$
We view $\Q$ as the ``signal" part and view $(\A-\Q)$ as the ``noise" part. The ``signal" tensor admits a rank-$K$ Tucker decomposition.

\subsection{The oracle approach} \label{subsec:ideal}
We consider an oracle case where the ``signal" tensor $\Q$ is observable. We study the output of conducting tensor decomposition on $\Q$ and investigate how to use it to recover $\Pi$. The insight gained in the oracle case can be easily extend to the real case where $\A$ is observed.  

First, we apply HOSVD to $\Q$. Recall that $\cM_1(\Q)$ is the mode-1 matricization of $Q$. The rank of this matrix is $\leq K$. Let $\lambda_1\geq  \lambda_2 \geq \ldots \geq \lambda_K$ be the singular values of $\cM_1(\Q)$, and let $\xi_1,\xi_2,\ldots,\xi_K\in\mathbb{R}^n$ be the associated left singular vectors. Write $\Xi=[\xi_1,\ldots,\xi_K]$. Then, $\Xi$ is the factor matrix output by HOSVD. 
\begin{lem}[HOSVD on $\Q$] \label{lem:oracle-HOSVD}
Consider the hDCBM where $\Q=[\calP; \Theta\Pi,\ldots,\Theta\Pi]$. Let $d\in\RR^K$ be the vector such that $d_k=\|\theta\|^{-1}(\sum_{i\in V_k}\theta_i^2)^{1/2}$ for $1\leq k\leq K$ and let $D=\mathrm{diag}(d)$ be the $K\times K$ diagonal matrix whose diagonal entries come from $d$. Define a $K\times K^{m-1}$ matrix  $G=\cM_1(\calP\times_1D\times_2\cdots\times_m D)$. We assume $G$ has a rank $K$. 
Let $\kappa_1\geq \kappa_2\geq\ldots\geq\kappa_K$ and $u_1,u_2,\ldots,u_K\in\mathbb{R}^K$ be the singular values and left singular vectors of $G$. Then, 
\[
\lambda_k = \kappa_k\|\theta\|^m, \qquad \xi_k =\|\theta\|^{-1}\Theta\Pi D^{-1} u_k, \qquad 1\leq k\leq K.
\]
\end{lem}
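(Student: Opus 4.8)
The plan is to reduce the HOSVD of $\Q$ to an ordinary SVD of the mode-1 matricization $\cM_1(\Q)$, and then to exploit two structural facts: that the degree-corrected factor matrix $\Theta\Pi$ becomes orthonormal after a diagonal rescaling, and that matricization turns the Tucker product into a Kronecker-structured matrix product. First I would set $W := \Theta\Pi\in\mathbb{R}^{n\times K}$ and observe that, because the communities $V_1,\ldots,V_K$ are disjoint, each row of $W$ has a single nonzero entry, so $W^\top W$ is diagonal with $(W^\top W)_{kk}=\sum_{i\in V_k}\theta_i^2=\|\theta\|^2 d_k^2$. Hence $W^\top W=\|\theta\|^2 D^2$, and the rescaled matrix $\tilde W:=\|\theta\|^{-1}WD^{-1}$ satisfies $\tilde W^\top\tilde W=I_K$, i.e. $\tilde W$ has orthonormal columns. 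This is exactly where the vector $d$ and the matrix $D$ of the statement enter.

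Next I would invoke the standard identity for the matricization of a Tucker product \cite{kolda2009tensor} (the paper's column-index convention $1+\sum_{k=2}^m(i_k-1)n^{k-2}$ coincides with the usual mode-1 unfolding), which gives $\cM_1(\Q)=W\,\cM_1(\calP)\,(W^{\otimes(m-1)})^\top$, the Kronecker product having $m-1$ identical factors equal to $W$. Substituting $W=\|\theta\|\tilde W D$ and using the mixed-product property $(AB)\otimes(CD)=(A\otimes C)(B\otimes D)$ repeatedly, I can pull out the scalar $\|\theta\|^m$ and isolate the core factor $D\,\cM_1(\calP)\,(D^{\otimes(m-1)})^\top$, which by the very same identity equals $\cM_1(\calP\times_1 D\times_2\cdots\times_m D)=G$. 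This yields $\cM_1(\Q)=\|\theta\|^m\,\tilde W\,G\,(\tilde W^{\otimes(m-1)})^\top$.

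To finish, I would note that $\tilde W^{\otimes(m-1)}$ also has orthonormal columns, since $(\tilde W^{\otimes(m-1)})^\top\tilde W^{\otimes(m-1)}=(\tilde W^\top\tilde W)^{\otimes(m-1)}=I_{K^{m-1}}$. Writing the (full row rank) SVD $G=U_G\Sigma_G V_G^\top$ with $U_G=[u_1,\ldots,u_K]$ and $\Sigma_G=\mathrm{diag}(\kappa_1,\ldots,\kappa_K)$, I obtain $\cM_1(\Q)=(\tilde W U_G)(\|\theta\|^m\Sigma_G)(\tilde W^{\otimes(m-1)}V_G)^\top$. Both $\tilde W U_G$ and $\tilde W^{\otimes(m-1)}V_G$ have orthonormal columns, and the middle matrix is diagonal with nonnegative entries in decreasing order, so this is a genuine SVD of $\cM_1(\Q)$. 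Reading off the factors gives $\lambda_k=\|\theta\|^m\kappa_k$ and $\xi_k=\tilde W u_k=\|\theta\|^{-1}\Theta\Pi D^{-1}u_k$, as claimed.

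The only genuinely delicate step is the second one: I must ensure the Kronecker ordering used in the matricization identity is the one compatible with the paper's column-index map, and that I use the identical convention when matricizing both $\Q$ and $\calP\times_1 D\times_2\cdots\times_m D$. Because all $m-1$ factor matrices are equal and I extract only the singular values and \emph{left} singular vectors (the right singular vectors, which alone depend on this ordering, never appear in the conclusion), the argument is insensitive to the precise convention. A minor caveat worth flagging is that when $G$ has repeated singular values the vectors $\xi_k$ are determined only up to rotation within each eigenspace, so the stated formula exhibits one valid choice; and $D^{-1}$ requires each community to be nonempty so that $d_k>0$.
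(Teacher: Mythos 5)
Your proposal is correct and follows essentially the same route as the paper's proof: both rest on the observation that $\Pi^{\top}\Theta^2\Pi=\|\theta\|^2D^2$ so that $\|\theta\|^{-1}\Theta\Pi D^{-1}$ has orthonormal columns, combined with the mode-1 matricization identity for the Tucker product and the Kronecker mixed-product property to exhibit an explicit SVD of $\cM_1(\Q)$ with core $G$. The only cosmetic difference is that you rescale after matricizing whereas the paper first absorbs $D$ into the core tensor $\cP_*=\cP\times_1D\times_2\cdots\times_mD$ and then matricizes; the resulting factorization and conclusion are identical.
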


Next, we apply HOOI to $\Q$. Suppose this algorithm is initialized by $\widetilde{\Xi}=[\widetilde{\xi}_1,\ldots,\widetilde{\xi}_K]$ and let $\Xi^{(j)}=[\xi_1^{(j)},\ldots,\xi_K^{(j)}]$ be the output after $j$-th power iteration. By the design of the algorithm, $\Xi^{(j)}$ contains the first $K$ left singular vectors of $\cM_1(\Q\times_2\Xi^{(j-1)\top}\times_3\cdots\times_m\Xi^{(j-1)\top})$. 
\begin{lem}[HOOI on $\Q$] \label{lem:oracle-HOOI}
Consider the hDCBM where $\Q=[\calP; \Theta\Pi,\ldots,\Theta\Pi]$. Let $G$ be the same as in Lemma~\ref{lem:oracle-HOSVD}, and assume $G$ has a rank $K$. Let $\Xi$ be the factor matrix from HOSVD. We assume the initialization $\widetilde{\Xi}$ of HOOI satisfies $\|\widetilde{\Xi}\widetilde{\Xi}'-\Xi\Xi'\|<1$.  
\begin{itemize}
\item After the first power iteration, $\Xi^{(1)}$ has the same column space as that of the HOSVD factor matrix, i.e., $\Xi^{(1)}=\Xi O$, where $O\in\mathbb{R}^{K\times K}$ is an orthogonal matrix.  
\item Starting from the second power iteration, the solution remains unchanged, i.e., $\Xi^{(j)}=\Xi^{(1)}$, for all $j\geq 2$. 
\item If it is initialized by HOSVD, the output is always the same as HOSVD, no matter how many iterations have been run; i.e., if $\widetilde{\Xi}=\Xi$, then $\Xi^{(j)}=\Xi$, for all $j\geq 1$. 
\end{itemize}
\end{lem}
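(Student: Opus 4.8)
The plan is to leverage the exact rank-$K$ Tucker structure $\Q=[\calP;\Theta\Pi,\ldots,\Theta\Pi]$, which pins every matricization produced by the iteration to the fixed ``signal subspace'' $\mathcal{S}:=\mathrm{col}(\Theta\Pi)=\mathrm{col}(\Xi)$. The computational backbone I would establish first is that, for any $B\in\mathbb{R}^{n\times K}$ with orthonormal columns, multiplying out the modes gives $\Q\times_2 B^\top\times_3\cdots\times_m B^\top=\calP\times_1(\Theta\Pi)\times_2(B^\top\Theta\Pi)\times_3\cdots\times_m(B^\top\Theta\Pi)$, so that after mode-1 matricization
\[
\cM_1\bigl(\Q\times_2 B^\top\times_3\cdots\times_m B^\top\bigr)=(\Theta\Pi)\,\cM_1(\calP)\,\bigl((B^\top\Theta\Pi)^{\otimes(m-1)}\bigr)^{\!\top}.
\]
This single identity drives all three bullets: its column space is contained in $\mathcal{S}$, and its rank factors through the three matrices on the right.

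For the first bullet, I would show the displayed matrix $G^{(0)}$ (with $B=\widetilde\Xi$) has rank exactly $K$, so its top-$K$ left singular vectors span all of $\mathcal{S}$ and therefore coincide with $\mathrm{col}(\Xi)$, giving $\Xi^{(1)}=\Xi O$. Since $\Theta\Pi$ has full column rank $K$ and $\cM_1(\calP)$ has full row rank $K$ (the latter because $G=\cM_1(\calP\times_1 D\times_2\cdots\times_m D)=D\,\cM_1(\calP)\,(D^{\otimes(m-1)})^\top$ has rank $K$ and $D$ is invertible), the only thing to verify is that $\widetilde\Xi^\top\Theta\Pi$ is nonsingular, for then its Kronecker power is nonsingular and $\mathrm{rank}(G^{(0)})=K$. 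This is exactly where the initialization hypothesis enters: I would translate $\|\widetilde\Xi\widetilde\Xi'-\Xi\Xi'\|<1$ into the statement that the largest principal angle between $\mathrm{col}(\widetilde\Xi)$ and $\mathcal{S}$ is strictly below $\pi/2$, so that the singular values of $\widetilde\Xi^\top\Xi$ (the cosines of the principal angles) are all positive; combined with $\Theta\Pi=\|\theta\|\,\Xi\,U^\top D$ from Lemma~\ref{lem:oracle-HOSVD} (with $U,D$ invertible), this yields invertibility of $\widetilde\Xi^\top\Theta\Pi=\|\theta\|\,\widetilde\Xi^\top\Xi\,U^\top D$.

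For the second and third bullets I would run a fixed-point argument once the iterate already spans $\mathcal{S}$. If $B$ has orthonormal columns with $\mathrm{col}(B)=\mathcal{S}$, then $\Theta\Pi=BR$ with $R=B^\top\Theta\Pi$ invertible, and the identity above becomes $G=B\,\Phi$ with $\Phi=R\,\cM_1(\calP)\,(R^{\otimes(m-1)})^\top$; since $B$ is orthonormal, the top-$K$ left singular vectors of $B\Phi$ are $B$ times those of $\Phi$, so the next iterate is $BV$ where $V$ collects the left singular vectors of $\Phi$. Specializing to $B=\Xi$, substituting $R=\|\theta\|\,U^\top D$ shows $\Phi=\|\theta\|^m\,U^\top G\,U^{\otimes(m-1)}$, whose SVD (using $G=U\,\mathrm{diag}(\kappa_1,\ldots,\kappa_K)\,W_G^\top$) has left singular vectors equal to the identity; hence $\Xi$ is a genuine fixed point, proving the third bullet. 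The same computation with $B=\Xi^{(1)}=\Xi O$ gives $V=O^\top$, so $\Xi^{(2)}=\Xi^{(1)}O^\top=\Xi$ and the iteration is stationary thereafter, which is the second bullet (understood, as throughout, up to the orthogonal rotation intrinsic to the factor matrix).

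The main obstacle I anticipate is the rank-$K$ step for $G^{(0)}$: everything hinges on converting the operator-norm closeness $\|\widetilde\Xi\widetilde\Xi'-\Xi\Xi'\|<1$ into genuine invertibility of $\widetilde\Xi^\top\Theta\Pi$, which requires the principal-angle identity $\|P_1-P_2\|=\sin\theta_{\max}$ for equidimensional subspaces and care that $\widetilde\Xi$ indeed has full rank $K$. A secondary but real subtlety is the non-uniqueness of the SVD: the ``unchanged'' claims are clean only up to an orthogonal factor, and I would state them as equalities of column spaces, reserving the sharp fixed-point equality for the HOSVD initialization, where the left singular vectors of $\Phi$ are forced to be the identity by the ordering of the $\kappa_k$.
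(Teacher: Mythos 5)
Your proposal is correct and follows essentially the same route as the paper's proof: the matricization identity $\cM_1\bigl(\Q\times_2 B^\top\times_3\cdots\times_m B^\top\bigr)=(\Theta\Pi)\,\cM_1(\calP)\,\bigl((B^\top\Theta\Pi)^{\otimes(m-1)}\bigr)^{\top}$, the reduction of the first bullet to invertibility of $\widetilde\Xi^{\top}\Xi$ (hence of $\widetilde\Xi^{\top}\Theta\Pi$) via $\|\widetilde\Xi\widetilde\Xi^{\top}-\Xi\Xi^{\top}\|<1$, and a fixed-point computation once the iterate spans $\mathrm{col}(\Xi)$. If anything, you are more explicit than the paper on the principal-angle step and on the SVD non-uniqueness caveat when the $\kappa_k$ are not distinct.
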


The above results show that the output of HOSVD and HOOI have similar forms: By Lemma~\ref{lem:oracle-HOSVD}, we can write $\Xi=\Theta\Pi V$, where $V=\|\theta\|^{-1}D^{-1}[u_1,\ldots,u_K]\in\mathbb{R}^{K\times K}$. By Lemma~\ref{lem:oracle-HOOI}, the output of HOOI is only different from that of HOSVD by an orthogonal matrix $O$; it leads to $\Xi=\Theta\Pi \cdot VO$. 
Therefore, we can write the factor matrix from both tensor decomposition methods via a universal form:
\beq \label{form-of-Xi}
\Xi = \Theta \Pi B, \qquad \mbox{for some $K\times K$ matrix }B. 
\eeq  

Last, we use \eqref{form-of-Xi} to obtain a community detection method. 
Let $x_1,\ldots,x_n\in\mathbb{R}^K$ be the rows of $\Xi$, and let $b_1,\ldots,b_K\in\mathbb{R}^K$ be the rows of $B$. It is not hard to see that
$
x_i = \theta_i \cdot b_k, \mbox{for all $i\in V_k$}, 1\leq k\leq K. 
$
In the case of no degree heterogeneity, $\theta_i$'s are equal. Hence, $\{x_i\}_{i=1}^n$ take only $K$ distinct values, corresponding to $K$ communities. Therefore, we can apply $k$-means clustering to rows of $\Xi$ and exactly recover the community membership. Unfortunately, in the presence of degree heterogeneity, the rows of $\Xi$ no longer have such a nice property. We now adopt the idea of SCORE in \eqref{SCOREq}. Given any homogeneous function $h:\mathbb{R}^K\to\mathbb{R}$, define  
\beq \label{def-R}
r_i = \frac{1}{h(x_i)} x_i , \qquad 1\leq i\leq n. 
\eeq
By property of a homogeneous function, $h(\theta_i b_k)=\theta_i \cdot h(b_k)$. It follows that
\[
r_i = \frac{1}{h(\theta_i b_k)}\cdot \theta_i b_k =  \frac{1}{\cancel{\theta_i}\cdot h( b_k)}\cdot \cancel{\theta_i} \cdot b_k = \frac{1}{h(b_k)}\cdot b_k,  \qquad \mbox{for $i\in V_k$}.
\]
We see that, even though $\theta_i$'s are unknown, the above operation can automatically remove the effects of $\theta_i$'s without estimating them. Now, $\{r_i\}_{i=1}^n$ only take $K$ distinct values, and so we can apply $k$-means clustering to $r_i$'s to exactly recover the community memberships. This idea was introduced in \cite{jin2015fast}. In the appendix of that paper, the author suggested two choices of $h$: The first is $h(x)=x(1)$, i.e., each $x_i$ is divided by its first coordinate. It can be shown that, under mild conditions, the first coordinate of $x_i$ is always positive, so that $r_i$ is well-defined. This operation is commonly called the SCORE normalization. The second is $h(x)=\|x\|_q$, i.e., each $x_i$ is normalized by its $\ell^q$-norm; the special case of $q=2$ coincides with a heuristic normalization step in spectral clustering \cite{NJ}. In our problem, these is no essential difference among choices of $h$. We use the SCORE normalization. For this choice of $h$, the first coordinate of $r_i$ is always equal to $1$; for convenience, we re-define $r_i$ by removing its first coordinate. Write $R=[r_1,r_2,\ldots,r_n]\in\mathbb{R}^{n\times (K-1)}$. It follows that
\beq \label{def-R-2}
R(i, k)= \xi_{k+1}(i)/\xi_1(i), \qquad 1\leq i\leq n,\; 1\leq k\leq K-1. 
\eeq
The above gives rise to an approach to recovering $\Pi$ from $\Q$, which we call {\it Oracle Tensor-SCORE}: 
\begin{itemize} \itemsep -2pt
\item[(1)] Apply either HOSVD or HOOI to $\Q$. Let $\Xi$ be the resulting factor matrix.
\item[(2)] Obtain the matrix $R$ as in \eqref{def-R-2}. Apply $k$-means clustering to rows of $R$. 
\end{itemize}
It is easy to see that Oracle Tensor-SCORE exactly recovers the communities.

\subsection{Tensor-SCORE} \label{subsec:real}
We now extend Oracle Tensor-SCORE to the real case where $\A$, instead of $\Q$, is observed. 
Similar to the oracle case, the first step is to conduct tensor decomposition on $\A$ to obtain a factor matrix $\hat{\Xi}\in\mathbb{R}^{n\times K}$. In the oracle case, this is done by HOSVD or HOOI; in fact, the two algorithms give exactly the same $\Xi$, provided that HOOI is initialized by HOSVD. Now, in the real case, our hope is that the column space of $\hat{\Xi}$ is close to the column space of $\Xi$. We measure the loss by
$
d(\hat{\Xi}, \Xi)=\|\hat{\Xi}\hat{\Xi}' - \Xi\Xi'\|. 
$  
In Figure~\ref{fig:illustrate-method} (left panel), we plot the evolution of $d(\hat{\Xi}, \Xi)$ in a simulation example, where we start from HOSVD and iteratively conduct power iteration (to be described below). It suggests that power iteration significantly improves the accuracy. This phenomenon can be justified in theory: In Section~\ref{sec:spectral_HOOI}, we show that power iteration yields a satisfactory rate of convergence on $d(\hat{\Xi}, \Xi)$,
which cannot be attained by HOSVD. Note that this has been very different from the oracle case, where power iteration starting from HOSVD does not change the solution  at all (Lemma~\ref{lem:oracle-HOOI}). 

The next question is how to conduct power iteration properly. In Section~\ref{sec:background}, we have reviewed the standard HOOI. Given $\hat{\Xi}^{old}$, it obtains $\hat{\Xi}^{new}$ by extracting the first $K$ left singular vectors of ${\cal M}_1(\A\times_2(\hat{\Xi}^{old})^{\top}\times_3\cdots\times_m(\hat{\Xi}^{old})^{\top})\in\mathbb{R}^{n\times K^{m-1}}$. In the theoretical analysis, we find out that this vanilla HOOI does not give the desired rate, due to that the maximum row-wise $\ell^2$-norm of $\hat{\Xi}$ is not well controlled during the iteration. We propose a regularized HOOI (reg-HOOI), where we ``regulate" the maximum row-wise $\ell^2$-norm of $\hat{\Xi}$ at each iteration. In detail, for any $\delta\in (0,1]$, define
$
\OO_{\delta}^{n\times K}:=\big\{U\in \RR^{n\times K}: U^{\top}U=I_K,\ \  \max_{j\in[n]}\|e_j^{\top}U\|\leq \delta\big\}.
$
For any matrix $\Xi\in\mathbb{R}^{n\times K}$, its ``projection" to $\OO_{\delta}^{n\times K}$ is defined by (\cite[Remark~6.2]{keshavan2010matrix})
\beq \label{projection-OO}
\calP_{\OO^{n\times K}_{\delta}}(\Xi)=\mbox{left singular vectors of $\widetilde{\Xi}$}, \quad \textrm{where}\;\; \widetilde{\Xi}(i,:)=\Xi(i,:) \frac{\min\{\delta, \|\Xi(i,:)\|\}}{\|\Xi(i,:)\|} . 
\eeq
Given $\hat{\Xi}^{old}$, we first compute $\hat{\Xi}^{new}_*$ using the vanilla HOOI, then we project $\hat{\Xi}^{new}_*$ to $\OO_{\delta}^{n\times K}$ to obtain $\hat{\Xi}^{new}$. This gives the reg-HOOI algorithm. 

We then introduce an empirical counterpart of the matrix $R$ in \eqref{def-R-2}. Let $\hat{\Xi}$ be the factor matrix from reg-HOOI and let $T>0$ be a threshold. Define $\hat{R}\in\mathbb{R}^{n\times (K-1)}$ by
\beq \label{def-Rhat}
\hat{R}(i, :)= \hat{R}^*(i,:)\frac{\min\{T, \|\hat{R}^*(i,:)\|\}}{\|\hat{R}^*(i,:)\|},  \qquad \mbox{where}\quad \hat{R}^*(i,k)=\frac{\hat{\xi}_{k+1}(i)}{\hat{\xi}_1(i)}, 1\leq k\leq K-1.  
\eeq
We are ready to formally introduce Tensor-SCORE:

\bigskip
\noindent
{\it Tensor-SCORE} (Input: adjacency tensor $\A$, number of communities $K$, an initial factor matrix $\hat{\Xi}^{init}$, and tuning parameters $\delta\in (0,1)$ and $T>0$. Output: community labels):
\begin{enumerate}
\item  Reg-HOOI: Initialize $\hat{\Xi}^{(0)}=\hat{\Xi}^{init}$.  
For $t=1,2,\cdots$:
\begin{itemize}
\item (Regularization). Obtain $\hat{\Xi}^{(t-1)}_*=\calP_{\OO^{n\times K}_{\delta}}(\hat\Xi^{(t-1)})$ using \eqref{projection-OO}. 
\item (HOOI). Conduct SVD on $\calM_1\bigl(\A\times_2(\hat{\Xi}_*^{(t-1)})^{\top}\times_3\cdots\times_m(\hat{\Xi}_*^{(t-1)})^{\top}\bigr)$. Let $\hat{\Xi}^{(t)}$ be the matrix consisting of the first $K$ left singular vectors. 
\end{itemize}
Let $\hat{\Xi}=[\hat{\xi}_1,\ldots,\hat{\xi}_K]$ denote the output. 
\item SCORE:   Obtain the matrix $\hat{R}\in\mathbb{R}^{n\times (K-1)}$ as in \eqref{def-Rhat} 
\item Clustering: Apply $k$-means clustering on rows of $\hat{R}$ to get community labels. 
\end{enumerate}
\bigskip

The method has two tuning parameters $\delta$ and $T$. Let $L_i=\sum_{1\leq i_2,\ldots,i_m\leq n}{\cal A}(i, i_2,\ldots,i_m)$ be the degree of node $i$, for $1\leq i\leq n$. We set $(\delta, T)$ as follows:
\beq \label{TuningParem}
\delta = 2\sqrt{K}\cdot \bigl(\max_{1\leq i\leq n}L_i\bigr)/\bigl(\sum_{i=1}^n L_i^2\bigr)^{1/2}, \qquad T = \sqrt{\log(n)}. 
\eeq 
\begin{figure}[t!] 
\centering
\includegraphics[width=.32\textwidth]{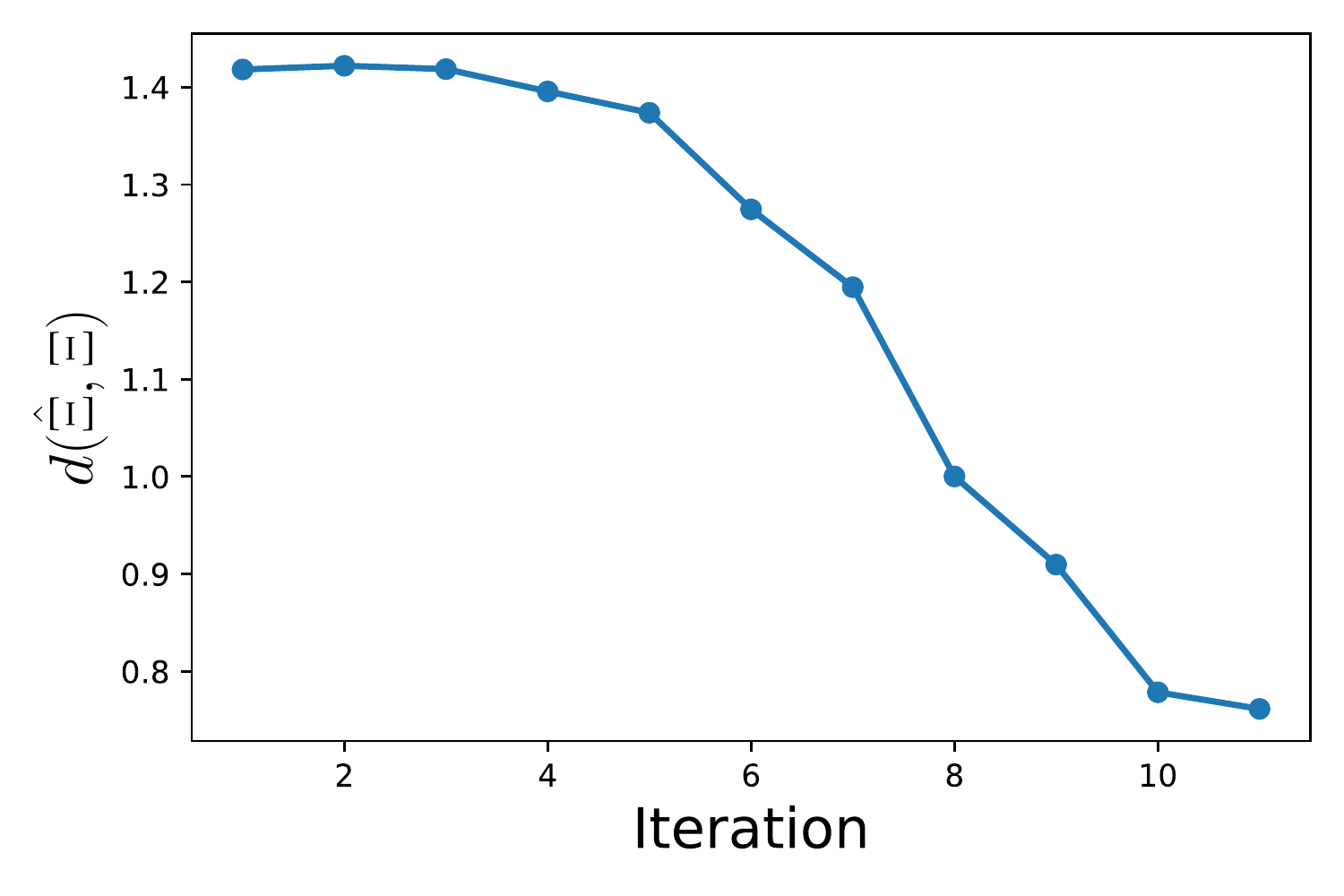}%
\includegraphics[width=.331\textwidth]{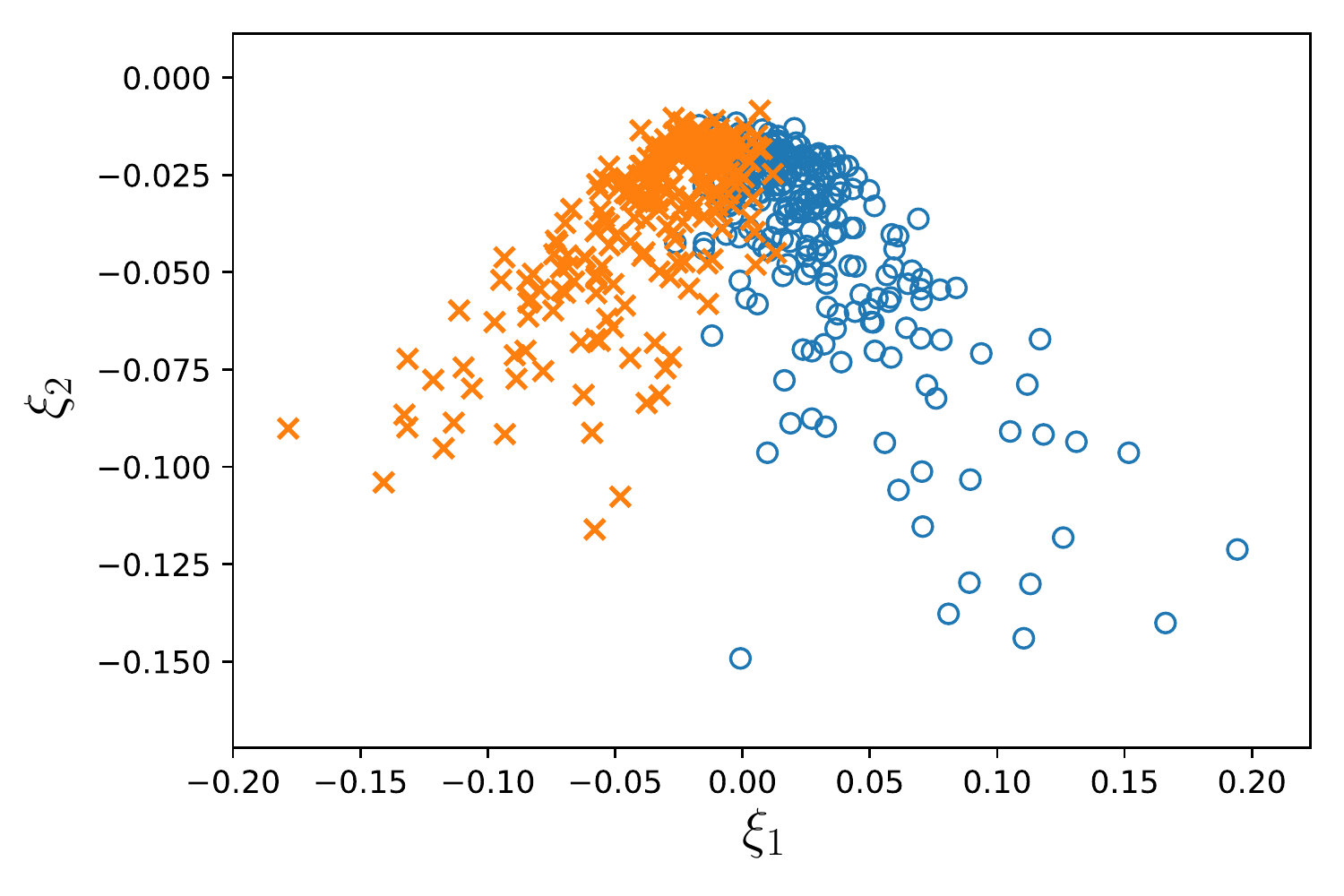}%
\includegraphics[width=.331\textwidth]{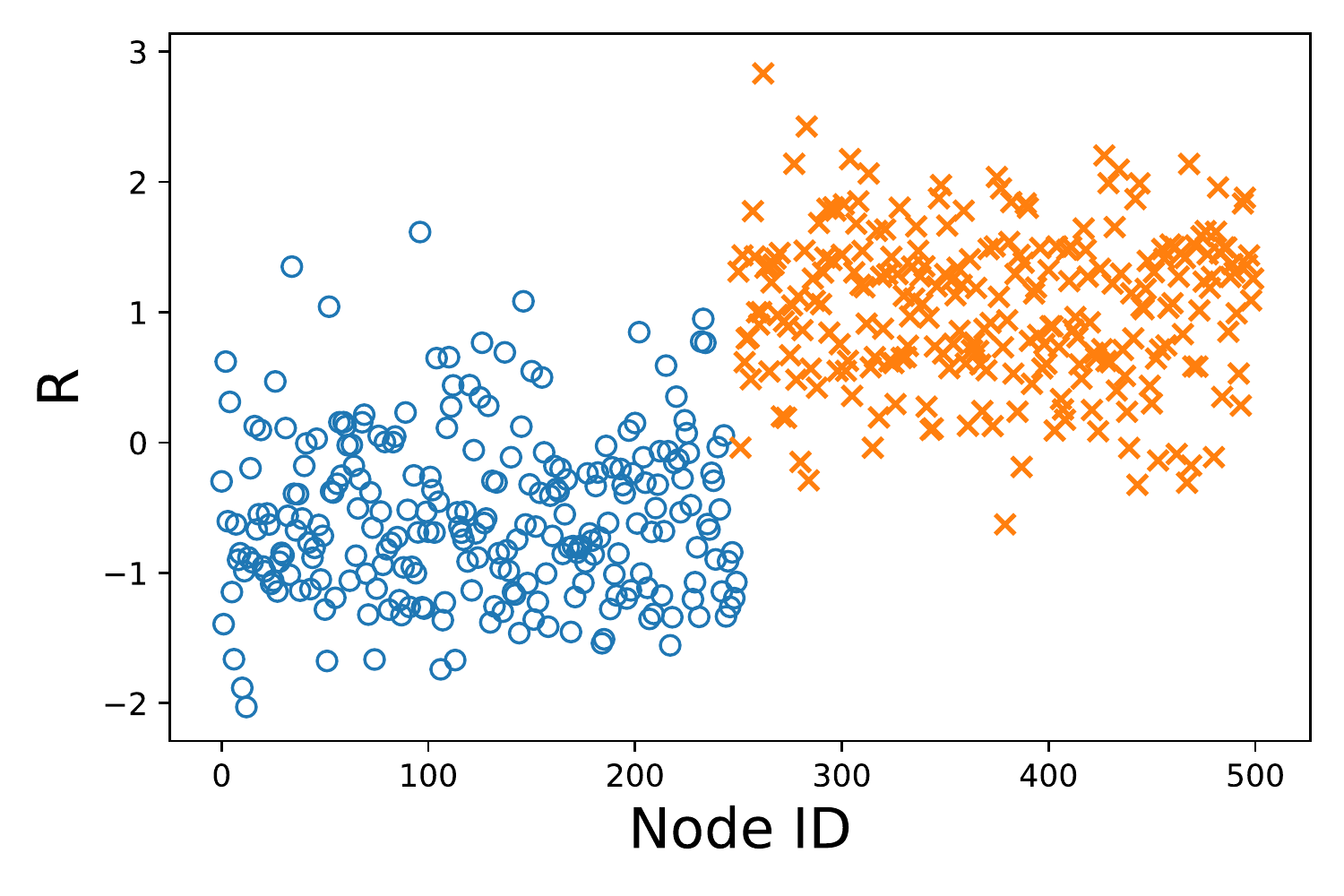}%
\vspace{-1em}
\caption{A simulation example. $(m, K, n)=(3, 2, 500)$,   
$\cP(k,\ell,s)=0.3+0.3\cdot 1\{k=\ell=s\}$, and $\theta_i$'s are $iid$ drawn from $1/[7\times \mathrm{Unif}(0,1) + 1]/\sqrt{5}$. Left: Evolution of $d(\hat{\Xi},\Xi)$ in reg-HOOI, where the initialization is HOSVD. Middle: Plot of rows of $\hat{\Xi}\in\mathbb{R}^{n\times 2}$. Right: Plot of rows of $\hat{R}\in\mathbb{R}$ versus node index. The colors correspond to true communities.}
\label{fig:illustrate-method}
\end{figure}

The method also requires an initial estimate $\hat{\Xi}^{init}$ of the factor matrix. By our theory in Section~\ref{sec:theory}, this initial estimate only needs to satisfy a mild condition, that is $d(\hat{\Xi}^{init}, \Xi)\leq \epsilon_0$, for a constant $\epsilon_0\in (0,1/4)$.  
Below, we propose two initializations: 
\begin{itemize}
\item Initialization 1: Diagonal-removed HOSVD. Obtain $G=\M_1(\A)[\M_1(A)]'\in\mathbb{R}^{n\times n}$, where $\M_1(\A)$ is the mode-1 matricization of ${\cal A}$. Let $\hat{\Xi}^{init}$ be the matrix consisting of the first $K$ left singular vectors of $[G-\mathrm{diag}(G)]$.  
\item Initialization 2: Randomized graph projection. Given $\epsilon\in (0,1)$, generate a random vector $\eta\in\mathbb{R}^n$ by $\eta_i\overset{iid}{\sim}\mathrm{Unif}(1-\epsilon,1+\epsilon)$. Let $\hat{\Xi}^{init}$ be the matrix consisting of the first $K$ eigenvectors of $(\A\times_3 \eta^{\top}\times_4\cdots\times_m\eta^{\top})\in\mathbb{R}^{n\times n}$. 
\end{itemize} 
Initialization 1 is a modification of HOSVD. In fact, if we conduct SVD on $G$ instead of $[G-\mathrm{diag}(G)]$, it reduces to the standard HOSVD. The purpose of removing the diagonal of $G$ is to improve the performance on sparse networks. Initialization 2 can be viewed as a generalization of the graph projection to be discussed in Section~\ref{subsec:example}. In fact, when $\epsilon=0$ (so that $\eta_n=1_n$), it reduces to the standard graph projection. The purpose of randomly perturbing $1_n$ in a local neighborhood is to avoid potential information loss. We analyze both initializations in Section~\ref{subsec:theory-initialization}.  

Figure~\ref{fig:illustrate-method} illustrates the use of Tensor-SCORE. The left panel shows that reg-HOOI reliably estimates the principal subspace. In the middle panel, we plot the rows of $\hat{\Xi}$. Due to severe degree heterogeneity, a direct application of $k$-means clustering yields unsatisfactory results. In the right panel, we plot the rows of $\hat{R}$ (since $K=2$, $\hat{R}\in\mathbb{R}^n$ is a vector; for better visualization, we add an x-axis which is node index). The two communities are well separated in this plot, suggesting that the SCORE normalization successfully removes the effects of nuisance degree parameters.

{\bf Remark}  {(\it Complexity)}. The main computational cost of Tensor-SCORE comes from the reg-HOOI algorithm. The per-iteration complexity of reg-HOOI is $O(mKn^m+n^2K^{m-1})$, where the two terms are contributed by computing the tensor multiplication and the SVD. In Section~\ref{sec:spectral_HOOI}, we show that reg-HOOI has linear convergence, so that the number of iterations needed is only a logarithmic function of the required accuracy. The complexity of the two initializations is $O(n^{m+1})$ and $O(n^{m+1}+n^3)$, respectively. As a result, when $(m,K)$ are bounded, Tensor-SCORE is a polynomial-time algorithm.

{\bf Remark}  {(\it The case of unknown $K$)}.
Our method assumes the number of communities is given.  
When $K$ is unknown, we can estimate it from data. Fixing an integer $r$ that is an upper bound of $K$, we run reg-HOOI with $K=r$ to obtain $\widetilde{\Xi}\in\mathbb{R}^{n\times r}$. Let $\widetilde{\sigma}_1>\widetilde{\sigma}_2>\cdots>\widetilde{\sigma}_r$ be the first $r$ singular values of $\cM_1(\A\times_3\widetilde{\Xi}^{\top}\times_4\cdots\times_m\widetilde{\Xi}^{\top})$. Let
\beq \label{estimateK}
\hat{K} = \max\bigl\{2\leq k\leq r: \widetilde{\sigma}_k/\widetilde{\sigma}_{k-1} \leq \log(\log(n)) \bigr\}. 
\eeq
In the rare case that $\widetilde{\sigma}_k/\widetilde{\sigma}_{k-1} > \log(\log(n))$ for all $2\leq k\leq r$, we simply let $\hat{K}=r$. 

\subsection{Comparison with projection-to-graph} \label{subsec:example}
We compare our tensor decomposition approach with the approach of projecting a hypergraph to a graph. Given the adjacency tensor $\A$ of an $m$-uniform hypergraph $H=(V,E)$, define a weighted graph $G=(V, \widetilde{E})$, where its adjacency matrix $\widetilde{A}$ is such that
\[
\widetilde{A}(i,j) = \sum_{\substack{1\leq i_3,i_4,\ldots,i_m\leq n\\(i,j,i_3,\ldots,i_m) \text{ are distinct}}} \A(i,j,i_3,\ldots,i_m), \qquad 1\leq i\neq j\leq n. 
\]   
In this projected graph, the weighted edge between two nodes equals to the total count of hyperedges  involving both nodes. 

In hDCBM, $\Q$ is the ``signal" part of $\cA$ (see Section~\ref{subsec:hDCBM}). We can similarly define the ``signal" part of $\widetilde{A}$ by $\Omega = \Q\times_31_n^{\top}\times_4\cdots\times_m 1_n^{\top}\in\mathbb{R}^{n\times n}$. Let $s_{\max}(\cdot)$ and $s_{\min}(\cdot)$ denote the maximum and minimum singular value of a matrix, respectively. Introduce two quantities
\[
IF_h = \frac{s_{\min}(\cM_1(\Q))}{s_{\max}(\cM_1(\Q))}, \qquad IF_g = \frac{s_{\min}(\Omega)}{s_{\max}(\Omega)}. 
\]
They are both properly scaled. When $IF=0$, the minimum singular value of the ``signal" matrix is zero, indicating unwanted ``information loss." In this case, some leading empirical eigenvectors are extremely noisy, and spectral methods will yield unsatisfactory performance. Therefore, the larger $IF$, the better.

Our observation is that, projection-to-graph has the risk of dragging $IF_g$ to zero. We illustrate it with an example.

{\bf Example} {\it (3-uniform hDCBM with 2 communities)}:  Consider an hDCBM with $m=3$ and $K=2$. Fixing $x,y,z\in [0,1]$, the core tensor $\calP$ is such that $\calP(1,1,1) = \calP(2,2,2)=x$, $\calP(1,2, 2)=y$ and $\calP(1,1,2)=z$, where other entries follow by symmetry. Let $V_1$ and $V_2$ be the two communities. The degree parameters satisfy that $\sum_{i\in V_1}\theta_i^2=\sum_{i\in V_2}\theta_i^2$.

In Figure~\ref{fig:example}, we plot $IF_h$ and $IF_g$ for a variety of $(x,y,z)$. It is seen that $IF_h$ never hits zero, but $IF_g$ can be zero or get close to zero. By elementary linear algebra, $IF_g$ becomes zero if and only if
$
(x+y)(x+z) - (y+z)^2 = 0.  
$ 
There are infinitely many such solutions. Hence, the projection-to-graph approach has the risk of losing information. The success of methods based on projection-to-graph typically requires additional conditions \cite{zhou2007learning,ghoshdastidar2017consistency,chien2018community,kim2018stochastic}, such as no degree heterogeneity and symmetry conditions on $\calP$ (e.g., $y=z$ in this example). However, such conditions are hard to check in practice. It is thus valuable to develop community detection methods that operate on the hypergraph directly. 

\begin{figure}[t!]
\centering
\includegraphics[width=.32\textwidth]{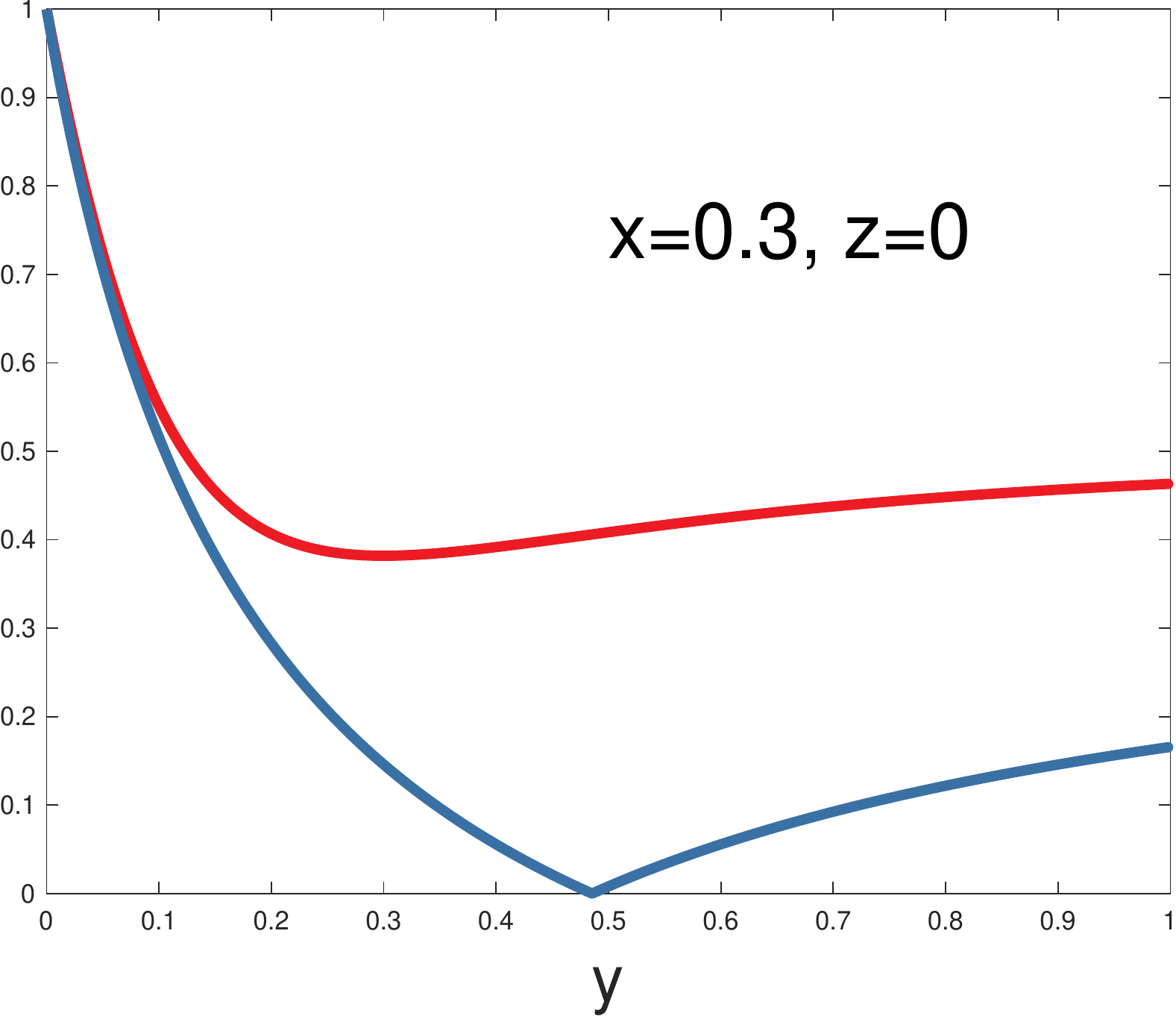}
\includegraphics[width=.32\textwidth]{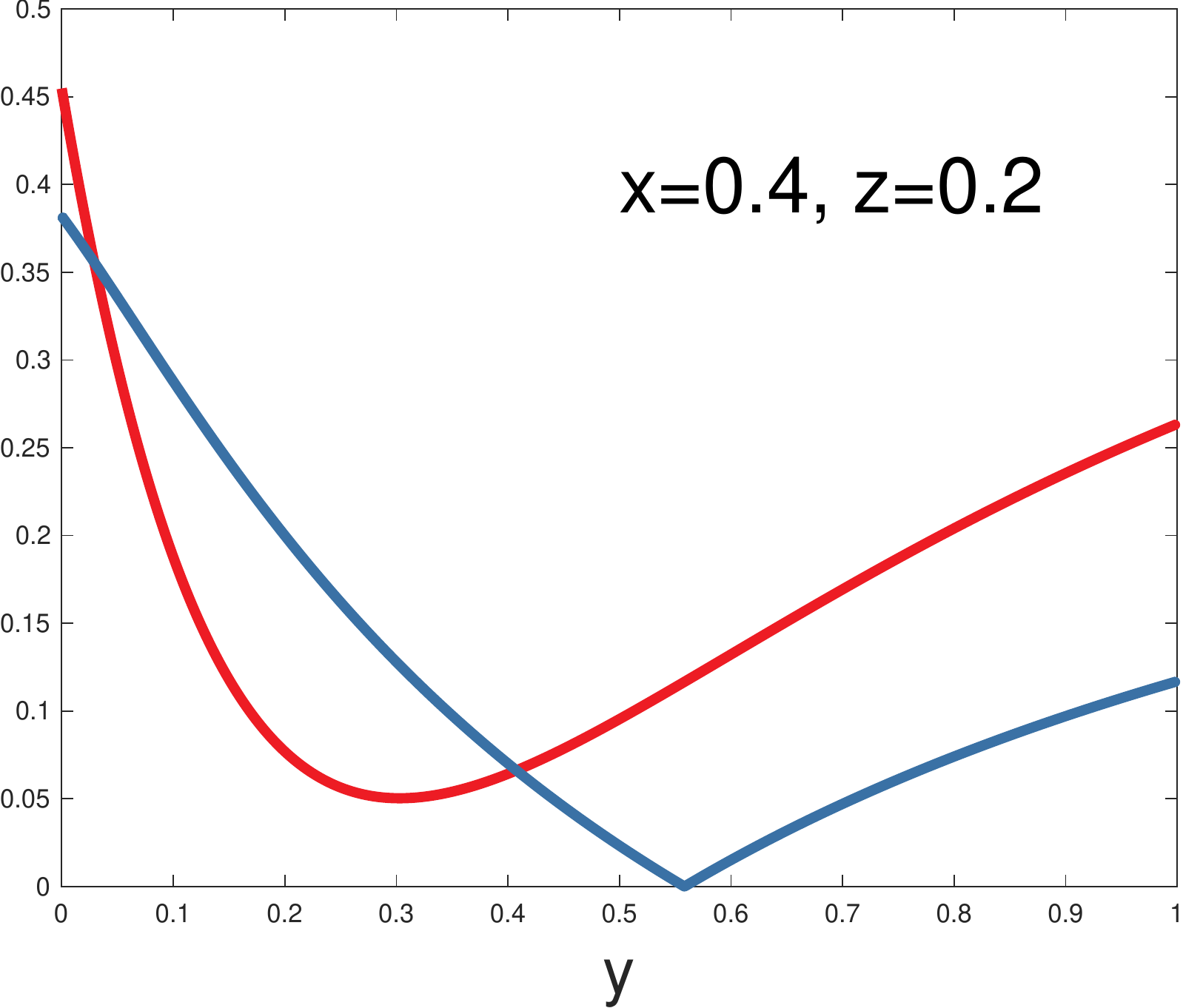}
\includegraphics[width=.32\textwidth]{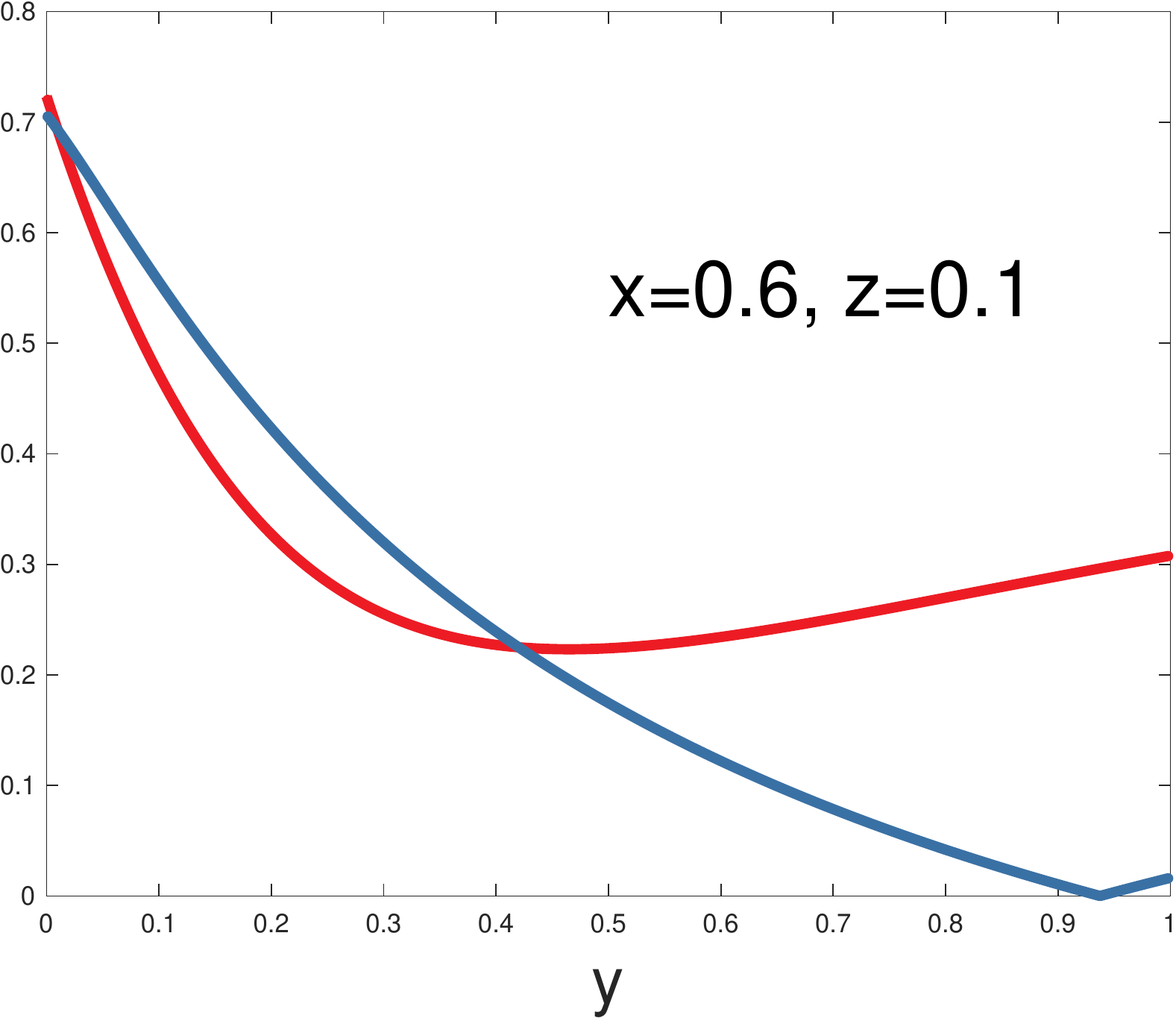}
\caption{The information loss caused by projecting a hypergraph to a graph (setting: the 3-uniform hDCBM with 2 communities, parametrized by $(x,y,z)$). In all three plots, we fix $(x,z)$, let $y$ range in $[0,1]$, and plot $IF_h$ (red line) and $IF_g$ (blue line). When the blue line hits zero, it indicates unwanted information loss } \label{fig:example}
\end{figure}

{\bf Remark} {(\it The induced graph by HOSVD)}. HOSVD is equivalent to conducting PCA on the matrix $A^*=\cM_1(\A)[\cM_1(\A)']$. Viewing $A^*$ as an adjacency matrix, it gives rise to an induced (weighted) graph $G^*$. It is interesting to compare $G^*$ with the graph $G$ above. By definition, 
\[
A^*(i,j) = \sum_{\substack{1\leq i_2,\ldots,i_m\leq n, \\(i_2,\ldots,i_m)\text{ are distinct}\\ \{i,j\}\cap\{i_2,\ldots,i_m\}=\emptyset}} \A(i,i_2,\ldots,i_m)\A(j,i_2,\ldots,i_m). 
\]
We define a ``hyper-triangle" as $i$-$S$-$j$, where $S$ is a subset containing $(m-1)$ indices such that $\{i\}\cup S$ and $\{j\}\cup S$ are both hyper-edges (for $m=2$, it reduces to the usual definition of a triangle in a graph). Then, in $G^*$, the weighted edge between two nodes equals to the count of ``hyper-triangles" that link two nodes. This graph is different from $G$.

\subsection{Extension to non-uniform hypergraphs} \label{subsec:non-uniform}
So far, we have been focusing on uniform hypergraphs. We now discuss two extensions of Tensor-SCORE to non-uniform hypergraphs. Let $V=\{1,2,\ldots,n\}$ be the set of nodes and suppose $V=V_1\cup\ldots\cup V_K$ is a partition to true communities. Let $H=(V,E)$ be a non-uniform hypergraph, where the size of hyperedges ranges from $2$ to $m_0$.

In the first extension, we represent $H$ by a collection of uniform hypergraphs $H^{(m)}=(V, E^{(m)})$, where $E^{(m)}$ is the set of order-$m$ hyperedges, $2\leq m\leq m_0$. We apply the first two steps of Tensor-SCORE to obtain $\hat{R}^{(m)}\in\mathbb{R}^{n\times (K-1)}$ for each uniform hypergraph. Then, we apply $k$-means clustering to rows of the matrix $\hat{R} = \bigl[\hat{R}^{(2)}, \hat{R}^{(3)},\ldots, \hat{R}^{(m_0)}\bigr]$. 

The rationale behind this approach is to impose an hDCBM for each $H^{(m)}$, where these models share the same membership matrix $\Pi$ but have their individual $\theta^{(m)}\in\mathbb{R}^n_+$ and $\cP^{(m)}\in S^m(\mathbb{R}^K)$. Under this model, we can apply $k$-means to each $\hat{R}^{(m)}$, but stacking them together combines information in hyperedges of different sizes. 
This idea has the same spirit as the extension of SCORE to directed graphs \cite{ji2017coauthorship}. 

In the second extension, we introduce a set of dummy nodes indexed by $-3$ to $-m_0$. We create an $m_0$-uniform hypergraph with nodes $V_*=\{-m_0,\ldots,-3,1,2,\ldots,n\}$. For any hyperedge $\{i_1,\ldots,i_m\}$ with $m<m_0$, we fill in the dummy node to make it a size-$m_0$ hyperedge $\{i_1,\ldots,i_m, -(m+1),\ldots,-m_0\}$. Then, we apply the first two steps of Tensor-SCORE to obtain $\hat{R}^{(n+m_0-2)\times K}$. In the last step, we throw away the rows of $\hat{R}$ associated with the dummy nodes and apply $k$-means clustering to the remaining rows.

The rationale behind this approach is to start from an hDCBM for order-$m_0$ hyperedges and induce a model for lower-order hyperedges. Consider an hDCBM for $m_0$-uniform hypergraphs, with the membership matrix $\Pi$ and parameters $\theta\in\mathbb{R}^n_+$ and $\cP\in S^{m_0}(\mathbb{R}^K)$. We now add the dummy nodes $-(m+1),\ldots,-m_0$ with degree parameters $\omega_{-(m+1)},\ldots,\omega_{-m_0}$ and let their membership vectors be $\pi_{-(m+1)}=\ldots=\pi_{-m_0}=(\frac{1}{K},\ldots,\frac{1}{K})$. 
Then, for a lower-order hyperedge $\{i_1,\ldots,i_m\}$ such that $i_j\in V_{k_j}$, by adding the dummy node to make it an order-$m_0$ hyper-edge, we obtain an induced model 
\[
\pr\bigl(\A(i_1,\ldots,i_m,-(m+1),\ldots,-m_0)=1\bigr) = (\omega_{-(m+1)}\cdots\omega_{-m_0})^{-1} \cdot \cP_*^{(m)}(k_1,\ldots, k_m)\cdot \theta_{i_1}\cdots\theta_{i_m}, 
\]    
where $\cP_*^{(m)}=\frac{1}{K^{m_0-m}}[\cP\times_{(m+1)}1_K\times_{(m+2)}\cdots\times_{m_0}1_K]\in S^m(\mathbb{R}^K)$ is a ``projection" of $\cP\in S^{m_0}(\mathbb{R}^K)$. In this induced model, the $\omega$'s serve as an inflation factor for lower-order edges: If $\omega_{-(m+1)}\cdots\omega_{-m_0}<1$, then lower-order edges have higher probabilities.

\section{Statistical properties} \label{sec:theory}
Let $H=(V,E)$ be an $m$-uniform hypergraph that satisfies hDCBM model in Section~\ref{subsec:hDCBM}. The hDCBM is not identifiable: For any positive diagonal matrix $\tilde{D}\in\mathbb{R}^{K\times K}$, if we change $(\theta, \cP)$ to $(\tilde{D}^{-1}\theta, \cP\times_1\tilde{D}\times_2\cdots\times_m\tilde{D})$, the distribution of $\A$ remains unchanged. To ensure identifiability, we assume $\cP(k,k,k)=1$, for $1\leq k\leq K$. 
As a result, the network sparsity is captured by the degree parameters $\theta_i$.

Let $V=V_1\cup\ldots\cup V_K$ be the partition of nodes into communities. We assume
\beq \label{cond-balance}
\max_k|V_k|\leq C\min_k|V_k|, \qquad \max_{1\leq k\leq K} \sum_{i\in V_k}\theta_i^2\leq C\min_{1\leq k\leq K}\sum_{i\in V_k}\theta_i^2. 
\eeq
Introduce a diagonal matrix $D=\mathrm{diag}(d_1,d_2,\ldots,d_K)$, where $d_k=\|\theta\|^{-1}(\sum_{i\in V_k}\theta_i^2)^{1/2}$, for $1\leq k\leq K$. Define
$
G=\cM_1\bigl(\calP\times_1D\times_2\cdots\times_m D\bigr)\in\mathbb{R}^{K\times K^{m-1}}.
$
Let $\kappa_1,\ldots,\kappa_K$ be the singular values of $G$, arranged in the descending order. Let $u_1,\ldots,u_K\in\mathbb{R}^K$ be the associated left singular vectors. 
As $n\to\infty$, for a positive sequence $\gamma_n$, we assume 
\beq \label{cond-eigenvalue}
\|\cP\|_{\max}\leq C, \qquad \min\{\kappa_1-\kappa_2, \; \kappa_K\}\geq \gamma_n. 
\eeq
We also assume
\beq \label{cond-eigenvector}
\mbox{$u_1$ is a positive vector satisfying }\max_{1\leq k\leq K}u_1(k)\leq C\min_{1\leq k\leq K}u_1(k). 
\eeq
Let $\theta_{\max}$ and $\theta_{\min}$ be the maximum and minimum of $\theta_i$'s. Define
\[
err_n =\frac{\sqrt{K^{m-2}\theta_{\max}^{m-1}\|\theta\|_1 \log(n)}+\bigl(\sqrt{K}\theta_{\max}/\|\theta\|\bigr)^{m-1}\log(n)}{\sqrt{n}\theta_{\min}\|\theta\|^{m-1}}+\frac{K\theta^2_{\max}}{\sqrt{n}\theta_{\min}\|\theta\|}. 
\]

We assume
\beq \label{cond-sparsity}
K\gamma_n^{-2}err^2_n\log(n) \to 0. 
\eeq

These conditions are mild. Condition~\eqref{cond-balance} requires that that the degree distributions are balanced across communities. 
Condition~\eqref{cond-eigenvalue} is mild, since the matrix $G$ is properly scaled.  Condition~\eqref{cond-eigenvector} is satisfied when all entries of ${\cal P}$ are lower bounded by a constant and $\max_k d_k\asymp \min_k d_k$. Regarding Condition~\eqref{cond-sparsity}, when $K=O(1)$, $\theta_{\max}\asymp\theta_{\min}$ and $\gamma_n\asymp 1$, it becomes $n^{m-1}\bar{\theta}^m\gg \log^2(n)$, where $\bar{\theta}$ is the average of $\theta_i$. Since $n^{m-1}\bar{\theta}^m$ is the order of average node degree, our theory indeed covers a wide range of sparsity.  

Tensor-SCORE plugs in an initial estimate $\hat{\Xi}^{init}$. We impose a condition on $\hat{\Xi}^{init}$. Recall that $\Q$ denotes the ``signal" tensor in hDCBM, and $\Xi\in\mathbb{R}^{n\times K}$ is the matrix of the first $K$ left singular vectors of $\cM_1(\Q)$.   
\begin{cond}[Initialization] \label{cond:initialization}
$\|\hat{\Xi}^{init}(\hat{\Xi}^{init})^{\top}- \Xi\Xi^{\top}\|\leq \epsilon_0$, for a constant $\epsilon_0\in (0,1/4)$. 
\end{cond} 
\noindent
This is a mild condition, which basically says that the column space of $\hat{\Xi}^{init}$ cannot be orthogonal to the column space of $\Xi$. In Section~\ref{subsec:theory-initialization}, we study when the two initializations in Section~\ref{subsec:real} satisfy this requirement.

\subsection{Rate of convergence of Tensor-SCORE} \label{subsec:theory-rate}
Given the community partition by Tensor-SCORE, let $\hat{\Pi}=[\hat{\pi}_1,\ldots,\hat{\pi}_n]'$, where $\hat{\pi}_i=e_k$, if node $i$ is clustered to community $k$. We measure the performance by the Hamming error of clustering: 
\begin{equation}\label{eq:clustererror}
{\cal L}(\hat{\Pi},\Pi) = \min_{\tau: \text{a permutation on }1,2,...,K}\Bigl(\sum_{i=1}^n 1\{i\in \hat{V}_k, i\notin V_{\tau(k)}\}\Bigr). 
\end{equation}
Let $\theta_{\max}$, $\theta_{\min}$, and $\bar{\theta}$ denote the maximum, minimum, and average of $\theta_i$'s. 
The following theorem is proved in the appendix:
\begin{theorem}[Rate of convergence of Tensor-SCORE] \label{thm:SCOREmain}
Fix $m\geq 2$ and consider a sequence of hDCBM indexed by $n$, where $(K, \Pi, \theta, \cP)$ are allowed to depend on $n$. As $n\to\infty$, suppose \eqref{cond-balance}-\eqref{cond-sparsity} are satisfied. We apply Tensor-SCORE for community detection, where the initial estimate of the factor matrix satisfies Condition~\ref{cond:initialization}. We set $T=\sqrt{K\log(n)}$ and $\delta= C_0\sqrt{K}\theta_{\max}/\|\theta\|$ for a properly large constant $C_0>0$. In the reg-HOOI step, we run $t_{\max}$ iterations, with $t_{\max}\geq m\log(n\theta_{\max}/\|\theta\|)$. With probability $1-n^{-2}$, the estimate $\hat{\Pi}$ from Tensor-SCORE satisfies that
\[
{\cal L}(\hat{\Pi},\Pi)  \leq Cn\gamma_n^{-2}err_n^2. 
\]
Furthermore, if $\theta_{\max}\asymp \theta_{\min}$ and $K=o(\gamma_n\sqrt{n})$, then with probability $1-n^{-2}$, 
\[
{\cal L}(\hat{\Pi},\Pi)  = n\cdot O\biggl(\frac{K^{m-2}\log(n)}{\gamma^2_nn^{m-1}\bar{\theta}^m}\biggr). 
\]
\end{theorem}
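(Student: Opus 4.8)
The plan is to track the error through the three stages of Tensor-SCORE: the reg-HOOI factor matrix $\hat\Xi$, the SCORE matrix $\hat R$, and the final $k$-means labels. The first stage is the spectral analysis of reg-HOOI, which I would import from Section~\ref{sec:spectral_HOOI}. Under \eqref{cond-balance}--\eqref{cond-sparsity}, Condition~\ref{cond:initialization}, and $t_{\max}\gtrsim m\log(n\theta_{\max}/\|\theta\|)$ iterations, that analysis yields (with probability $1-n^{-2}$) a subspace bound of the form $d(\hat\Xi,\Xi)=\|\hat\Xi\hat\Xi^\top-\Xi\Xi^\top\|\lesssim \gamma_n^{-1}err_n$. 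A Davis--Kahan/Procrustes argument then produces an orthogonal $O\in\mathbb{R}^{K\times K}$ with $\|\hat\Xi-\Xi O\|_F\lesssim \sqrt{K}\,\gamma_n^{-1}err_n$, which is the input I feed to the SCORE step.

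The crux is the row-wise behaviour of $\hat\Xi$, since the SCORE map in \eqref{def-Rhat} divides by $\hat\xi_1(i)$. From Lemma~\ref{lem:oracle-HOSVD} and \eqref{form-of-Xi} the population factor matrix is $\Xi=\Theta\Pi B$ with rows $x_i=\theta_i b_k$ for $i\in V_k$; because $u_1$ is positive by \eqref{cond-eigenvector} and $D$ is a positive diagonal, the first column of $B$ is strictly positive, giving a lower bound $\xi_1(i)\gtrsim \theta_i/\|\theta\|$ and $\|x_i\|\asymp \theta_i/\|\theta\|$ after invoking the balance conditions in \eqref{cond-balance}. The regularization radius $\delta\asymp\sqrt{K}\theta_{\max}/\|\theta\|$ is deliberately matched to $\max_i\|x_i\|$, so the projection $\mathcal{P}_{\OO^{n\times K}_{\delta}}$ in \eqref{projection-OO} keeps $\max_i\|e_i^\top\hat\Xi\|$ at the correct scale and, crucially, prevents $\hat\xi_1(i)$ from collapsing on the bulk of the nodes. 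I would combine this with $\|\hat\Xi-\Xi O\|_F$ to bound, row by row, the deviation of each ratio $\hat\xi_{k+1}(i)/\hat\xi_1(i)$ from its population value, using the identity $\hat\xi_{k+1}/\hat\xi_1-\xi_{k+1}/\xi_1=(\xi_1(\hat\xi_{k+1}-\xi_{k+1})+\xi_{k+1}(\xi_1-\hat\xi_1))/(\hat\xi_1\xi_1)$.

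Propagating through the ratio map and the truncation at $T=\sqrt{K\log n}$ in \eqref{def-Rhat}, I would control $\|\hat R - RO^*\|_F^2$ (for a suitable alignment $O^*$) by the squared row-wise error of $\hat\Xi$ weighted by $1/\xi_1(i)^2\asymp(\|\theta\|/\theta_i)^2$, with the truncation absorbing the few anomalous rows on which $\hat\xi_1(i)$ is atypically small so that they contribute only a lower-order term. By \eqref{def-R-2} the rows of $R$ take exactly $K$ distinct values, separated by a gap $\Delta$ bounded below by a constant under \eqref{cond-eigenvector}. A $k$-means perturbation bound of Lei--Rinaldo type then controls the number of misclustered nodes by $C\|\hat R - RO^*\|_F^2/\Delta^2$; substituting the subspace rate and simplifying collapses this to $\mathcal{L}(\hat\Pi,\Pi)\lesssim n\gamma_n^{-2}err_n^2$, the first display.

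The main obstacle will be the second paragraph: obtaining a uniform lower bound on $\hat\xi_1(i)$ over most nodes together with a row-wise (rather than merely spectral) control of $\hat\Xi-\Xi O$. This is exactly where the $\ell_{2,\infty}$-type regularization earns its keep, and the threshold $T$ is what lets me discard the residual bad rows without a genuine entrywise eigenvector bound. For the second display I would specialize: when $\theta_{\max}\asymp\theta_{\min}\asymp\bar\theta$ one has $\|\theta\|\asymp\sqrt{n}\,\bar\theta$ and $\|\theta\|_1\asymp n\bar\theta$, so the first summand of $err_n$ dominates and $err_n^2\asymp K^{m-2}\log(n)/(n^{m-1}\bar\theta^{m})$; the hypothesis $K=o(\gamma_n\sqrt n)$ guarantees the remaining summands of $err_n$ are of strictly smaller order, which turns $n\gamma_n^{-2}err_n^2$ into the stated $n\cdot O\!\big(K^{m-2}\log(n)/(\gamma_n^2 n^{m-1}\bar\theta^m)\big)$.
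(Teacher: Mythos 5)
Your proposal follows essentially the same route as the paper's proof: import the reg-HOOI subspace bound, convert it to a Frobenius/Procrustes bound, propagate through the SCORE ratio map with a row-wise lower bound $\xi_1(i)\gtrsim\theta_i/\|\theta\|$ coming from the positivity of $u_1$, handle the anomalous rows where $\hat\xi_1(i)$ is atypically small via the truncation at $T$ (the paper isolates these as an exceptional set $J^c$ and bounds its cardinality by a Markov-type count), and finish with a $k$-means perturbation count; your specialization for the second display also matches the paper's remark.

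One quantitative point to fix: you assert the $K$ population rows $v_1^*,\ldots,v_K^*$ of $R$ are separated by a gap $\Delta$ bounded below by a \emph{constant}, whereas the computation from \eqref{cond-eigenvector} (entries of $u_1$ of order $K^{-1/2}$, hence $[\mathrm{diag}(u_1)]^{-1}$ scaling by $\sqrt{K}$) actually gives $\min_{k\neq\ell}\|v_k^*-v_\ell^*\|\geq c_0\sqrt{K}$. Since the accumulated SCORE error satisfies $\|\hat R-RO^*\|_F^2\lesssim nK\gamma_n^{-2}err_n^2$, dividing by only a constant gap squared yields ${\cal L}(\hat\Pi,\Pi)\lesssim nK\gamma_n^{-2}err_n^2$, which is off by a factor of $K$ from the stated bound when $K$ grows with $n$. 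Using the correct $\sqrt{K}$ separation cancels this factor and recovers $Cn\gamma_n^{-2}err_n^2$; the rest of your argument then goes through as in the paper.
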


\noindent
We make a few remarks. 
\begin{enumerate}
\item {\it Connection to average node degree}. It is common to express the error rate of community detection in terms of average node degree. For each node $i$, define its degree ${\cal D}_i$ as the total number of hyperedges that involves this node. Let $d^*=n^{-1}\sum_{i=1}^n \mathbb{E}[{\cal D}_i]$ be the expected average degree. When $\theta_{\max}\asymp \theta_{\min}$, we have $d^*\asymp n^{m-1}\bar{\theta}^m$. Theorem~\ref{thm:SCOREmain} and condition \eqref{cond-sparsity} translate to:
\[
{\cal L}(\hat{\Pi},\Pi) = n\cdot O\Bigl(\frac{K^{m-2}\log(n)}{\gamma_n^2 d^*}\Bigr), \qquad\mbox{provided that}\quad d^*\gg K^{m-1}\gamma_n^{-2}\log^2(n).  
\]
When $K=O(1)$ and $\gamma_n\asymp 1$, it only requires that $d^*\gg\log^2(n)$. Hence, our method covers a wide range of sparsity. Moreover, the clustering error is ${\cal L}(\hat{\Pi},\Pi)\ll n\cdot \frac{1}{\log(n)}=o(n)$, and it yields consistent community detection. 

\item {\it Implication on graph networks}. When $m=2$, hDCBM reduces to the DCBM (\cite{karrer2011stochastic}) for regular graph networks. When $\theta_{\max}\asymp \theta_{\min}$, Theorem~\ref{thm:SCOREmain} implies 
\[
{\cal L}(\hat{\Pi},\Pi) = n\cdot O\bigl((\gamma_n^2n\bar{\theta}^2)^{-1}\log(n)\bigr).
\]
This has recovered the rate of SCORE for graph community detection \cite{jin2015fast}.

\item {\it The impact of the diagonal tensor}. The expression of $err_n$ has two terms, where the first term comes from an upper bound of $\|\A-\mathbb{E}\A\|_{\delta}$ and the second term is from the upper bound of $\|\mathrm{diag}(\Q)\|$. If there is no degree heterogeneity, the second term disappears, since substracting $\mathrm{diag}(\Q)$ from $\Q$ does not change the eigenvectors (see, for example, \cite[Lemma 4.1]{ghoshdastidar2017consistency}). However, when there exists degree heterogeneity, the second term cannot be avoided. Fortunately, this term is often negligible, provided that $\theta_{\max}/\bar{\theta}$ is not too large. For example, when $\theta_{\max}\asymp\theta_{\min}$ and $K=o(\gamma_n\sqrt{n})$, the contribution of this term to ${\cal L}(\hat{\Pi},\Pi)$ is $o(1)$, which is basically zero (since clustering error only takes integer values).
\end{enumerate}

The symmetric hSBM model has been considered in a number of works \cite{ghoshdastidar2014consistency,ghoshdastidar2015provable,ghoshdastidar2017consistency,chien2018community,kim2018stochastic}. It is a special case of hDCBM that has no degree heterogeneity and satisfies particular symmetry constraints, and Theorem~\ref{thm:SCOREmain} is directly applicable.

\begin{corollary}[The example of symmetric hSBM]  \label{cor:hSBM-symmetric}
Fix $m\geq 2$ and consider an hDCBM where $\theta_i\equiv \alpha^{1/m}_n$ for $\alpha_n\to 0$,  $|V_k|\in [K^{-1}n-1,\, K^{-1}n]$ for $1\leq k\leq K$,  
and the core tensor satisfies $\cP(s_1,s_2,\ldots,s_K)=1$ when $s_k$'s are equal and $\cP(s_1,s_2,\ldots,s_K)=b$ otherwise, for $b\in (0,1)$. 
Suppose $K=o(\sqrt{n})$ and $(1-b)^2n^{m-1}\alpha_n\gg K^{m-1}\log^2(n)$. With probability $1-n^{-2}$, 
\[
{\cal L}(\hat{\Pi},\Pi)=n\cdot O\biggl(\frac{K^{2m-2}\log(n)}{(1-b)^2n^{m-1}\alpha_n}  \biggr). 
\]
\end{corollary}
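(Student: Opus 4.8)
The plan is to obtain Corollary~\ref{cor:hSBM-symmetric} as a direct specialization of Theorem~\ref{thm:SCOREmain}: the symmetric hSBM is a particular hDCBM, so it suffices to instantiate every model quantity entering Theorem~\ref{thm:SCOREmain}, verify hypotheses \eqref{cond-balance}--\eqref{cond-sparsity} and Condition~\ref{cond:initialization}, and then simplify the generic bound $\mathcal{L}(\hat\Pi,\Pi)\leq Cn\gamma_n^{-2}err_n^2$. Since $\theta_i\equiv\alpha_n^{1/m}$, we immediately have $\theta_{\max}=\theta_{\min}=\bar\theta=\alpha_n^{1/m}$, $\|\theta\|=\sqrt n\,\alpha_n^{1/m}$ and $\|\theta\|_1=n\alpha_n^{1/m}$; the balance condition \eqref{cond-balance} is automatic from $|V_k|\in[K^{-1}n-1,K^{-1}n]$, and $K=o(\sqrt n)$ keeps the relative error in the block sizes negligible.

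The first and main computational step is to pin down the eigengap $\gamma_n$ through $G=\cM_1(\calP\times_1 D\times_2\cdots\times_m D)$. Near balance and constant $\theta_i$ give $d_k=\|\theta\|^{-1}(\sum_{i\in V_k}\theta_i^2)^{1/2}=K^{-1/2}(1+o(1))$, so $D=K^{-1/2}I_K$ up to an $O(K/n)$ relative error and $G=K^{-m/2}\cM_1(\calP)\,(1+o(1))$. I would compute the singular values of $\cM_1(\calP)$ from its structure: writing $\calP=b\,\mathbf{1}+(1-b)\,\mathbf{I}$ with $\mathbf{I}$ the superdiagonal indicator tensor, its mode-1 matricization is $\cM_1(\calP)=b\,\mathbf{1}_K\mathbf{1}_{K^{m-1}}^{\top}+(1-b)E$, where $E$ has orthonormal rows ($EE^{\top}=I_K$) and $E\mathbf{1}_{K^{m-1}}=\mathbf{1}_K$. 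A short calculation then gives $\cM_1(\calP)\cM_1(\calP)^{\top}=(1-b)^2 I_K+[b^2K^{m-1}+2b(1-b)]\mathbf{1}_K\mathbf{1}_K^{\top}$, whose eigenvalues are $(1-b)^2+b^2K^m+2b(1-b)K$ once and $(1-b)^2$ with multiplicity $K-1$. Hence $\kappa_1\asymp b$ while $\kappa_2=\cdots=\kappa_K=K^{-m/2}(1-b)$, so $\min\{\kappa_1-\kappa_2,\kappa_K\}=\kappa_K\asymp K^{-m/2}(1-b)$ and we take $\gamma_n\asymp K^{-m/2}(1-b)$, i.e. $\gamma_n^{-2}\asymp K^m(1-b)^{-2}$. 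This also certifies \eqref{cond-eigenvalue}; and since the leading left singular vector is $u_1=K^{-1/2}\mathbf{1}_K$, which is positive with equal entries, \eqref{cond-eigenvector} holds trivially.

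Next I would evaluate $err_n$. Substituting the hSBM quantities, the leading fraction becomes $K^{(m-2)/2}\sqrt{\log(n)/(n^{m-1}\alpha_n)}+K^{(m-1)/2}\log(n)/(n^{(2m-1)/2}\alpha_n)$, and the second (log-squared) summand is lower order relative to the first under $n^{m-1}\alpha_n\gg K^{m-1}\log^2(n)$. The remaining term $K\theta_{\max}^2/(\sqrt n\,\theta_{\min}\|\theta\|)=K/n$ is the contribution of $\|\mathrm{diag}(\Q)\|$; because there is no degree heterogeneity here, it drops out, since removing $\mathrm{diag}(\Q)$ does not perturb the eigenvectors (cf. the remark following Theorem~\ref{thm:SCOREmain} and \cite[Lemma~4.1]{ghoshdastidar2017consistency}). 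Thus $err_n\asymp K^{(m-2)/2}\sqrt{\log(n)/(n^{m-1}\alpha_n)}$, so $err_n^2\asymp K^{m-2}\log(n)/(n^{m-1}\alpha_n)$, and multiplying gives $n\gamma_n^{-2}err_n^2\asymp n\,K^m(1-b)^{-2}\cdot K^{m-2}\log(n)/(n^{m-1}\alpha_n)=n\cdot K^{2m-2}\log(n)/[(1-b)^2 n^{m-1}\alpha_n]$, exactly the asserted rate.

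It remains to check \eqref{cond-sparsity} and Condition~\ref{cond:initialization}. The sparsity condition $K\gamma_n^{-2}err_n^2\log(n)\to0$ reduces to a requirement that $(1-b)^2n^{m-1}\alpha_n$ dominate a power of $K$ times $\log^2(n)$; reconciling this exponent of $K$ with the stated sparsity hypothesis, and tracking the dependence on $K$, $(1-b)$ and $\alpha_n$ throughout, is the place demanding the most care and is the main obstacle of the argument. Condition~\ref{cond:initialization} is not a model assumption but is supplied, for either proposed initialization, by the analysis of Section~\ref{subsec:theory-initialization} specialized to the present no-heterogeneity, well-separated regime. With every hypothesis of Theorem~\ref{thm:SCOREmain} in force, the bound $\mathcal{L}(\hat\Pi,\Pi)\leq Cn\gamma_n^{-2}err_n^2$ together with the simplification above yields the corollary. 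The genuinely nontrivial content is the exact spectral computation of $G$ that produces $\gamma_n\asymp K^{-m/2}(1-b)$; everything downstream is bookkeeping.
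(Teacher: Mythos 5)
Your proposal is correct and follows essentially the same route as the paper: both reduce to Theorem~\ref{thm:SCOREmain} by computing $D=K^{-1/2}I_K$, decomposing $\cP$ into its superdiagonal and all-ones parts, obtaining $\M_1(\cP)\M_1(\cP)^{\top}=(1-b)^2\sum_k e_ke_k^{\top}+[b^2K^{m-1}+2b(1-b)]\mathbf{1}_K\mathbf{1}_K^{\top}$, and reading off $\kappa_2=\cdots=\kappa_K=K^{-m/2}(1-b)$ so that $\gamma_n=K^{-m/2}(1-b)$, after which the stated rate is bookkeeping. The only difference is that you spell out the simplification of $err_n$ and the negligibility of the diagonal term, which the paper leaves implicit by citing Theorem~\ref{thm:SCOREmain} directly.
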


We compare it with existing results for symmetric hSBM. \cite{ghoshdastidar2014consistency} considered an HOSVD approach. With $K$ and $\alpha_n$ being fixed, they derived that ${\cal L}(\hat{\Pi},\Pi)=n\cdot \widetilde{O}\bigl((1-b)^{-4}n^{-[m-2+\frac{1}{2m}]}\bigr)$. The rate has a  sub-optimal dependence on $n$ and $(1-b)$; moreover, since $\alpha_n$ is fixed, it only applies to dense networks. \cite{ghoshdastidar2017consistency} studied an approach by projecting the hypergraph to a weighted graph. The rate they obtained \cite[Corollary 5.1]{ghoshdastidar2017consistency} is the same as ours. In comparison, our method is applicable in broader settings such as those with degree heterogeneity and/or information loss.

\subsection{Performance of two initializations} \label{subsec:theory-initialization}
In Section~\ref{subsec:real}, we propose two methods as initializations of Tensor-SCORE. We show that they satisfy Condition~\ref{cond:initialization} under mild conditions.
 
First, we consider an initialization by the diagonal removed HOSVD. 
\begin{theorem}[Initialization by diagonal removed HOSVD] \label{thm:init1}
Fix $m\geq 2$ and consider an hDCBM where \eqref{cond-balance}-\eqref{cond-eigenvector} hold. We apply the diagonal removed HOSVD to the adjacency tensor $\A$ to obtain $\hat{\Xi}$. Suppose
\beq \label{cond-drHOSVD}
\min\biggl\{\frac{\sqrt{\theta^m_{\max}\|\theta\|_1^m\log(n)}+\log(n)}{\gamma^{2}_n\|\theta\|^{2m}}, \;\; \frac{\theta_{\max}^2}{\gamma_n^2\|\theta\|^2}\biggr\}\to 0. 
\eeq
With probability $1-n^{-2}$, 
Condition~\ref{cond:initialization} is satisfied for any constant $\epsilon_0>0$. For the case of $\theta_{\max}\asymp\theta_{\min}$, the requirement \eqref{cond-drHOSVD} reduces to $
\gamma_n^4n^m\bar{\theta}^{2m}\gg\log(n)$. 
\end{theorem}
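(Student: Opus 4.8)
The plan is to run a Davis–Kahan argument for the top-$K$ eigenspace of $[G-\mathrm{diag}(G)]$, after identifying the correct ``signal'' matrix. By Lemma~\ref{lem:oracle-HOSVD}, the matricized signal $\cM_1(\Q)$ has left singular vectors $\Xi$ with singular values $\lambda_k=\kappa_k\|\theta\|^m$, so $S:=\cM_1(\Q)\cM_1(\Q)^\top=\Xi\Lambda^2\Xi^\top$ with $\Lambda=\mathrm{diag}(\lambda_1,\dots,\lambda_K)$. This $S$ has rank $K$, its top-$K$ eigenspace is exactly $\mathrm{span}(\Xi)$, and by \eqref{cond-eigenvalue} the relevant eigengap (from $\lambda_K^2$ down to the zero $(K{+}1)$-th eigenvalue) is at least $\gamma_n^2\|\theta\|^{2m}$. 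Thus it suffices to bound the perturbation $E:=[G-\mathrm{diag}(G)]-S$ in spectral norm, since Weyl's inequality preserves the gap and the Davis–Kahan $\sin\Theta$ theorem then gives $\|\hat\Xi\hat\Xi^\top-\Xi\Xi^\top\|\lesssim \|E\|/(\gamma_n^2\|\theta\|^{2m})$; I would finish by showing this ratio equals the quantity on the left of \eqref{cond-drHOSVD} and hence falls below any fixed $\epsilon_0$ for $n$ large.

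The heart of the matter is the decomposition of $E$, which is where diagonal removal earns its keep. Writing $M=\cM_1(\A)$, $\bar M=\cM_1(\ex\A)=\cM_1(\Q)-\cM_1(\mathrm{diag}(\Q))$ and $Z=M-\bar M$, the key observation is that $\ex[ZZ^\top]$ is \emph{diagonal} (the entries of $Z$ are independent and mean-zero), so the large ``variance'' mass that would otherwise sit on the diagonal and swamp the eigenvalues in the sparse regime is annihilated exactly by subtracting $\mathrm{diag}(G)$; this is precisely why diagonal-removed HOSVD beats vanilla HOSVD here. Consequently the population of $G-\mathrm{diag}(G)$ is $\bar M\bar M^\top-\mathrm{diag}(\bar M\bar M^\top)$, and I split $E=E_{\mathrm{bias}}+E_{\mathrm{stoch}}$ with $E_{\mathrm{bias}}=[\bar M\bar M^\top-\mathrm{diag}(\bar M\bar M^\top)]-S$ and $E_{\mathrm{stoch}}=(G-\ex G)-\mathrm{diag}(G-\ex G)$.

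For $E_{\mathrm{bias}}$, I substitute $\bar M=\cM_1(\Q)-\Delta$ with $\Delta=\cM_1(\mathrm{diag}(\Q))$ and expand; the leading contributions are $\|\cM_1(\Q)\|\,\|\Delta\|$, $\|\Delta\|^2$, and the removed diagonal of the signal, all controlled through $\|\mathrm{diag}(\Q)\|$, whose size I read off from the explicit $\theta$-structure of $\Q$. Dividing by the gap $\gamma_n^2\|\theta\|^{2m}$ produces the $\theta_{\max}^2/(\gamma_n^2\|\theta\|^2)$ term in \eqref{cond-drHOSVD}. For $E_{\mathrm{stoch}}$, I write $G-\ex G=\bar M Z^\top+Z\bar M^\top+(ZZ^\top-\ex ZZ^\top)$; the cross terms are handled by $\|\bar M\|\,\|Z\|$, and the centered quadratic term (with its diagonal removed, using $\ex ZZ^\top=\mathrm{diag}(\ex ZZ^\top)$) is bounded by a matrix concentration inequality. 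Together these yield $\|E_{\mathrm{stoch}}\|\lesssim \sqrt{\theta_{\max}^m\|\theta\|_1^m\log n}+\log n$ with probability $1-n^{-2}$, matching the first term of \eqref{cond-drHOSVD}. The reduction to $\gamma_n^4 n^m\bar\theta^{2m}\gg\log n$ when $\theta_{\max}\asymp\theta_{\min}$ is then a direct substitution.

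I expect the main obstacle to be the spectral-norm concentration of the sparse matricized noise, and in particular of the diagonal-removed centered quadratic form $ZZ^\top-\mathrm{diag}(ZZ^\top)$. Two features make this delicate: the supersymmetry of $\A$ means the entries of $Z=\cM_1(\A-\ex\A)$ are repeated and correlated across the $n^{m-1}$ columns, so off-the-shelf independent-entry random-matrix bounds do not apply directly; and the extreme sparsity forces the concentration (via matrix Bernstein or an $\varepsilon$-net/moment argument tuned to the $\log n$ scale) to track the max-entry term that gives the additive $\log n$, rather than only the variance term. Getting the variance proxy of the quadratic form to come out as $\theta_{\max}^m\|\theta\|_1^m$ — and confirming it is this term, not the cross term $\|\bar M\|\,\|Z\|$, that dominates — is the calculation I would budget the most care for.
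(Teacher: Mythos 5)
Your overall architecture coincides with the paper's: a Davis--Kahan argument comparing the top-$K$ eigenspace of $G-\mathrm{diag}(G)$ with that of $\cM_1(\Q)\cM_1(\Q)^{\top}$ (gap $\kappa_K^2\|\theta\|^{2m}\geq\gamma_n^2\|\theta\|^{2m}$), the observation that $\EE[ZZ^{\top}]$ is diagonal so that removing the diagonal cancels the variance bias, a matrix Bernstein bound on the centered quadratic form with variance proxy $\beta(\Q)\cdot\|\calP\|_{\max}\thetamax\|\theta\|_1^{m-1}\lesssim\thetamax^m\|\theta\|_1^m$, and a deterministic bias term of order $\thetamax^2\|\theta\|^{2(m-1)}$ that produces the $\thetamax^2/(\gamma_n^2\|\theta\|^2)$ piece. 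The dependence obstacle you flag is resolved in the paper by grouping columns by their unordered index set: $\Delta-\mathrm{diag}(\Delta)$ is written as $(m!)$ times a sum over $1\leq j_2\leq\cdots\leq j_m\leq n$ of \emph{independent} matrices $\xi_{j_2\cdots j_m}\xi_{j_2\cdots j_m}^{\top}-\mathrm{diag}(\cdot)$, to which a Bernstein inequality for sums of independent matrices with controlled $\psi_1$-norms applies.

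The genuine gap is your treatment of the cross term. Bounding $\|\bar MZ^{\top}+Z\bar M^{\top}\|$ by $\|\bar M\|\,\|Z\|$ costs roughly a factor $\sqrt{d^*}$: one has $\|\bar M\|\asymp\kappa_1\|\theta\|^m$ and $\|Z\|\lesssim\sqrt{\thetamax\|\theta\|_1^{m-1}\log n}+\log n$, so in the homogeneous case the product is of order $\kappa_1\|\theta\|^m\sqrt{n^{m-1}\bar\theta^m\log n}$, whereas the target $\sqrt{\thetamax^m\|\theta\|_1^m\log n}$ equals $\|\theta\|^m\sqrt{\log n}$ up to constants. After dividing by the gap, your cross-term contribution is of order $\gamma_n^{-2}\kappa_1\sqrt{\log n/(n\bar\theta^m)}$, so it forces the vanilla-HOSVD-type requirement $\gamma_n^4 n\bar\theta^m\gg\log n$ rather than the claimed $\gamma_n^4 n^m\bar\theta^{2m}\gg\log n$ (the two differ by a factor $d^*=n^{m-1}\bar\theta^m$); the product bound throws away exactly the improvement that diagonal removal is supposed to buy. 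The remedy is to keep the cross term as the sum of rank-one matrices $\sum_{j_2\leq\cdots\leq j_m}\xi_{j_2\cdots j_m}q_{j_2\cdots j_m}^{\top}$ and apply matrix Bernstein to it directly, exploiting that each individual signal column is tiny, $\|q_{j_2\cdots j_m}\|\leq\|\calP\|_{\max}\thetamax^{m-1}\|\theta\|$, rather than funneling all $n^{m-1}$ columns through the operator norm of $\bar M$. (For what it is worth, the paper itself only writes out the Bernstein argument for the quadratic term and asserts that the cross term is dominated by it, so the careful comparison you anticipate is indeed needed --- but the product bound $\|\bar M\|\,\|Z\|$ is not a viable way to carry it out.)
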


\noindent
We have a few observations: 
\begin{enumerate}
\item {\it The required network sparsity}. Same as before, let $d^*$ be the expected average node degree. Note that $d^*\asymp n^{m-1}\bar{\theta}^m$, when $\theta_{\max}\asymp \theta_{\min}$. As a result, the requirement in Theorem~\ref{thm:init1} becomes
$
d^*\gg \gamma_n^{-2}\sqrt{n^{m-2}\log(n)}. 
$ 
Therefore, this initialization works for networks that are moderately dense. 

\item {\it Comparison with standard HOSVD}. If we use the standard HOSVD as initialization, Condition~\ref{cond:initialization} is satisfied if $\gamma_n^2n\bar{\theta}^m\gg\log(n)$ (Lemma~\ref{lem:HOSVD_init}). This requirement translates to 
$
d^*\gg \gamma_n^{-2}n^{m-2}\log(n),
$
which is much more stringent than the requirement of diagonal removed HOSVD.  
\end{enumerate}

Next, we consider an initialization by randomized graph projection. Introduce $\cP_*=\cP\times_1D\times_2\cdots\times_mD$, where $D$ is the same diagonal matrix we have used to state  Condition \eqref{cond-eigenvalue}. Define
$
\widetilde{\lambda}_{\min}(\cP_*, v) = \lambda_{\min}(\cP_*\times_3 v^{\top}\times_4\cdots\times_m v^{\top})/\|v\|^{m-2}, \mbox{for $v\in\mathbb{R}^K$ such that $\|v\|\neq 0$}. 
$
The following proposition is proved in the appendix, where $s_{\min}(\cdot)$ denotes the minimum singular value of a matrix:  
\begin{proposition} \label{prop:graph-eigenvalue}
For any $v\in\mathbb{R}^K$, $
\widetilde{\lambda}_{\min}(\cP_*, v)\leq s_{\min}(\M_1(\cP_*))$. 
\end{proposition}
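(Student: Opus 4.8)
The plan is to reduce the inequality to a variational statement about a partially contracted matrix and then exploit the variational characterization of the minimum eigenvalue by testing it against a carefully chosen vector. I write $M=\M_1(\cP_*)\in\mathbb{R}^{K\times K^{m-1}}$ and, for a fixed $v$ with $\|v\|\neq 0$, set $A(v)=\cP_*\times_3 v^{\top}\times_4\cdots\times_m v^{\top}\in\mathbb{R}^{K\times K}$. Because $\cP_*$ is supersymmetric and the last $m-2$ modes are all contracted against the same vector $v$, the matrix $A(v)$ is symmetric, so $\lambda_{\min}(A(v))=\min_{\|w\|=1}w^{\top}A(v)w$. Since $K\le K^{m-1}$, we also have $s_{\min}(M)=\min_{\|x\|=1}\|M^{\top}x\|$, and after multiplying through by $\|v\|^{m-2}>0$ the claim becomes equivalent to $\lambda_{\min}(A(v))\le s_{\min}(M)\,\|v\|^{m-2}$.

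The key step is to upper bound $\lambda_{\min}(A(v))$ by evaluating the Rayleigh quotient at the minimal left singular vector of $M$. Let $x^*\in\mathbb{R}^K$ be a unit vector attaining $\|M^{\top}x^*\|=s_{\min}(M)$. Plugging $w=x^*$ into the variational formula gives
\[
\lambda_{\min}(A(v))\le (x^*)^{\top}A(v)\,x^*=\cP_*\times_1 (x^*)^{\top}\times_2 (x^*)^{\top}\times_3 v^{\top}\times_4\cdots\times_m v^{\top}.
\]
I then read the right-hand side as the Euclidean inner product between the $(m-1)$-way tensor $\cP_*\times_1 (x^*)^{\top}$ and the rank-one tensor $x^*\otimes v^{\otimes(m-2)}$, and apply Cauchy--Schwarz.

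The remainder is norm bookkeeping. Mode-$1$ contraction is exactly multiplication by the matricization, so $\|\cP_*\times_1 (x^*)^{\top}\|_F=\|M^{\top}x^*\|=s_{\min}(M)$, while $\|x^*\otimes v^{\otimes(m-2)}\|_F=\|x^*\|\,\|v\|^{m-2}=\|v\|^{m-2}$. Cauchy--Schwarz then yields $\lambda_{\min}(A(v))\le s_{\min}(M)\,\|v\|^{m-2}$, which is the desired bound; note that the sign of $\lambda_{\min}(A(v))$ causes no trouble, since an inner product is at most the product of the norms regardless of sign (and the inequality is anyway trivial when $\lambda_{\min}(A(v))\le 0$, as $s_{\min}(M)\ge 0$). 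The one genuinely non-routine point is recognizing that the correct test direction is the minimal singular vector of the full matricization $M$: this is precisely what converts the minimum eigenvalue of the partially contracted matrix into the minimum singular value of $M$, whereas the naive choice $w=v/\|v\|$ would produce a bound that can exceed $s_{\min}(M)$.
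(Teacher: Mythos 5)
Your proof is correct, but it takes a genuinely different route from the paper's. The paper proves the dual, uniform statement: writing $c=\widetilde{\lambda}_{\min}(\cP_*,v)$, it shows $\|x^{\top}\M_1(\cP_*)\|\geq c\|x\|$ for \emph{every} $x\in\mathbb{R}^K$, by decomposing $\M_1(\cP_*)$ into its $K^{m-2}$ slices $\cP_*^{(s_3,\ldots,s_m)}=\cP_*\times_3 e_{s_3}\times_4\cdots\times_m e_{s_m}$, expanding $B(v)=\sum_{s_3,\ldots,s_m}(v_{s_3}\cdots v_{s_m})\cP_*^{(s_3,\ldots,s_m)}$, and applying Cauchy--Schwarz to that sum over multi-indices; the conclusion then follows from the variational characterization of $s_{\min}$. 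You instead test the Rayleigh quotient of the symmetric matrix $A(v)$ at the single minimizing left singular vector $x^*$ of $\M_1(\cP_*)$ and apply Cauchy--Schwarz once, at the tensor level, to $\langle \cP_*\times_1(x^*)^{\top},\,x^*\otimes v^{\otimes(m-2)}\rangle$. Your version is shorter, needs only one test vector, and cleanly sidesteps a small subtlety in the paper's argument: the step $\|B(v)x\|^2\geq\lambda_{\min}^2(B(v))\|x\|^2$ is only valid when $B(v)$ is positive semidefinite (for a symmetric matrix the smallest singular value is $\min_i|\lambda_i|$, which can be strictly smaller than $|\lambda_{\min}|$), so the paper implicitly relies on the case $\lambda_{\min}(B(v))\leq 0$ being trivial, whereas your use of $\lambda_{\min}(A(v))\leq (x^*)^{\top}A(v)x^*$ is sign-safe by construction. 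What the paper's route buys in exchange is the marginally stronger pointwise bound $\|x^{\top}\M_1(\cP_*)\|\geq c\|x\|$ for all $x$, though that extra strength is not used elsewhere.
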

\noindent
Let ${\cal S}_K(\epsilon)=\{D^{-1}\Pi^{\top}\Theta\eta: 1-\epsilon\leq \eta_i\leq 1+\epsilon, 1\leq i\leq n\}$, where $\epsilon\in (0,1)$ is the tuning parameter used in the randomized graph projection. As $n\to\infty$, for a positive sequence $\widetilde{\gamma}_n\to 0$, we assume
\beq \label{cond-eigenvalue-add}
\max_{v\in {\cal S}_K(\epsilon)}\bigl\{ \widetilde{\lambda}_{\min}(\cP_*, v)\bigr\}\geq \widetilde{\gamma}_n. 
\eeq
We recall that $\kappa_K$ is the minimum singular value of $\M_1(\cP_*)$. Condition~\eqref{cond-eigenvalue} ensures that $\kappa_K\geq \gamma_n$. It follows from Proposition~\ref{prop:graph-eigenvalue} that $\widetilde{\gamma}_n$ can never be larger than $\gamma_n$.

\begin{theorem}[Initialization by randomized graph projection] \label{thm:init2}
Fix $m\geq 2$ and consider an hDCBM where \eqref{cond-balance}-\eqref{cond-eigenvector} hold. Additionally, assume \eqref{cond-eigenvalue-add} holds with $S_K(\eps)$ so that $\eps\leq (\widetilde\gamma_n/4m\|\calM_1(\calP_{\ast})\|)(1-\eps)^{m-2}/(1+\eps)^{m-3}$. We use the randomized graph projection to obtain $\hat{\Xi}$. Suppose 
\beq  \label{cond-rGP}
\Big(\frac{1+\eps}{1-\eps}\Big)^{m}\cdot\min\Biggl\{\frac{\sqrt{\theta_{\max}\|\theta\|_1^{m-1}\log(n)}+\log(n)}{\widetilde{\gamma}_n\|\theta\|^2\|\theta\|_1^{m-2}}, \;\;\; \frac{\theta_{\max}^2}{\widetilde{\gamma}_n\|\theta\|^2}\Biggr\}\to 0.  
\eeq
With probability $1-n^{-2}$, 
Condition~\ref{cond:initialization} is satisfied for any constant $\epsilon_0>0$. For the case of $\theta_{\max}\asymp\theta_{\min}$, the requirement \eqref{cond-drHOSVD} reduces to $
\widetilde{\gamma}_n^2n^{m-1}\bar{\theta}^{m}\gg\log(n)$. 
\end{theorem}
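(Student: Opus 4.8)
The plan is to verify Condition~\ref{cond:initialization} by a Davis--Kahan ($\sin\Theta$) argument for the $n\times n$ matrix $M:=\A\times_3\eta^{\top}\times_4\cdots\times_m\eta^{\top}$, whose leading $K$ eigenvectors are exactly $\hat{\Xi}^{init}$. Conditioning on the random $\eta$ (which is independent of $\A$), I would use $\A=\Q-\mathrm{diag}(\Q)+(\A-\ex[\A])$ and carry the contraction through each term to write $M=M^{*}-M_d+M_n$, where $M^{*}=\Q\times_3\eta^{\top}\cdots\times_m\eta^{\top}$ is the signal, $M_d=\mathrm{diag}(\Q)\times_3\eta^{\top}\cdots\times_m\eta^{\top}$, and $M_n=(\A-\ex[\A])\times_3\eta^{\top}\cdots\times_m\eta^{\top}$. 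The task then splits into (i) showing that the top-$K$ eigenprojector of $M^{*}$ equals $\Xi\Xi^{\top}$, (ii) lower bounding the gap $|\lambda_K(M^{*})|$, and (iii) upper bounding $\|M_d\|+\|M_n\|$; Davis--Kahan gives $\|\hat{\Xi}^{init}(\hat{\Xi}^{init})^{\top}-\Xi\Xi^{\top}\|\le 2(\|M_d\|+\|M_n\|)/|\lambda_K(M^{*})|$.

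For the signal, I would unfold $\Q=[\calP;\Theta\Pi,\dots,\Theta\Pi]$ and contract modes $3,\dots,m$ to obtain $M^{*}=\Theta\Pi\,\widetilde{\calP}_{w}\,(\Theta\Pi)^{\top}$, where $w=\Pi^{\top}\Theta\eta$ and $\widetilde{\calP}_{w}=\calP\times_3 w^{\top}\times_4\cdots\times_m w^{\top}\in\mathbb{R}^{K\times K}$. Since $\widetilde{\calP}_{w}$ is nonsingular under the rank hypotheses, $\mathrm{col}(M^{*})=\mathrm{col}(\Theta\Pi)=\mathrm{col}(\Xi)$ by Lemma~\ref{lem:oracle-HOSVD}, so the leading eigenprojector of $M^{*}$ is precisely $\Xi\Xi^{\top}$. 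Using the identity $(\Theta\Pi)^{\top}(\Theta\Pi)=\|\theta\|^2 D^2$ (immediate from $d_k=\|\theta\|^{-1}(\sum_{i\in V_k}\theta_i^2)^{1/2}$), the nonzero eigenvalues of $M^{*}$ coincide with those of $\|\theta\|^2\,D\widetilde{\calP}_{w}D$, and a short computation shows $D\widetilde{\calP}_{w}D=\cP_*\times_3 v^{\top}\times_4\cdots\times_m v^{\top}$ for $v=D^{-1}w=D^{-1}\Pi^{\top}\Theta\eta\in\mathcal{S}_K(\epsilon)$. Hence the smallest nonzero singular value of $M^{*}$ is $|\lambda_K(M^{*})|=\|\theta\|^2\,\widetilde{\lambda}_{\min}(\cP_*,v)\,\|v\|^{m-2}$.

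The eigenvalue lower bound is where \eqref{cond-eigenvalue-add} and the constraint on $\epsilon$ enter. Because $\eta_i\in[1-\epsilon,1+\epsilon]$, every realized $v$ lies in $\mathcal{S}_K(\epsilon)$, and the scale-invariant map $v\mapsto\widetilde{\lambda}_{\min}(\cP_*,v)$ has oscillation across this box controlled by $m\|\cM_1(\cP_*)\|\epsilon$ (via multilinearity of $\cP_*\times_3 v^{\top}\cdots\times_m v^{\top}$ and Weyl's inequality). The hypothesis $\epsilon\le(\widetilde{\gamma}_n/4m\|\cM_1(\cP_*)\|)(1-\epsilon)^{m-2}/(1+\epsilon)^{m-3}$ is tuned exactly so this oscillation is at most $\widetilde{\gamma}_n/2$; combined with $\max_{v\in\mathcal{S}_K(\epsilon)}\widetilde{\lambda}_{\min}(\cP_*,v)\ge\widetilde{\gamma}_n$ from \eqref{cond-eigenvalue-add}, this forces $\widetilde{\lambda}_{\min}(\cP_*,v)\ge\widetilde{\gamma}_n/2$ for the realized $v$, deterministically. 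Substituting the scale of $\|v\|$ permitted by the balance conditions \eqref{cond-balance} (so that $\|v\|$ is comparable to $\|\theta\|\sqrt{n}$, matching the $\|\theta\|^2\|\theta\|_1^{m-2}$ factor of \eqref{cond-rGP} in the homogeneous regime) yields $|\lambda_K(M^{*})|\gtrsim\|\theta\|^2\widetilde{\gamma}_n\|v\|^{m-2}$. This step is precisely where randomization matters: it keeps $v$ inside the neighborhood on which \eqref{cond-eigenvalue-add} certifies a gap.

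Finally I would bound the two perturbations. The matrix $M_d$ is essentially diagonal, with $M_d(i,i)\asymp\theta_i^2\|\theta\|_1^{m-2}$ and strictly smaller off-diagonal collision terms, giving $\|M_d\|\lesssim\theta_{\max}^2\|\theta\|_1^{m-2}$. For $M_n$, I would decompose $M_n=\sum_{e}X_e W_e$ over the independent hyperedge Bernoullis $X_e=\A(e)-\ex[\A(e)]$, where each $W_e$ is supported on the $m\times m$ principal block of $e$ with entries bounded by $(1+\epsilon)^{m-2}$, so that $\|X_e W_e\|\lesssim(1+\epsilon)^{m-2}$ and the matrix variance proxy satisfies $\max_i\sum_j\mathrm{Var}(M_n(i,j))\lesssim\theta_{\max}\|\theta\|_1^{m-1}(1+\epsilon)^{2(m-2)}$. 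Matrix Bernstein then gives $\|M_n\|\lesssim(1+\epsilon)^{m-2}\bigl(\sqrt{\theta_{\max}\|\theta\|_1^{m-1}\log n}+\log n\bigr)$ with probability $1-n^{-2}$. Dividing these bounds by $|\lambda_K(M^{*})|$ reproduces the two quantities inside \eqref{cond-rGP}, with the $((1+\epsilon)/(1-\epsilon))^m$ prefactor absorbing the multiplicative $\eta$-dependence between numerator and denominator, so the perturbation ratio vanishes and is eventually below any fixed $\epsilon_0$. I expect the main obstacle to be the spectral-norm concentration of the contracted sparse noise $M_n$: the hyperedge symmetry couples the slices, so one must organize the independent summands $W_e$ carefully to extract the sharp $\sqrt{\theta_{\max}\|\theta\|_1^{m-1}\log n}$ scaling rather than a crude Frobenius bound, while a secondary difficulty is the Lipschitz control of $\widetilde{\lambda}_{\min}(\cP_*,\cdot)$ that converts the max-over-neighborhood assumption \eqref{cond-eigenvalue-add} into the pointwise gap used above.
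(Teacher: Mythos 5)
Your proposal is correct and follows essentially the same route as the paper: decompose $\A\times_3\eta^{\top}\cdots\times_m\eta^{\top}$ into signal, contracted diagonal, and contracted noise; identify the signal's column space with $\mathrm{col}(\Xi)$; lower-bound $\sigma_K$ of the signal by comparing the realized $v=D^{-1}\Pi^{\top}\Theta\eta$ to the maximizer in \eqref{cond-eigenvalue-add} via a multilinear perturbation bound (which is exactly how the paper uses the constraint on $\eps$); bound the diagonal by $\theta_{\max}^2\langle\theta,\eta\rangle^{m-2}$ and the noise by matrix Bernstein; and finish with Davis--Kahan. The only cosmetic difference is that you group the independent summands by unordered hyperedge while the paper groups them by permutation class $\calA_\pi$ before applying Bernstein --- both handle the symmetry-induced dependence equivalently.
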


\noindent
We have a few observations: 
\begin{enumerate}
\item {\it The required network sparsity}. Letting $d^*$ be the expected average node degree, when $\theta_{\max}\asymp \theta_{\min}$, the requirement in Theorem~\ref{thm:init2} becomes
$
d^*\gg \widetilde{\gamma}_n^{-2}\log(n). 
$ 
If $\widetilde{\gamma}_n\geq CK^{m-1} \gamma_n$, then this requirement is weaker than the requirement of network sparsity for our main algorithm (see the first remark behind Theorem~\ref{thm:SCOREmain}). In other words, this initialization works for the whole range of sparsity of interest. 

\item {\it Comparison with standard graph projection}. The standard approach to projecting a hypergraph to a weighted graph (see Section~\ref{subsec:example}) corresponds to $\epsilon=0$ and $\eta$ being fixed as ${\bf 1}_n$. In the condition \eqref{cond-eigenvalue-add}, the left hand side increases with $\epsilon$. This means, by exploring a lot of directions $\eta$ when projecting the hypergraph, we can increase $\widetilde{\gamma}_n$, so as to increase the range of sparsity that this initialization works. In particular, for examples in Section~\ref{subsec:example} where $IF_g=0$ for $\eta={\bf 1}_n$,  the randomization on $\eta$ allows us to escape from the zero-eigenvalue situation and avoid unwanted information loss. 
\end{enumerate}

To summarize, for sufficiently dense networks (e.g., the average degree is $\gg \gamma_n^{-2}\sqrt{n^{m-2}}$), we recommend using the diagonal removed HOSVD as initialization, for it avoids the issue of potential information loss. For  sparser networks, we recommend using the randomized graph projection as initialization.

\subsection{Spectral decomposition by reg-HOOI}\label{sec:spectral_HOOI}
A key component of tensor-SCORE is estimating the column space of $\Xi$ by a tensor power iteration algorithm. We now investigate the performance of reg-HOOI, which results are of independent interest. To make the theory applicable to broader settings, we drop Conditions \eqref{cond-balance}-\eqref{cond-sparsity}; instead, we let the results depend on the unbalanceness of communities and orders of singular values explicitly. Define
$
\beta(\Q)=\max_{i_1,\cdots,i_{m-1}}\sum_{i_m=1}^{n}\Q(i_1,\cdots,i_m).
$ 
This quantity governs the error bounds in theorems below. It is seen that $\beta(\Q)\leq \|\calP\|_{\max}\cdot \thetamax^{m-1}\|\theta\|_1$, but this upper bound is often loose; hence, we prefer to express results in terms of $\beta(\Q)$. Same as before, we write $d_k=\|\theta\|^{-1}(\sum_{i\in V_k}\theta_i^2)^{1/2}$, $1\leq k\leq K$, and let $\kappa_1,\ldots,\kappa_K$ denote the singular values of the matrix $G=\cM_1(\calP\times_1D\times_2\cdots\times_m D)$. 
The following theorem establishes the contraction property in the power iteration:
\begin{theorem}[Linear convergence of reg-HOOI]\label{thm:spec_power_dcbm}
Consider an hDCBM, where we apply the reg-HOOI algorithm to $\A$, with an initialization that satisfies Condition (\ref{cond:initialization}), for $\eps_0\in (0,1/4)$. Let $\hat{\Xi}^{(t)}$ be the estimation at iteration $t$, for $t\geq 1$. Suppose $
(1-3\eps_0)^{\frac{m-1}{2}}\kappa_K\|\theta\|^m\geq C_1(m!)(4K)^{\frac{m}{2}}\max\Big\{\sqrt{mn\delta^2\beta(\Q)},\, (m\log n)^{\frac{m+3}{2}}\delta^{m-2}\sqrt{\|\calP\|_{\submax}\thetamax^2\|\theta\|_1^{m-2}n\log n}\Big\}$, for some absolute constants $C_1, C_2>0$. Write
\[
\zeta_n(\Q) = 2^m\frac{\sqrt{(m!)K^{m-2}\beta(\Q)\log n}+(m!)\delta^{m-1}\log n+(m!)\kappa_1\thetamax^2\|\theta\|^{m-2}/\dmin^2}{\kappa_K\|\theta\|^m}.
\]
Then, with probability at least $1-n^{-2}$ for all $1\leq t\leq T-1$,
\[
\Bigl\|\hat{\Xi}^{(t+1)}(\hat{\Xi}^{(t+1)})^{\top} - \Xi\Xi^{\top}\Bigr\|\leq \frac{1}{2}\Bigl\|\hat{\Xi}^{(t)}(\hat{\Xi}^{(t)})^{\top} - \Xi\Xi^{\top}\Bigr\| + C_3\zeta_n(\Q), 
\]
for some absolute constant $C_3>0$. Therefore, after at most $
t_0=O\big(1\vee m\log(n\|\theta\|/\thetamax)\big)$
 iterations, with probability at least $1-n^{-2}$, 
 \[
 \Bigl\|\hat{\Xi}^{(t_0)}(\hat{\Xi}^{(t_0)})^{\top} - \Xi\Xi^{\top}\Bigr\|\leq 2C_3\zeta_n(\Q). 
 \]
\end{theorem}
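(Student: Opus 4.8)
The plan is to prove the one-step recursion first and then unroll it. Write $P=\Xi\Xi^\top$, $\hat P^{(t)}=\hat\Xi^{(t)}(\hat\Xi^{(t)})^\top$, and $\eps_t=\|\hat P^{(t)}-P\|$; I will show by induction on a single high-probability event that $\eps_t\le\eps_0<1/4$ for all $t$ and that $\eps_{t+1}\le\frac12\eps_t+C_3\zeta_n(\Q)$. To analyze one reg-HOOI step, fix $t$, let $U=\calP_{\OO^{n\times K}_\delta}(\hat\Xi^{(t)})$ be the regularized iterate (orthonormal columns, row norms $\le\delta$), and study the matrix $M=\cM_1(\A\times_2 U^\top\times_3\cdots\times_m U^\top)$ whose top-$K$ left singular subspace is $\hat\Xi^{(t+1)}$. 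Using $\A=\Q-\mathrm{diag}(\Q)+(\A-\ex\A)$ I split $M=M_{\Q}-M_{\mathrm{diag}}+M_{\mathrm{noise}}$ accordingly. A preliminary regularization lemma is needed, namely $\|UU^\top-P\|\le c\,\eps_t$: it relies on $\delta$ exceeding the maximum row norm of the true $\Xi$ (which is $O(\sqrt K\thetamax/\|\theta\|)$ by Lemma~\ref{lem:oracle-HOSVD}), so that $\Xi\in\OO^{n\times K}_\delta$ is feasible and the projection \eqref{projection-OO} is essentially non-expansive.

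The signal term $M_\Q$ supplies the contraction structure. Since $\Q=[\calP;\Theta\Pi,\ldots,\Theta\Pi]$ has each mode-$j$ factor ($j\ge2$) spanning $\mathrm{col}(\Theta\Pi)=\mathrm{col}(\Xi)$, contracting any such mode against the orthogonal complement annihilates it, i.e. $\Q\times_j\Xi_\perp^\top=0$. Decomposing $U=\Xi(\Xi^\top U)+\Xi_\perp(\Xi_\perp^\top U)$, only the in-subspace parts survive, so $\mathrm{col}(M_\Q)\subseteq\mathrm{col}(\Xi)$ exactly and $M_\Q=(\Q\times_2\Xi^\top\cdots\times_m\Xi^\top)\times_2(\Xi^\top U)^\top\cdots\times_m(\Xi^\top U)^\top$. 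This gives $\sigma_K(M_\Q)\ge\kappa_K\|\theta\|^m\,\sigma_{\min}(\Xi^\top U)^{m-1}\ge(1-3\eps_0)^{(m-1)/2}\kappa_K\|\theta\|^m$, after using $\sigma_{\min}(\Xi^\top U)^2=1-\|\Xi_\perp^\top U\|^2$ together with $\|\Xi_\perp^\top U\|\le\|UU^\top-P\|\le c\eps_0$; note this quantity is precisely the left-hand side of the SNR hypothesis. A Wedin-type $\sin\Theta$ bound applied to $M=M_\Q+(M_{\mathrm{noise}}-M_{\mathrm{diag}})$, valid because the SNR hypothesis forces $\sigma_K(M_\Q)$ to dominate the perturbation, then yields $\eps_{t+1}\lesssim\|(I-P)(M_{\mathrm{noise}}-M_{\mathrm{diag}})\|/\sigma_K(M_\Q)$.

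The decisive estimate separates $\|M_{\mathrm{noise}}-M_{\mathrm{diag}}\|$ into a floor and an $\eps_t$-proportional piece. Expanding the noise contraction with $U=\Xi(\Xi^\top U)+\Xi_\perp(\Xi_\perp^\top U)$ in each of modes $2,\ldots,m$, the fully in-subspace term reduces to $(\A-\ex\A)$ contracted against the \emph{fixed} subspace $\Xi$, a cheap $K$-dimensional concentration giving the floor $\sqrt{(m!)K^{m-2}\beta(\Q)\log n}+(m!)\delta^{m-1}\log n$; every remaining cross term carries at least one factor $\|\Xi_\perp^\top U\|\le c\eps_t$ times the uniform quantity $\sup_{V\in\OO^{n\times K}_\delta}\|(\A-\ex\A)\times_2 V^\top\times_3\cdots\times_m V^\top\|$, while $M_{\mathrm{diag}}$ contributes the floor $(m!)\kappa_1\thetamax^2\|\theta\|^{m-2}/\dmin^2$. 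Dividing by $\sigma_K(M_\Q)$, the floor becomes $\lesssim\zeta_n(\Q)$ while the cross-term coefficient $\lesssim(m-1)\sup_{V}\|\cdots\|/[(1-3\eps_0)^{(m-1)/2}\kappa_K\|\theta\|^m]$ is driven below $1/2$ by the SNR hypothesis; the asymmetry here is essential, as the larger uniform noise bound only multiplies $\eps_t$ whereas the additive term uses the cheaper fixed-subspace bound. This yields $\eps_{t+1}\le\frac12\eps_t+C_3\zeta_n(\Q)$, and unrolling gives $\eps_t\le 2^{-t}\eps_0+2C_3\zeta_n(\Q)$, so any $t_0\gtrsim\log(\eps_0/\zeta_n(\Q))$ achieves $\eps_{t_0}\le 2C_3\zeta_n(\Q)$; the explicit form of $\zeta_n(\Q)$ bounds $\log(1/\zeta_n(\Q))$ by $O(1\vee m\log(n\|\theta\|/\thetamax))$.

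I expect the main obstacle to be the uniform concentration bound on $\sup_{U\in\OO^{n\times K}_\delta}\|(\A-\ex\A)\times_2 U^\top\times_3\cdots\times_m U^\top\|$, which must hold on a \emph{single} event so that no union bound over the $t_0$ iterations is incurred despite each $\hat\Xi^{(t)}_*$ being data-dependent. Establishing it requires an $\eps$-net over the row-norm-constrained set $\OO^{n\times K}_\delta$ combined with a matrix-Bernstein estimate for the sparse Bernoulli noise, in which the row bound $\delta$ controls the variance proxy (the $\beta(\Q)$ term) and boundedness controls the large-deviation term ($\delta^{m-1}\log n$). This is exactly where regularization is indispensable: for an unconstrained orthonormal $U$ the contracted sparse noise is not controllable, and this step is what fixes the constants $(m!)$ and $(4K)^{m/2}$ entering the SNR hypothesis.
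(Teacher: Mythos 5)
Your overall architecture matches the paper's proof almost exactly: the Keshavan-type regularization lemma giving $\|UU^{\top}-\Xi\Xi^{\top}\|\lesssim E_t$ and $\max_j\|e_j^{\top}U\|\leq 2\delta$, the split $\A=\Q-\mathrm{diag}(\Q)+(\A-\EE\A)$, the observation that the signal contraction stays in $\mathrm{col}(\Xi)$ with $\sigma_K\geq (1-3\eps_0)^{(m-1)/2}\kappa_K\|\theta\|^m$, the $\calP_{\Xi}$/$\calP_{\Xi}^{\perp}$ expansion of the noise into a fixed-subspace ``floor'' term plus cross terms carrying a factor $\lesssim E_t$, and Davis--Kahan plus unrolling. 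The gap sits entirely in the step you yourself flag as the main obstacle, and it is genuine. First, the uniform quantity you propose, $\sup_{V\in\OO^{n\times K}_{\delta}}\|(\A-\EE\A)\times_2 V^{\top}\times_3\cdots\times_m V^{\top}\|$, is not the right object: in each cross term one mode is contracted against $\calP_{\Xi}^{\perp}U$, and after you factor out $\|\calP_{\Xi}^{\perp}U\|\lesssim E_t$ the remaining direction in that mode is an arbitrary unit vector whose entries are no longer bounded by $\delta$ (dividing by the small operator norm destroys the row-wise incoherence). Together with the mode-$1$ direction coming from the matricized operator norm, this leaves \emph{two} unconstrained directions, which is exactly why the paper works with the incoherent norm $\|\cdot\|_{\delta}$ built on the class $\calU_{j_1j_2}(\delta)$ that is free in two modes (Theorem~\ref{thm:tensor_concentration}), rather than with a supremum over $\OO^{n\times K}_{\delta}$ in all contracted modes.

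Second, and more seriously, the proposed proof of the uniform bound ($\eps$-net over the constrained set plus matrix Bernstein) does not deliver the rate that the theorem's SNR hypothesis presumes. A net of $\OO^{n\times K}_{\delta}$ (or of $\delta$-incoherent unit vectors) has log-cardinality of order $mn$, so the Bernstein large-deviation term becomes (max summand) times $mn$, i.e.\ of order $(m!)\delta^{m-2}\cdot mn$ --- not $\delta^{m-1}\log n$, which is the floor term of Lemma~\ref{lem:R1bound} where the subspace is fixed and only a $\log n$ enters. In the sparse regime the paper targets, $n\delta^{m-2}$ is polynomially larger than the term $(m\log n)^{\frac{m+3}{2}}\delta^{m-2}\sqrt{\|\calP\|_{\submax}\thetamax^2\|\theta\|_1^{m-2}n\log n}$ appearing in the stated hypothesis, so your bound cannot drive the contraction coefficient below $1/2$ under that hypothesis, and the theorem as stated does not follow. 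Closing this requires the sparsity-aware chaining in the paper's proof of Theorem~\ref{thm:tensor_concentration}: after symmetrization and discretization, the net is stratified by entry-magnitude level sets $A_p$, $B_{p,q}$, and the entropy of the ``spiky'' strata is controlled through the aspect ratios $\nu_1(\Omega)$, $\nu_{1,2}(\Omega)$ of the realized hyperedge set, which is what replaces the factor $n$ by $\sqrt{\nu_1^{\star}n\log n}$ up to polylogarithmic terms. Everything else in your argument is sound and coincides with the paper's route.
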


Theorem~\ref{thm:spec_power_dcbm} essentially applies to any random Bernoulli tensor whose expectation has a low rank. Unlike the study of random Gaussian tensors (e.g., \cite{zhang2018tensor}), the study of Bernoulli tensors requires technical efforts to deal with the extreme sparsity. A typical step in the analysis of power iteration algorithms is to derive a sharp concentration inequality for the tensor operator norm of $(\A-\EE \A)$. Unfortunately, for Bernoulli tensors, even the best bound for $\|\A-\EE\A\|$ is too large to yield the desired result in Theorem~\ref{thm:spec_power_dcbm}. Alternatively, we borrow the notion of incoherent tensor operator norm, which was initially introduced by \cite{yuan2017incoherent}. Our design the reg-HOOI algorithm guarantees that it has satisfying performance as long as the incoherent operator norm of $(\A-\EE \A)$ is reasonably small. Then, our main technical effort is on deriving a sharp concentration inequality for the incoherent operator norm of $(\A-\EE \A)$. 

In detail, for any $\delta\in[n^{-1/2},1]$, let 
$
\calU_{j_1j_2}(\delta)=\big\{ \bu_1\otimes\cdots\otimes \bu_k: \|\bu_j\|\leq 1, \forall j;\ \|\bu_j\|_{\submax}\leq \delta, j\neq j_1,j_2\big\}.
$
This is the set of all rank-one tensors satisfying incoherent conditions in “directions” other than $j_1$ and $j_2$. 
The {\it $\delta$-incoherent operator norm} of a $k$-th order tensor $\calT$ is defined by  
$
\|\calT\|_{\delta}=\sup_{\calX\in\calU(\delta)} \big<\calT,\calX\big>,  \mbox{where } \calU(\delta)=\bigcup_{1\leq j_1<j_2\leq k} \calU_{j_1j_2}(\delta). 
$
Here, $\calU(\delta)$ is the collection of all rank-one tensors satisfying certain incoherence conditions in all but two directions. When $\delta=1$, it reduces to the regular tensor operator norm.
\begin{theorem}[Concentration inequality of incoherent operator norm]\label{thm:tensor_concentration}
Consider an hDCBM, where $\|\calP\|_{\max}\thetamax\|\theta\|_1^{m-1}\geq \log n$. 
There are absolute constants $C_1,C_2>0$ such that for all $t\geq 
(m!)2^m\cdot \max\Big\{C_1\sqrt{m^5n\delta^2\beta(\Q)}\log^2n,\  C_2m(m\log n)^{\frac{(m+7)}{2}}\delta^{m-2}\sqrt{\|\calP\|_{\submax}\thetamax^2\|\theta\|_1^{m-2}n\log n} \Big\}$,  
\begin{align*}
\PP&\big(\|\calA-\EE\calA\|_{\delta}\geq 3t\big)\leq 2n^{-2}\\
& + (10m^3\log^2 n)\exp\Big(-\frac{2^{-2m-2}t^2}{8(m!)\delta^2\beta(\Q)m^4\log^4n}\Big)+(10m^3\log^2 n)\exp\Big(-\frac{3}{16}\cdot \frac{2^{-(m+1)}t}{2(m!)\delta^{m-2}m^2\log^2n}\Big). 
\end{align*}
\end{theorem}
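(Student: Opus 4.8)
The plan is to recognize $\|\calA-\EE\calA\|_{\delta}$ as a supremum of a bilinear form in two ``free'' modes over a family of random matrices indexed by the remaining $m-2$ incoherent modes, and to combine a matrix Bernstein inequality (for each fixed choice of incoherent directions) with an $\epsilon$-net argument (for the supremum over those directions). By super-symmetry of $\calA$ it suffices to analyze a single pair of free modes, say $\calU_{12}(\delta)$, and pay a factor $\binom{m}{2}\le m^2$ in the final union bound. Fixing unit vectors $\bu_3,\dots,\bu_m$ with $\|\bu_j\|_{\submax}\le\delta$, the restriction of $\langle\calA-\EE\calA,\,\cdot\,\rangle$ to $\calU_{12}(\delta)$ equals $\sup_{\|a\|=\|b\|=1}a^{\top}Mb=\|M\|$, where $M=M(\bu_3,\dots,\bu_m)$ is the $n\times n$ random matrix obtained by contracting $\calA-\EE\calA$ against $\bu_3,\dots,\bu_m$ in modes $3,\dots,m$ (the diagonal convention forcing coincident indices to vanish). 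Writing $M$ as a sum over the independent Bernoulli hyperedge variables $\{\calA(e)-\EE\calA(e)\}$ — each unordered hyperedge contributing, after the $m!$-fold symmetrization (and, where needed, a decoupling step) that converts ordered tuples to sets, a bounded random matrix — sets up the Bernstein step and is the source of the combinatorial $m!$ and $2^m$ factors.

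Next I would extract the two Bernstein parameters. Because each incoherent factor obeys $|\bu_j(i_j)|\le\delta$ while the two free modes carry only $\ell^2$ bounds, the coefficient attached to each hyperedge is at most $\delta^{m-2}$, so the per-summand operator norm is $R\lesssim(m!)\,\delta^{m-2}$; this produces the linear exponential term whose denominator is $\propto(m!)\delta^{m-2}m^2\log^2n$. For the variance proxy I would bound the conditional second moments $\EE[(\calA(e)-\EE\calA(e))^2]\le\Q(e)$ and sum them, using $\beta(\Q)=\max_{i_1,\dots,i_{m-1}}\sum_{i_m}\Q(i_1,\dots,i_m)$ to control the operator norm of the expected square, which yields $\sigma^2\lesssim(m!)\,\delta^2\beta(\Q)$ and the Gaussian exponential term. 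A preliminary high-probability ``good event'' — a Chernoff/Bernstein bound ruling out atypically heavy contractions (e.g.\ anomalously large degrees) — is what legitimizes the variance bound in the extremely sparse regime and contributes the leading $2n^{-2}$.

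Finally the supremum over the incoherent directions must be discretized. A spherical $\epsilon$-net on $\{\bu:\|\bu\|\le1\}$ has cardinality $e^{cn}$, which a naive union bound could not afford; the incoherence constraint $\|\bu_j\|_{\submax}\le\delta$ is essential, and the threshold on $t$ is deliberately inflated by a $\sqrt{n}$ factor (the $\sqrt{m^5 n\delta^2\beta(\Q)}$ term) precisely so that the Gaussian tail $\exp(-t^2/\sigma^2)$ dominates the $e^{cn}$ net cardinality, leaving only the polylogarithmic prefactor $10m^3\log^2n$. I would build the net by a dyadic decomposition of the coordinates of each $\bu_j$ by magnitude, round on each of the $O(\log n)$ scales, and control the discretization error uniformly; this chaining over scales accounts for the extra $\log$-powers in the variance denominator ($\log^4n$) and in the threshold exponent $(m\log n)^{(m+7)/2}$.

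I expect the main obstacle to be exactly this net/chaining step: one must simultaneously (i) preserve the $\ell^\infty$ incoherence (up to constants) along the net so that the per-point parameters $R$ and $\sigma^2$ remain valid, (ii) obtain only polylogarithmic — rather than exponential — residual prefactors after absorbing the net cardinality into the inflated threshold, and (iii) carry the $m!$-fold symmetrization and the removal of coincident-index (diagonal) terms through the estimates without degrading the dependence on $m$. By contrast, the matrix Bernstein step and the computation of $R$ and $\sigma^2$ are comparatively routine once the net is in place.
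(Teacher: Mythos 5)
Your skeleton matches the paper's in several respects: reduction by symmetry to a single pair of free modes, a Bernstein bound with range $\asymp(m!)\delta^{m-2}$ and variance proxy $\asymp(m!)^2\delta^2\beta(\Q)$, a dyadic (magnitude-level) discretization of the test vectors, and a preliminary good event of probability $1-2n^{-2}$ controlling atypical degree-type quantities. But there is a genuine gap at exactly the step you flag as the main obstacle, and your proposed resolution of it does not work. You argue that the threshold on $t$ is inflated by $\sqrt{n}$ ``precisely so that the Gaussian tail $\exp(-t^2/\sigma^2)$ dominates the $e^{cn}$ net cardinality.'' That rescues only the sub-Gaussian branch of Bernstein's inequality. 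The sub-exponential branch has tail $\exp(-ct/R)$ with $R\asymp(m!)\delta^{m-2}$, and to absorb a net of cardinality $e^{cmn}$ there you would need $t\gtrsim n\delta^{m-2}$. The stated threshold only requires $t\gtrsim \delta^{m-2}(m\log n)^{(m+7)/2}\sqrt{\|\calP\|_{\submax}\thetamax^2\|\theta\|_1^{m-2}\,n\log n}$, which in the sparse regime is of order $\delta^{m-2}\sqrt{n}\cdot\mathrm{polylog}(n)$ — far smaller than $n\delta^{m-2}$. So a union bound over a full $e^{cn}$-size net fails for the linear tail, and the theorem as stated cannot be reached this way.

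The missing idea is a sparsity-adaptive entropy bound, which is the heart of the paper's argument (following the incoherent-norm technique of Yuan--Zhang and of \cite{xia2017statistically}). After symmetrizing with a Rademacher tensor, the paper conditions on the event that the aspect ratios $\nu_{1,2}(\Omega)$ and $\nu_1(\Omega)$ of the \emph{observed} hyperedge set are at most $O(\|\calP\|_{\submax}\thetamax^2\|\theta\|_1^{m-2}+\log n)$ and $O(\|\calP\|_{\submax}\thetamax\|\theta\|_1^{m-1})$ (this Chernoff step is the true source of the $2n^{-2}$ term, not a fix for the variance bound as you suggest). It then peels each discretized test tensor into level sets $A_p$ of $|u_1(i_1)u_2(i_2)|$ and the induced sets $B_{p,q}\subset\Omega$; because $|B_{p,q}|\le\nu_{1,2}(\Omega)|A_p|$ and the fibers have bounded aspect ratio, the relevant function class on level $p$ with $|A_p|\le 2^{p-\ell}$ has log-cardinality $O\bigl(\sqrt{\nu^{\star}\,2^{p-\ell}\,n\log n}\cdot(m\log n)^{m}\bigr)$ rather than $O(mn)$, while the range bound simultaneously improves to $2^{-p/2}\delta^{m-2}$. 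Balancing this reduced entropy against the improved range on each of the $O(m^2\log^2 n)$ levels $(p,q,\ell)$ is what makes the sub-exponential branch close at the stated threshold and produces the $(m\log n)^{(m+7)/2}$ and $10m^3\log^2 n$ factors. Without this conditioning-plus-peeling step (or an equivalent generic-chaining argument that exploits the sparsity of $\Omega$), the proof does not go through in the sparse regime that the theorem is designed to cover.
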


\noindent
{\bf Remark {\it (Why it is crucial to use the incoherent norm)}}: When $\delta$ is appropriately small, the incoherent norm of $(\calA-\EE\calA)$ can be much smaller than its canonical operator norm. For example, under the conditions of Theorem~\ref{thm:tensor_concentration}, if $K,m$ are bounded, $\thetamax\asymp \thetamin$ and $\delta\asymp n^{-1/2}$, we can show that, with high probability,
\[
\|\EE\calA\|\lesssim \kappa_1\|\theta\|^m, \qquad \|\calA-\EE\calA\|\gtrsim 1, \qquad \|\calA-\EE\calA\|_{\delta}\lesssim \sqrt{\|\calP\|_{\submax}\cdot n\thetamax^{m}}. 
\]
In the sparse regime, it is possible that $\|\EE\calA\|=o(1)$. This means, if we build our analysis on the concentration of $\|\calA-\EE\calA\|$, we will never be able to get any consistency results. It is thus necessary to use the incoherent norm.

\section{Experiment Results} \label{sec:experiment}
We evaluate the performance of tensor-SCORE on a wide range of synthetic and real-world networks. For comparison, four other popular methods for community detection on hypergraphs are also considered here. As mentioned in Section~\ref{sec:intro}, the most common approach is to project a hypergraph to a binary graph and then use any existing community detection method for graphs. This is our first candidate for comparison and it is referenced as ``binary projection". A variant of this approach is to weight the edges in the projected graph by the counts of hyperedges, which we refer to as ``weighted projection". In both situations we use the well-known method SCORE for community detection. An improvement upon the ad hoc projection methods is developed by \cite{ghoshdastidar2017consistency} which constructs a normalized Laplacian matrix from a hypergraph and applies spectral clustering on the Laplacian matrix. This method is named NHCut. That paper does not deal with degree heterogeneity, but the authors added an eigenvector normalization step to improve its performance. Therefore, they were implicitly using the idea of SCORE. 
The last candidate is HOSVD which applies higher-order SVD to adjacency tensors and clusters the rows of the resulting singular matrices; it is the one most similar to our method in spirit. In all the experiments, we use the default values suggested above for the parameters in tensor-SCORE 
unless otherwise stated; there are no tuning parameters for the other methods except for the number of communities. The metric for performance is the clustering error against ground-truth communities as defined in \eqref{eq:clustererror}, which is the percent of nodes assigned to wrong communities. Two initializations are used for tensor-SCORE: tensor-SCORE-1 uses NHCut and tensor-SCORE-2 uses HOSVD. 

\subsection{Synthetic Networks}  \label{subsec:synthetic}
We first evaluate tensor-SCORE on a variety of synthetic networks generated by the hypergraph degree-corrected block model (hDCBM) (Section \ref{sec:method}). The hDCBM consists of $n$ nodes divided into 2 equally sized communities. The node degree parameters $\theta_i$ are drawn from a power-law distribution $\theta_i^{-\alpha}$ with smaller $\alpha$ corresponding to more heterogeneous degree distributions. Edge size is fixed at $m=3$ meaning that the simulated networks are 3-uniform hypergraphs. (We will evaluate the methods on non-uniform real networks.) The likelihood for nodes from same or different communities to form edges are controlled by the $2\times2\times2$ core tensor $\calP$. 

In the first set of experiments, we vary degree heterogeneity parameter $\alpha$ while keeping other parameters fixed at $n=300$, $\calP_{111}=\calP_{222}=0.3$, $\calP_{122}=0.2$, $\calP_{112}=0$. Table \ref{tab:synthetic}(A) presents the average clustering error for each method over 20 simulations (same below). Most real networks are found with very heterogenous degree distributions ($\alpha\in[2,3]$), rendering community detection a challenging task. But tensor-SCORE outperforms all the other methods regardless of $\alpha$, and achieves low error rate even at $\alpha=2$. In the second set of experiments, we gradually increase the likelihood to form edges between communities ($\calP_{122}$), making the community structure weaker (the other parameters are fixed at $n=300$, $\calP_{111}=\calP_{222}=0.3$, $\calP_{112}=0$, $\alpha=2$). Table \ref{tab:synthetic}(B) shows the average clustering error for each method as $\calP_{122}$ increases. In general, error rate increases with $\calP_{122}$ for all the methods; but tensor-SCORE still outperforms the others by a big margin. As $\calP_{122}$ approaches 0.5, the graph-based methods are subjected to information loss (recalling Fig.~\ref{fig:example}), which is clear from their high error rates, while tensor-SCORE still achieves a relatively low error rate. The last set of experiments adjust the size of the hypergraphs (i.e., number of nodes $n$). The other parameters are fixed at $\calP_{111}=\calP_{222}=0.3$,  $\calP_{122}=0.2$, $\calP_{112}=0$, $\alpha=2$.  and the results are shown in Table \ref{tab:synthetic}(C). As the graph size grows the ``structure signal" also becomes stronger and the community detection results get better in general. NHCut achieves comparable and even lower error rate than tensor-SCORE when $n=400$ and 500. It is reasonable since this is a case with strong signals and no information loss. 

\begin{table} 
 \caption{Performance of the community detection methods on synthetic hypergraphs simulated from the hDCBM under different parameter settings: (A) varying degree heterogeneity parameter $\alpha$; (B) varying likelihood to form edges between communities $\calP_{122}$; (C) varying number of nodes $n$. The numbers in columns 2 - 6 are the percent of misclassified nodes against ground truth (clustering error). Each value of clustering error is the average over 20 simulations. 
Tensor-SCORE-1 uses NHCut as initialization while Tensor-SCORE-2 uses HOSVD as initialization. 
The lowest clustering error under each parameter setting is in bold face. } \label{tab:synthetic}
 \centering
\begin{tabular}{@{}p{1cm}p{1cm}rrrrrr@{}}
\hline
&$\alpha$ & \multicolumn{1}{c}{\begin{tabular}[c]{@{}r@{}}Binary\\ Projection\end{tabular}} & \multicolumn{1}{c}{\begin{tabular}[c]{@{}r@{}}Weighted\\ Projection\end{tabular}} & \multicolumn{1}{c}{NHCut} & \multicolumn{1}{c}{HOSVD} & \multicolumn{1}{c}{\begin{tabular}[c]{@{}r@{}}Tensor\\ SCORE-1\end{tabular}}& \multicolumn{1}{c}{\begin{tabular}[c]{@{}r@{}}Tensor\\ SCORE-2\end{tabular}}\\ 
(A)&2        & 44.8\%   & 46.5\%     & 5.1\%    & 45.5\% & \textbf{0.3\%}&6.3\%   \\
&3        & 7.9\%   & 33.4\%     & 4.4\%     &  13.6\%    & \textbf{3.1\%}& 7.5\%                                                            \\
&4        & 4.6\%       & 40.0\%        & 2.2\%   &   19.1\%   & \textbf{1.3\%}&9.5\%                                                             \\
&5        & 1.1\%         & 38.5\%       & 0.3\%       &      1.7\%   & \textbf{0\%}&\textbf{0\%}                                                               \\ 
\hline

&$\calP_{122}$ & \multicolumn{1}{c}{\begin{tabular}[c]{@{}r@{}}Binary\\ Projection\end{tabular}} & \multicolumn{1}{c}{\begin{tabular}[c]{@{}r@{}}Weighted\\ Projection\end{tabular}} & \multicolumn{1}{c}{NHCut} & \multicolumn{1}{c}{HOSVD} & \multicolumn{1}{c}{\begin{tabular}[c]{@{}r@{}}Tensor\\ SCORE-1\end{tabular}}& \multicolumn{1}{c}{\begin{tabular}[c]{@{}r@{}}Tensor\\ SCORE-2\end{tabular}} \\ 
&0.2        & 7.9\%   & 33.4\%     & 4.4\%     &  13.6\%    & \textbf{3.1\%}&7.5\%       \\
(B)&0.3        & 29.3\%    & 40.6\%    & 10.8\%  &   6.5\%   & \textbf{0.7\%}&0.8\%                                                             \\
&0.4        & 39.5\%    & 45.7\%     & 35.7\%     &  1.6\%    & 22.8\%&\textbf{0\%}                                                             \\
&0.5        & 34.1\%    & 47.1\%     & 45.5\%       &   1.5\%  & 17.4\%&\textbf{0\%}    \\ 
\hline

&$n$ & \multicolumn{1}{c}{\begin{tabular}[c]{@{}r@{}}Binary\\ Projection\end{tabular}} & \multicolumn{1}{c}{\begin{tabular}[c]{@{}r@{}}Weighted\\ Projection\end{tabular}} & \multicolumn{1}{c}{NHCut} & \multicolumn{1}{c}{HOSVD} & \multicolumn{1}{c}{\begin{tabular}[c]{@{}r@{}}Tensor\\ SCORE-1\end{tabular}}& \multicolumn{1}{c}{\begin{tabular}[c]{@{}r@{}}Tensor\\ SCORE-2\end{tabular}} \\ 
&200        & 11.3\%    & 34.4\% & 8.2\%  &  22.5\% & \textbf{5.9\%}& 16.9\% \\
(C)&300         & 7.9\%   & 33.4\%     & 4.4\%     &  13.6\%    & \textbf{3.1\%}&7.5\%                                                             \\
&400        & 5.2\%    & 30.0\%     & \textbf{0.4\%}     &  5.9\%    & 2.7\%&4.4\%                                                        \\
&500        & 6.4\%    & 40.2\%     & \textbf{2.6\%}       &   21.1\%  & 3.8\%&15\%     \\ 
\hline
\end{tabular}
\end{table}
Next, we further demonstrate the power of tensor-SCORE on real-world hypergraph networks. In real-world data analysis, tensor-SCORE uses HOSVD as initialization to avoid potential information loss. 

\subsection{Legislator Network}  \label{subsec:Legislator}
The first real-world hypergraph under consideration is a network of legislators in the Congress of the Republic of Peru  \cite{lee2016time}. The legislators in the Peruvian Congress constitute the nodes in this hypergraph network, and a group of nodes are connected by a hyperedge if they sponsored the same bill. In total there are 130 nodes (i.e., legislators) and 3687 hyperedges which correspond to all the bills proposed from 2006 to 2011 in the Peruvian Congress. This is a typical example where group interactions dominate, and hence hypergraphs provide a more faithful representation of the network than do dyadic ties. However, due to the lack of tools for hypergraph analysis, researchers often project this kind of networks to graphs for further study, which might result in information loss.  Here we empirically demonstrate that tensor-SCORE better captures the party affiliation of the legislators than other graph methods and even those designed for hypergraphs. Before turning to the results, we note that the Peruvian Congress experienced a significant amount of restructuring during 2006 to 2011 and there were frequent shifts of the legislators' political identities. Therefore, we only include bills proposed in the first half of the 2006-2007 political year to construct the hypergraph when the Congress is relative stable and there were 7 parties at that time. This leaves us with 120 nodes and 802 hyperedges.

As there are 7 parties, we set the number of communities ($K$) to be 7. Accordingly, the output from tensor-SCORE is a factor matrix of $K-1=6$ columns. We then cluster the rows of this matrix into 7 communities using K-means clustering. To visualize the clustering result, we treat each row $i$ as an embedding of node $i$ in $R^6$, and then project the embeddings into $R^2$ through MDS (multidimensional scaling) for visualization. The result is shown in Figure \ref{fig:legislator}A with nodes colored by their true party affiliations. Four communities clearly stand out in the plot which correspond to the four large parties. The 3 small parties are almost clustered together as they are small (two of them have only two members) and often vote with others strategically. As an illustration of how tensor-SCORE improves upon initial values and reg-HOOI result, we also visualize the positions of the nodes in the initial factor matrix (Figure \ref{fig:legislator}B) and the positions of the nodes in the output of reg-HOOI (before normalizing the rows) (Figure \ref{fig:legislator}C).
\begin{figure}[htbp]
\begin{center}
\includegraphics[width=0.33\textwidth]{fig/legislator_hooi+score.pdf}%
\includegraphics[width=0.33\textwidth]{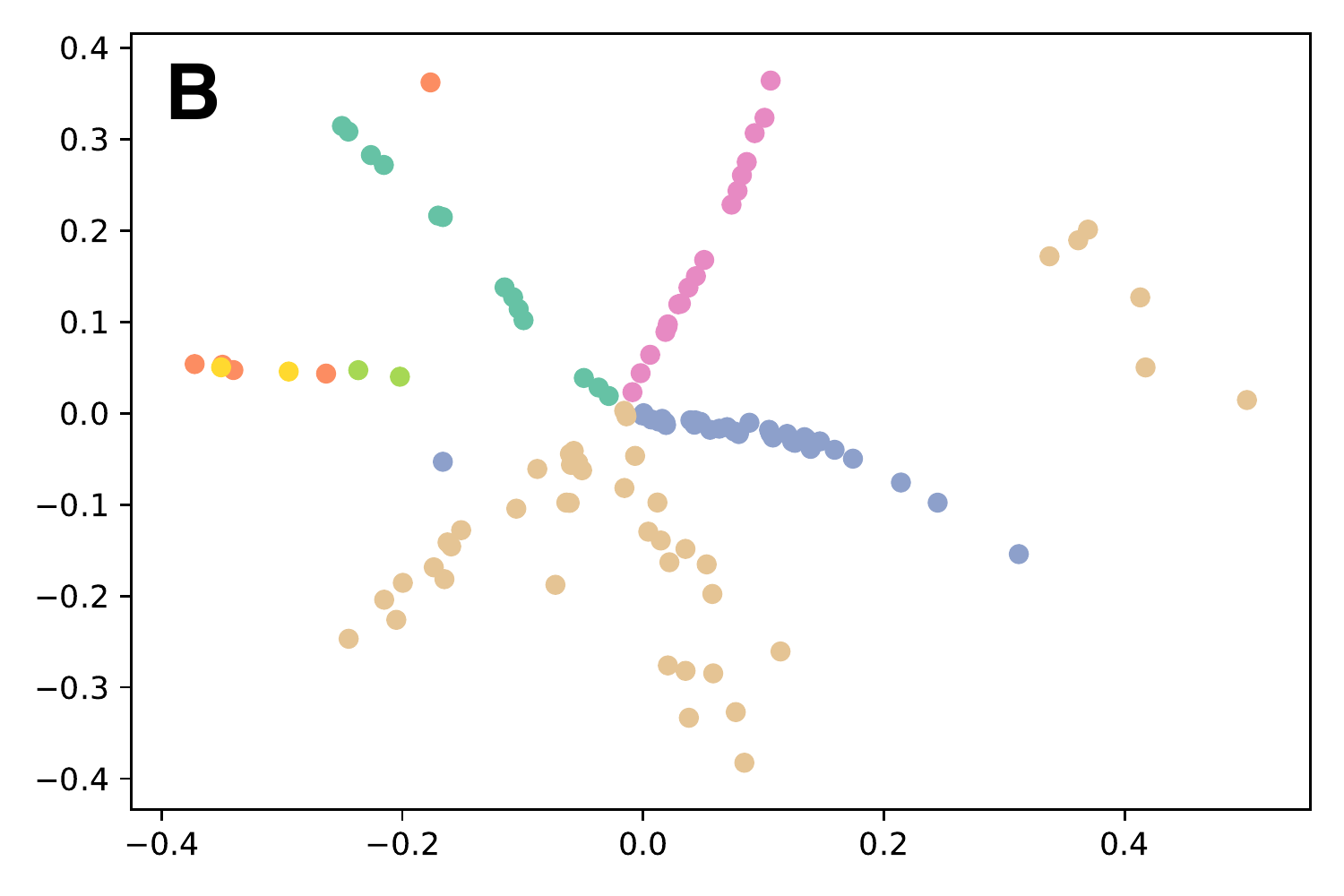}%
\includegraphics[width=0.33\textwidth]{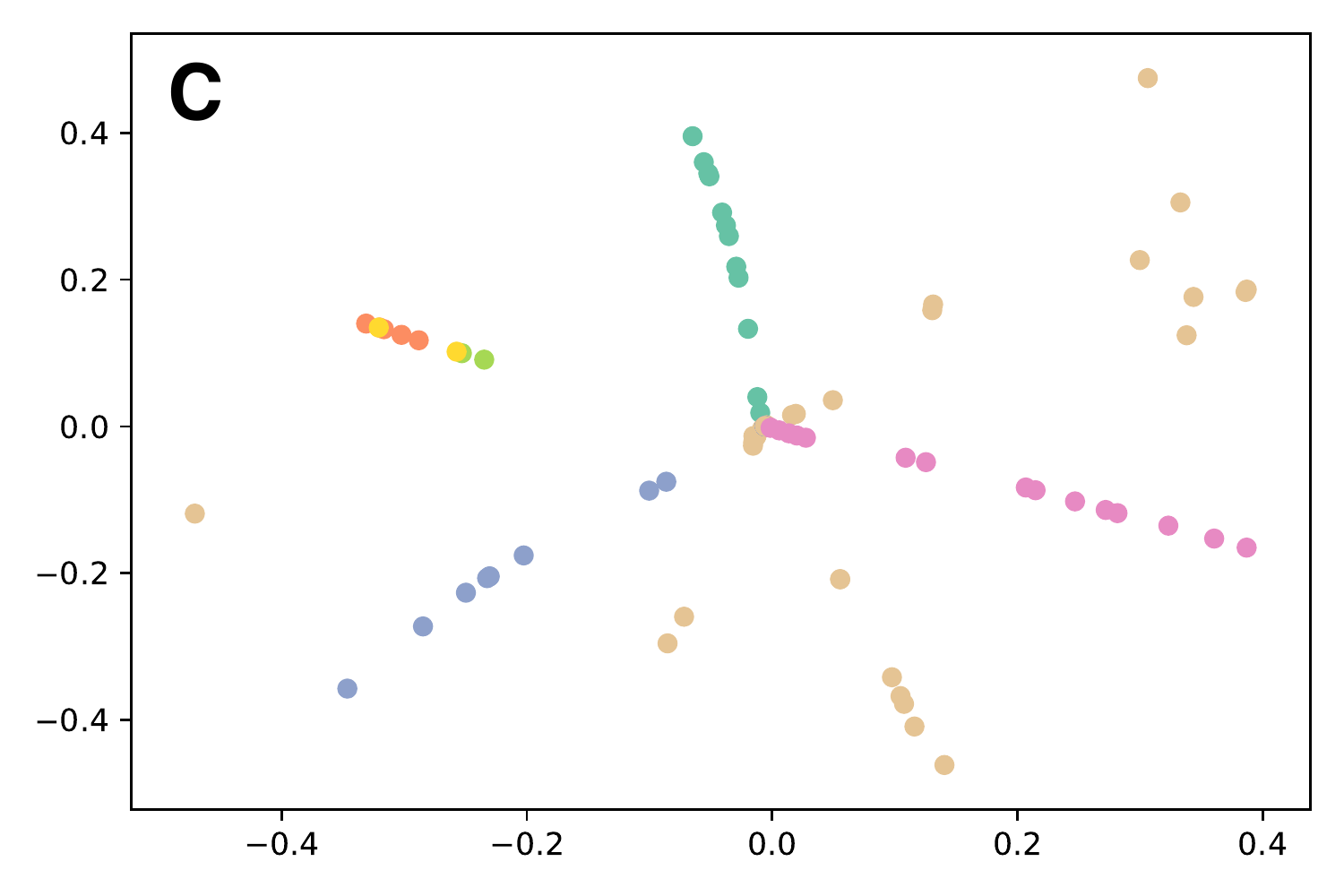}%
\caption{Clustering results from tensor-SCORE on the legislator network in the Peruvian Congress. Each dot denotes a node and its position is determined by the corresponding row in the factor matrix output by tensor-SCORE (A), the initial factor matrix (B), and the factor matrix output by reg-HOOI before row normalization (C). The nodes are colored by their party affiliations.}
\label{fig:legislator}
\end{center}
\end{figure}

Compared to the ground-truth party affiliation, tensor-SCORE only misclassifies 14 nodes among 120 and achieves a low error rate of 11.7\%. For comparison, clustering error from other community detection methods is summarized in Table \ref{tab:realdata}. We can see that on this dataset tensor-SCORE outperforms all the other methods by a big margin. Its error rate is about 5 times smaller than HOSVD, 4 times smaller than binary and weighted projections, and 2 times smaller than NHCut. 

\subsection{Disease network from MEDLINE}
We next examine a hypergraph network of diseases extracted from the MEDLINE database. The database contains more than 20 million papers published in biomedical sciences. Each paper is annotated with MeSH terms which indicate what chemicals, diseases, proteins, and other concepts are used in the paper. From the papers published in 1960 we construct a hypergraph of diseases, where nodes are MeSH terms for diseases of two kinds: Neoplasms (C04) and Nerve System Diseases (C10), and hyperedges are papers. The disease type will be used as the ground truth to assess the community detection methods. There are 9351 hyperedges and 190 nodes with 140 nodes in the Neoplasms community and 50 nodes in the other. We note that although there are only 2 types of diseases, we set $K=6$ in the implementation of reg-HOOI algorithm to avoid information loss as the real data might not be generated from exactly a rank-2 model. In the end, the nodes are still divided into 2 clusters in the $k$-means clustering. 

The clustering results are visualized in Figure \ref{fig:disease}, following the format of visualization for the legislator data.  Compared to the ground-truth disease type, tensor-SCORE only misclassifies 8 cases among 190 and achieves a low error rate of 4.2\%. It achieves the same error rate as NHCut but again outperforms the other methods. The comparisons are summarized in Table \ref{tab:realdata}.

\begin{figure}[htbp]
\begin{center}
\includegraphics[width=0.33\textwidth]{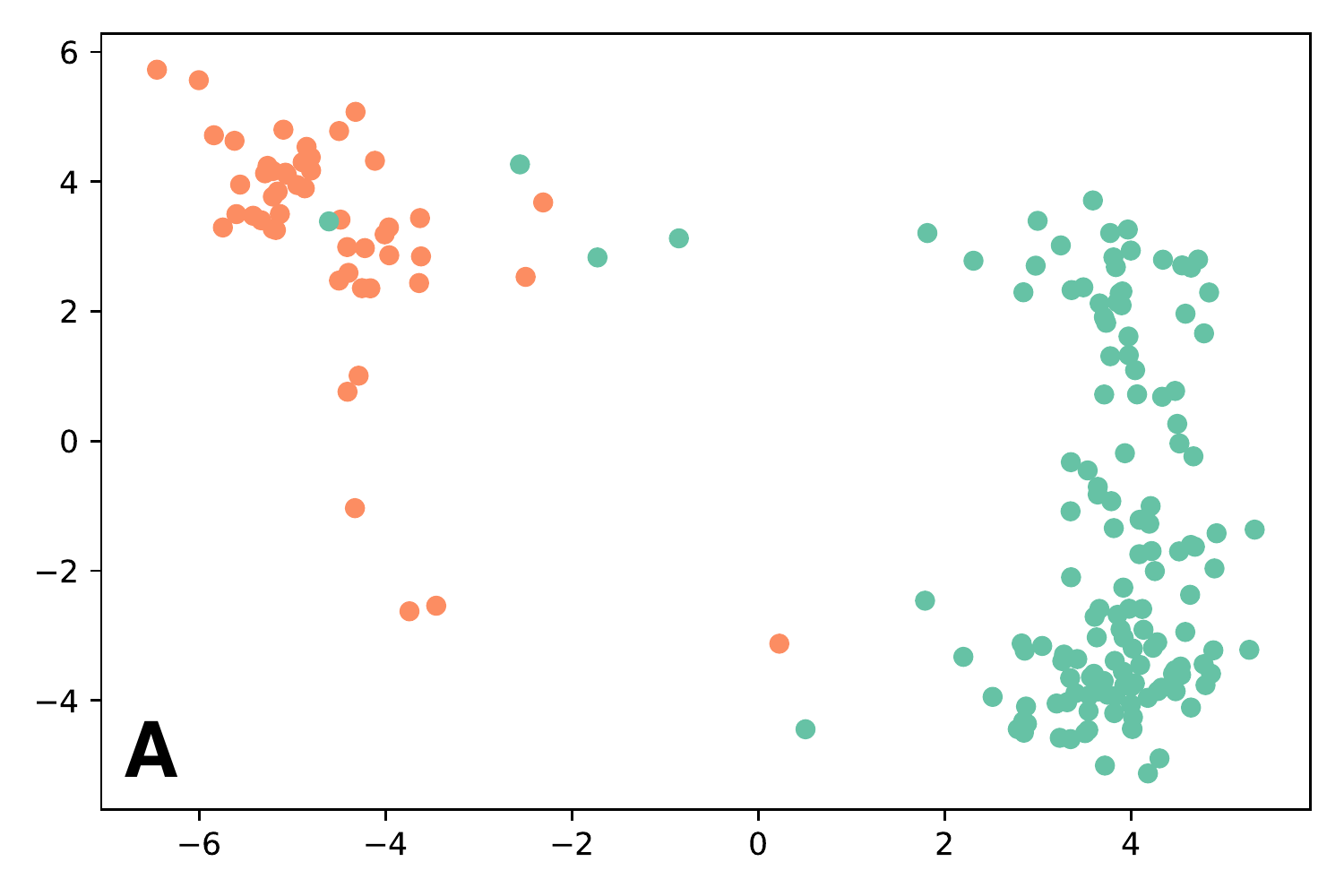}%
\includegraphics[width=0.33\textwidth]{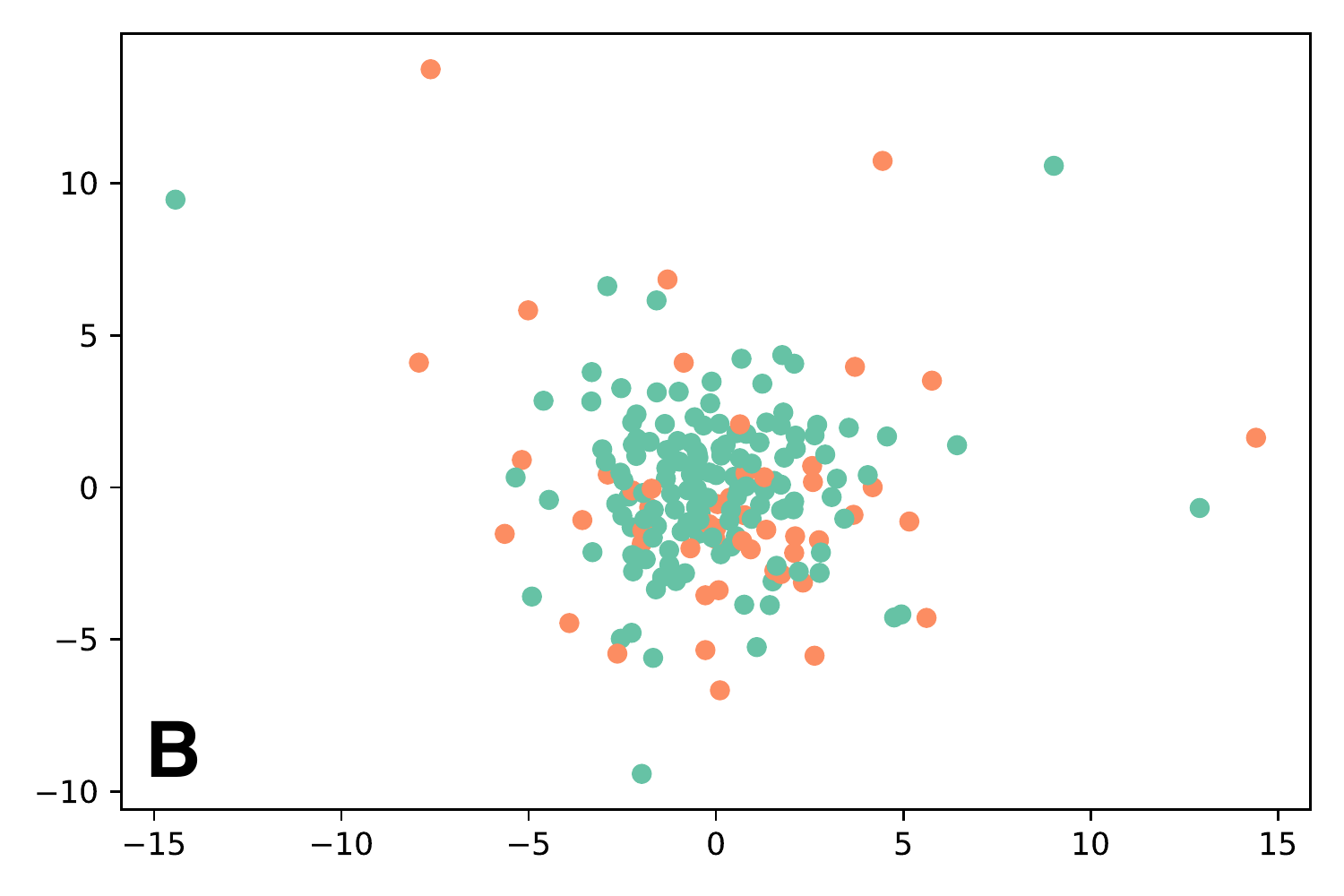}%
\includegraphics[width=0.33\textwidth]{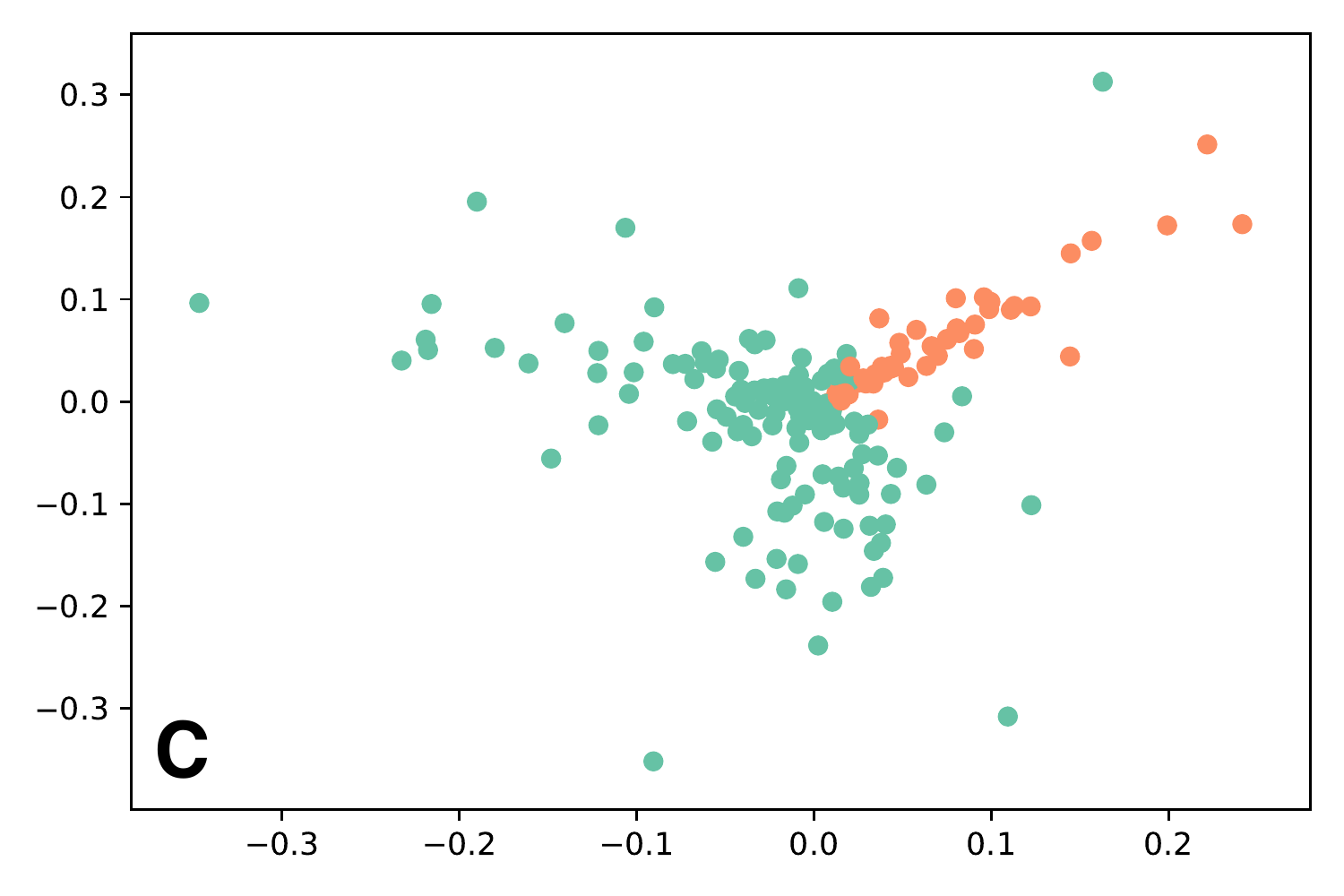}%
\caption{Clustering results from tensor-SCORE on the disease network from the MEDLINE dataset. Each dot denotes a node and its position is determined by the corresponding row in the factor matrix output by tensor-SCORE (A), the initial factor matrix (B), and the factor matrix output by reg-HOOI before row normalization (C). The nodes are colored by their disease type with orange corresponding to Nerve System Diseases (C10) and green to Neoplasms (C04).}
\label{fig:disease}
\end{center}
\end{figure}

\begin{table}  [htbp]
  \caption{Performance of the community detection methods on two real-world hypergraph networks: legislators and diseases. The numbers in the table are the counts of misclassified nodes with misclassification rates in the parentheses. Tensor-SCORE uses HOSVD as initialization.} \label{tab:realdata}
  \centering
   \begin{tabular}{@{} rlllll @{}} 
      \hline
          & \shortstack[l]{tensor\\SCORE} & \shortstack[l]{Binary\\projection} & \shortstack[l]{Weighted\\projection} & NHCut & HOSVD \\
      \hline
      Legislator   & \textbf{14} (11.7\%) & 58 (48.3\%) & 52 (43.3\%) & 23 (19.2\%) & 67 (55.8\%)\\
      Disease &  \textbf{8} (4.2\%) & 13 (6.8\%) & 46 (24.2\%) &  \textbf{8} (4.2\%) & 12 (6.3\%)\\
      \hline
   \end{tabular}
\end{table}

\section{Conclusion} \label{sec:discuss}
This paper studies community detection for a hypergraph network. Existing works (e.g., \cite{ahn2017information,angelini2015spectral,chien2018community,ghoshdastidar2014consistency, ghoshdastidar2015spectral,kim2018stochastic,lin2017fundamental}) only focus on the settings without degree heterogeneity. We consider a more general and realistic setting that allows for (potentialy severe) degree heterogeneity. We propose a spectral method, tensor-SCORE, which generalizes the community detection method SCORE \cite{jin2015fast} from graph networks to hypergraph networks. We introduce a degree-corrected block model for hypergraphs (hDCBM) and provide the rate of convergence of the Hamming clustering error. 

Our work has several innovations. The first is a precise characterization of the population eigen-structure of hDCBM (i.e., Lemmas~\ref{lem:oracle-HOSVD}-\ref{lem:oracle-HOOI}). It makes the connection between eigenvectors and targeted community labels transparent, and paves the way for generalizing SCORE to hypergraphs. The second is a new power iteration algorithm, reg-HOOI, for computing the spectral decomposition of network adjacency tensor. While power iteration methods are known to have superior performance in spectral decomposition of Gaussian tensors \cite{richard2014statistical,zhang2018tensor}, their use in hypergraph tensors has never been investigated. In the hypergraph literature, existing spectral decomposition methods include the HOSVD \cite{ghoshdastidar2014consistency} and projection-to-graph \cite{zhou2007learning,ghoshdastidar2017consistency}. We discover that power iteration has the advantage of simultaneously mitigating information loss and achieving sharp error rate. Third, we propose two methods for initializing power iteration. They can be respectively thought as improved versions of the standard HOSVD and the standard graph projection. These initializations per se are interesting new spectral decomposition methods. Our last innovation is on the theoretical side. The asymptotic behavior of hypergraph tensors largely differs from the well-studied Gaussian tensors, and we need to develop new technical tools. Our main technical result is a sharp concentration inequality on the incoherence norm of a sparse random tensor (Theorem~\ref{thm:tensor_concentration}), which is of independent interest. 

Our idea can be extended in many ways. The error rate we obtain decays polynomially with the network size and degree parameters. By combining our method with an additional refinement step \cite{chien2018community}, we can obtain an exponentially decaying rate. In fact, according to the recent result of \cite{abbe2017entrywise} which shows that spectral methods are able to achieve exponential rates for graph network community detection, we conjecture that our method, without refinement, already attains an exponentially fast rate. However, to prove this conjecture is extremely challenging, since it requires sharp row-wise large-deviation bounds for the factor matrix from tensor spectral decomposition. We leave it to future work.

The mixed membership models \cite{airoldi2009mixed,Mixed-SCORE} are generalizations of block models by allowing a node to belong to multiple communities via fractional memberships. The tensor-SCORE can be extended to mixed membership estimation, by combining it with the simplex vertex hunting techniques \cite{Mixed-SCORE,ke2017new} used in mixed membership learning. Another problem of great interest is the global testing \cite{jin2019optimal}, where we test the null hypothesis that the hypergraph network has only one community versus the alternative hypothesis that it contains multiple communities. The paper largely focuses on uniform hypergraphs. We provide a dummy-node approach for applying our method on non-uniform hypergraphs. While this approach works reasonably well in practice, it leaves much space for improvement. How to optimally aggregate information of hyperedges of different orders is an interesting research question.  

The encouraging results on real data manifests the promise of tensor-SCORE as a hypergraph analysis method in practice. As the research community increasingly realize the importance of higher-order structures in systems of biology \cite{grilli2017higher}, transportation and neural networks \cite{benson2016higher}, knowledge discovery \cite{shi_weaving_2015}, and many other scientific fields, tensor-SCORE provides a new tool to tackle this complexity for novel scientific insights.



\bibliographystyle{plain}
\bibliography{hypergraph}

\begin{thebibliography}{10}

\bibitem{abbe2017entrywise}
Emmanuel Abbe, Jianqing Fan, Kaizheng Wang, and Yiqiao Zhong.
\newblock Entrywise eigenvector analysis of random matrices with low expected
  rank.
\newblock {\em arXiv:1709.09565}, 2017.

\bibitem{ahn2017information}
Kwangjun Ahn, Kangwook Lee, and Changho Suh.
\newblock Information-theoretic limits of subspace clustering.
\newblock In {\em 2017 IEEE International Symposium on Information Theory
  (ISIT)}, pages 2473--2477. IEEE, 2017.

\bibitem{airoldi2009mixed}
Edoardo Airoldi, David Blei, Stephen Fienberg, and Eric Xing.
\newblock Mixed membership stochastic blockmodels.
\newblock {\em J. Mach. Learn. Res.}, 9:1981--2014, 2008.

\bibitem{angelini2015spectral}
Maria~Chiara Angelini, Francesco Caltagirone, Florent Krzakala, and Lenka
  Zdeborov{\'a}.
\newblock Spectral detection on sparse hypergraphs.
\newblock In {\em 2015 53rd Annual Allerton Conference on Communication,
  Control, and Computing (Allerton)}, pages 66--73. IEEE, 2015.

\bibitem{benson2016higher}
Austin~R Benson, David~F Gleich, and Jure Leskovec.
\newblock Higher-order organization of complex networks.
\newblock {\em Science}, 353(6295):163--166, 2016.

\bibitem{bickel2009nonparametric}
Peter~J Bickel and Aiyou Chen.
\newblock A nonparametric view of network models and newman--girvan and other
  modularities.
\newblock {\em Proceedings of the National Academy of Sciences},
  106(50):21068--21073, 2009.

\bibitem{chien2018community}
I~Chien, Chung-Yi Lin, and I-Hsiang Wang.
\newblock Community detection in hypergraphs: Optimal statistical limit and
  efficient algorithms.
\newblock In {\em International Conference on Artificial Intelligence and
  Statistics}, pages 871--879, 2018.

\bibitem{de2000multilinear}
Lieven De~Lathauwer, Bart De~Moor, and Joos Vandewalle.
\newblock A multilinear singular value decomposition.
\newblock {\em SIAM journal on Matrix Analysis and Applications},
  21(4):1253--1278, 2000.

\bibitem{de2000best}
Lieven De~Lathauwer, Bart De~Moor, and Joos Vandewalle.
\newblock On the best rank-1 and rank-(r 1, r 2,..., rn) approximation of
  higher-order tensors.
\newblock {\em SIAM journal on Matrix Analysis and Applications},
  21(4):1324--1342, 2000.

\bibitem{fowler2006connecting}
James~H Fowler.
\newblock Connecting the congress: A study of cosponsorship networks.
\newblock {\em Political Analysis}, 14(4):456--487, 2006.

\bibitem{gao2018community}
Chao Gao, Zongming Ma, Anderson~Y Zhang, and Harrison~H Zhou.
\newblock Community detection in degree-corrected block models.
\newblock {\em The Annals of Statistics}, 46(5):2153--2185, 2018.

\bibitem{ghoshdastidar2014consistency}
Debarghya Ghoshdastidar and Ambedkar Dukkipati.
\newblock Consistency of spectral partitioning of uniform hypergraphs under
  planted partition model.
\newblock In {\em Advances in Neural Information Processing Systems}, pages
  397--405, 2014.

\bibitem{ghoshdastidar2015provable}
Debarghya Ghoshdastidar and Ambedkar Dukkipati.
\newblock A provable generalized tensor spectral method for uniform hypergraph
  partitioning.
\newblock In {\em International Conference on Machine Learning}, pages
  400--409, 2015.

\bibitem{ghoshdastidar2015spectral}
Debarghya Ghoshdastidar and Ambedkar Dukkipati.
\newblock Spectral clustering using multilinear svd: Analysis, approximations
  and applications.
\newblock In {\em Twenty-Ninth AAAI Conference on Artificial Intelligence},
  2015.

\bibitem{ghoshdastidar2017consistency}
Debarghya Ghoshdastidar and Ambedkar Dukkipati.
\newblock Consistency of spectral hypergraph partitioning under planted
  partition model.
\newblock {\em The Annals of Statistics}, 45(1):289--315, 2017.

\bibitem{gine1984some}
Evarist Gin{\'e} and Joel Zinn.
\newblock Some limit theorems for empirical processes.
\newblock {\em The Annals of Probability}, pages 929--989, 1984.

\bibitem{grilli2017higher}
Jacopo Grilli, Gy{\"o}rgy Barab{\'a}s, Matthew~J Michalska-Smith, and Stefano
  Allesina.
\newblock Higher-order interactions stabilize dynamics in competitive network
  models.
\newblock {\em Nature}, 548(7666):210, 2017.

\bibitem{grossman2002evolution}
Jerrold~W Grossman.
\newblock The evolution of the mathematical research collaboration graph.
\newblock {\em Congressus Numerantium}, pages 201--212, 2002.

\bibitem{hopkins2015tensor}
Samuel~B Hopkins, Jonathan Shi, and David Steurer.
\newblock Tensor principal component analysis via sum-of-square proofs.
\newblock In {\em Conference on Learning Theory}, pages 956--1006, 2015.

\bibitem{ji2017coauthorship}
Pengsheng Ji and Jiashun Jin.
\newblock Coauthorship and citation networks for statisticians.
\newblock {\em The Annals of Applied Statistics}, 10(4):1779--1812, 2017.

\bibitem{jin2015fast}
Jiashun Jin.
\newblock Fast community detection by {SCORE}.
\newblock {\em The Annals of Statistics}, 43(1):57--89, 2015.

\bibitem{Mixed-SCORE}
Jiashun Jin, Zheng~Tracy Ke, and Shengming Luo.
\newblock Estimating network memberships by simplex vertex hunting.
\newblock {\em arXiv:1708.07852}, 2017.

\bibitem{jin2019optimal}
Jiashun Jin, Zheng~Tracy Ke, and Shengming Luo.
\newblock Optimal adaptivity of signed-polygon statistics for network testing.
\newblock {\em arXiv:1904.09532}, 2019.

\bibitem{karrer2011stochastic}
Brian Karrer and Mark~EJ Newman.
\newblock Stochastic blockmodels and community structure in networks.
\newblock {\em Physical Review E}, 83(1):016107, 2011.

\bibitem{ke2017new}
Zheng~Tracy Ke and Minzhe Wang.
\newblock A new {SVD} approach to optimal topic estimation.
\newblock {\em arXiv:1704.07016}, 2017.

\bibitem{keshavan2010matrix}
Raghunandan~H Keshavan, Andrea Montanari, and Sewoong Oh.
\newblock Matrix completion from a few entries.
\newblock {\em IEEE transactions on information theory}, 56(6):2980--2998,
  2010.

\bibitem{kim2018stochastic}
Chiheon Kim, Afonso~S Bandeira, and Michel~X Goemans.
\newblock Stochastic block model for hypergraphs: Statistical limits and a
  semidefinite programming approach.
\newblock {\em arXiv:1807.02884}, 2018.

\bibitem{kolda2009tensor}
Tamara~G Kolda and Brett~W Bader.
\newblock Tensor decompositions and applications.
\newblock {\em SIAM review}, 51(3):455--500, 2009.

\bibitem{koltchinskii2011neumann}
Vladimir Koltchinskii et~al.
\newblock Von neumann entropy penalization and low-rank matrix estimation.
\newblock {\em The Annals of Statistics}, 39(6):2936--2973, 2011.

\bibitem{koltchinskii2015optimal}
Vladimir Koltchinskii and Dong Xia.
\newblock Optimal estimation of low rank density matrices.
\newblock {\em Journal of Machine Learning Research}, 16(53):1757--1792, 2015.

\bibitem{lee2016time}
Sang~Hoon Lee, Jos{\'e}~Manuel Magallanes, and Mason~A Porter.
\newblock Time-dependent community structure in legislation cosponsorship
  networks in the congress of the republic of peru.
\newblock {\em Journal of Complex Networks}, 5(1):127--144, 2016.

\bibitem{lin2017fundamental}
Chung-Yi Lin, I~Eli Chien, and I-Hsiang Wang.
\newblock On the fundamental statistical limit of community detection in random
  hypergraphs.
\newblock In {\em 2017 IEEE International Symposium on Information Theory
  (ISIT)}, pages 2178--2182. IEEE, 2017.

\bibitem{liu2017characterizing}
Tianqi Liu, Ming Yuan, and Hongyu Zhao.
\newblock Characterizing spatiotemporal transcriptome of human brain via low
  rank tensor decomposition.
\newblock {\em arXiv:1702.07449}, 2017.

\bibitem{newman2001structure}
Mark~EJ Newman.
\newblock The structure of scientific collaboration networks.
\newblock {\em Proceedings of the National Academy of Sciences},
  98(2):404--409, 2001.

\bibitem{NJ}
Andrew~Y Ng, Michael~I Jordan, and Yair Weiss.
\newblock On spectral clustering: Analysis and an algorithm.
\newblock {\em Advances in neural information processing systems}, 2:849--856,
  2002.

\bibitem{qin2013regularized}
Tai Qin and Karl Rohe.
\newblock Regularized spectral clustering under the degree-corrected stochastic
  blockmodel.
\newblock In {\em Advances in Neural Information Processing Systems}, pages
  3120--3128, 2013.

\bibitem{richard2014statistical}
Emile Richard and Andrea Montanari.
\newblock A statistical model for tensor {PCA}.
\newblock In {\em Advances in Neural Information Processing Systems}, pages
  2897--2905, 2014.

\bibitem{rohe2011}
Karl Rohe, Sourav Chatterjee, and Bin Yu.
\newblock Spectral clustering and the high-dimensional stochastic blockmodel.
\newblock {\em Ann. Statist.}, 39(4):1878--1915, 08 2011.

\bibitem{shi_weaving_2015}
Feng Shi, Jacob~G. Foster, and James~A. Evans.
\newblock Weaving the fabric of science: {Dynamic} network models of science's
  unfolding structure.
\newblock {\em Social Networks}, 43:73--85, October 2015.

\bibitem{tropp2012user}
Joel~A Tropp.
\newblock User-friendly tail bounds for sums of random matrices.
\newblock {\em Foundations of computational mathematics}, 12(4):389--434, 2012.

\bibitem{tucker1966some}
Ledyard~R Tucker.
\newblock Some mathematical notes on three-mode factor analysis.
\newblock {\em Psychometrika}, 31(3):279--311, 1966.

\bibitem{xia2017statistically}
Dong Xia, Ming Yuan, and Cun-Hui Zhang.
\newblock Statistically optimal and computationally efficient low rank tensor
  completion from noisy entries.
\newblock {\em arXiv:1711.04934}, 2017.

\bibitem{xia2019sup}
Dong Xia and Fan Zhou.
\newblock The sup-norm perturbation of hosvd and low rank tensor denoising.
\newblock {\em Journal of Machine Learning Research}, 20(61):1--42, 2019.

\bibitem{yuan2016tensor}
Ming Yuan and Cun-Hui Zhang.
\newblock On tensor completion via nuclear norm minimization.
\newblock {\em Foundations of Computational Mathematics}, 16(4):1031--1068,
  2016.

\bibitem{yuan2017incoherent}
Ming Yuan and Cun-Hui Zhang.
\newblock Incoherent tensor norms and their applications in higher order tensor
  completion.
\newblock {\em IEEE Transactions on Information Theory}, 63(10):6753--6766,
  2017.

\bibitem{zhang2018tensor}
Anru Zhang and Dong Xia.
\newblock Tensor {SVD}: Statistical and computational limits.
\newblock {\em IEEE Transactions on Information Theory}, 64(11):7311--7338,
  2018.

\bibitem{JiZhuMM}
Yuan Zhang, Elizaveta Levina, and Ji~Zhu.
\newblock Detecting overlapping communities in networks with spectral methods.
\newblock {\em arXiv:1412.3432}, 2014.

\bibitem{zhou2007learning}
Dengyong Zhou, Jiayuan Huang, and Bernhard Sch{\"o}lkopf.
\newblock Learning with hypergraphs: Clustering, classification, and embedding.
\newblock In {\em Advances in neural information processing systems}, pages
  1601--1608, 2007.

\end{thebibliography}

\newpage

\appendix

\section{Proofs of results in main text}

Recall from Lemma~\ref{lem:oracle-HOSVD} that $\Q=\calC\times_1 \Xi\times_2\cdots\times_m \Xi$ where the core tensor $\calC=\|\theta\|^m\cdot \calP\times_1(UD)\times_2\cdots\times_m(UD)$ where the orthogonal matrix $U$ has columns $u_k$s defined in Lemma~\ref{lem:oracle-HOSVD}. 
We also denote $\calP_{\submax}=\|\calP\|_{\submax}$. Let  $\tilde\otimes$ denote the Kronecker product.  In addition, $\otimes$ denotes the tensor product so that $u_1\otimes u_2\otimes\cdots\otimes u_m\in \RR^{d_1\times\cdots\times d_m}$ if $u_j\in \RR^{d_j}$. On the other hand, $u_1\tilde\otimes u_2\tilde\otimes\cdots\tilde\otimes u_m\in\RR^{d_1d_2\cdots d_m}$. Let $\sigma_1(M)\geq\sigma_2(M)\geq\cdots$ denote the singular values of $M$. 

\subsection{Bounds of Diagonal Tensor}
Recall that the adjacency tensor $\A$ has expectation 
$$
\EE\A=\Q-{\rm diag}(\Q)
$$
where the low-rank tensor $\Q$ admit HOSVD as claimed in Lemma~\ref{lem:oracle-HOSVD}. The diagonal tensor ${\rm diag}(\Q)$ is defined by 
$$
\big[{\rm diag}(\Q)\big](i_1,\cdots,i_m)=
\begin{cases}
0,&\textrm{ if } i_1\neq i_2\neq\cdots\neq i_m\\
\Q(i_1,\cdots,i_m),&\textrm{ otherwise }.
\end{cases}
$$
Note that ${\rm diag}(\Q)$'s eigenspaces can be different from $\Q$'s eigenspaces.  

In this section, we investigate the size of the diagonal tensor and prove the upper bound for $\|{\rm diag}(\Q)\|$ where $\|\cdot\|$ denote the tensor operator norm. 

\begin{lemma}\label{lem:diagQ}
The following bound holds
$$
\|{\rm diag}(\Q)\|\leq {m\choose 2}\kappa_1\thetamax^2\|\theta\|^{m-2}\dmin^{-2}
$$
where $\kappa_1$ is defined as in Lemma~\ref{lem:oracle-HOSVD}.
\end{lemma}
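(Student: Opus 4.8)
The plan is to bound the tensor operator norm directly by testing against unit vectors, exploiting the nonnegativity and supersymmetry of $\Q$ together with its Tucker structure from Lemma~\ref{lem:oracle-HOSVD}. Recall from that lemma that $\Q=\calC\times_1\Xi\times_2\cdots\times_m\Xi$, where $\Xi$ has orthonormal columns, and that the rows $x_1,\ldots,x_n\in\R^K$ of $\Xi$ satisfy $x_i=\theta_i b_{k(i)}$ with $\|b_k\|=(\|\theta\|d_k)^{-1}$, so that $\|x_i\|=\theta_i/(\|\theta\|d_{k(i)})\leq \thetamax/(\|\theta\|\dmin)=:\mu$.

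First I would fix unit vectors $h_1,\dots,h_m\in\R^n$ and write $\mathrm{diag}(\Q)\times_1 h_1^\top\times_2\cdots\times_m h_m^\top = \sum_{(i_1,\ldots,i_m)\text{ not all distinct}}\Q(i_1,\ldots,i_m)\prod_{j}h_j(i_j)$. Since $\Q\geq 0$ and the event ``not all distinct'' is contained in $\bigcup_{1\le a<b\le m}\{i_a=i_b\}$, a union bound over the ${m\choose 2}$ coincidence events together with the triangle inequality gives $\big|\mathrm{diag}(\Q)\times_1 h_1^\top\cdots\times_m h_m^\top\big| \le \sum_{a<b}\sum_{i_a=i_b}\Q\prod_j|h_j(i_j)|$. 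By supersymmetry of $\Q$ each pair $(a,b)$ produces the same supremum over unit vectors, so it suffices to bound the single contracted sum for the pair $(1,2)$ and multiply by ${m\choose 2}$.

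Next, inserting the Tucker form into the pair-$(1,2)$ sum, I would rewrite it as $\langle\calC,\, M\otimes p_3\otimes\cdots\otimes p_m\rangle$, where $p_j=\Xi^\top h_j$ (so $\|p_j\|\leq 1$) and $M=\Xi^\top\mathrm{diag}(w)\Xi=\sum_i w_i\,x_i x_i^\top\in\R^{K\times K}$ with $w(i)=h_1(i)h_2(i)$. Taking the SVD $M=\sum_r\sigma_r a_r b_r^\top$ and contracting mode by mode, each rank-one term $\langle\calC, a_r\otimes b_r\otimes p_3\otimes\cdots\otimes p_m\rangle$ is at most $\|\calC\|$ (the tensor operator norm), so the sum is at most $\|\calC\|\,\|M\|_{*}$. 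Here $\|\calC\|=\|\Q\|\leq \sigma_1(\cM_1(\Q))=\kappa_1\|\theta\|^m$ because $\Xi$ is orthonormal, while the nuclear norm obeys $\|M\|_{*}\leq\sum_i|w_i|\,\|x_i\|^2\leq \mu^2\sum_i|h_1(i)h_2(i)|\leq \mu^2$ by Cauchy--Schwarz. Combining yields the per-pair bound $\kappa_1\|\theta\|^m\mu^2=\kappa_1\thetamax^2\|\theta\|^{m-2}\dmin^{-2}$, and the factor ${m\choose 2}$ finishes the argument.

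The step I expect to be the crux is bounding $\langle\calC,\, M\otimes p_3\otimes\cdots\otimes p_m\rangle$ by $\|\calC\|\,\|M\|_{*}$ via the SVD of $M$, rather than passing through a matricization of $\calC$, which would lose a spurious $\sqrt{K}$ factor; this, paired with the clean estimate $\|M\|_{*}\leq \mu^2$, is precisely what makes the incoherence factor $\mu^2=\thetamax^2\|\theta\|^{-2}\dmin^{-2}$ appear with no extraneous dimensional losses and matches the target bound exactly.
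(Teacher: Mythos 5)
Your proposal is correct and follows essentially the same route as the paper's proof: dominate $\mathrm{diag}(\Q)$ by the sum of the ${m\choose 2}$ coincidence tensors $\{i_a=i_b\}$, reduce to the pair $(1,2)$ by supersymmetry, and then exploit the Tucker structure to extract $\|\calC\|\leq\kappa_1\|\theta\|^m$ times $\max_i\|e_i^\top\Xi\|^2\leq\thetamax^2\|\theta\|^{-2}\dmin^{-2}$. The only (cosmetic) difference is in the final accounting — you bound $\langle\calC, M\otimes p_3\otimes\cdots\otimes p_m\rangle$ by $\|\calC\|\,\|M\|_*$ with $M=\sum_i h_1(i)h_2(i)x_ix_i^\top$, whereas the paper bounds $\sum_i|u_1(i)u_2(i)|$ by Cauchy--Schwarz and takes the maximum of the slice norms $\|\Q(i,i,:,\ldots,:)\|$; both yield the identical constant.
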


Recall from Lemma~\ref{lem:oracle-HOSVD} that the smallest non-zero singular value of $\calM(\Q)$ is $\kappa_K\|\theta\|^m$. If $m^2\kappa_1\dmin^{-2}=O(\kappa_K)$, then we simply have the relative affect from the diagonal part bounded by $\|{\rm diag}(\Q)\|/\sigma_K\big(\calM(\Q)\big)=O(\thetamax^2/\|\theta\|^2)$.

\begin{proof}[Proof of Lemma~\ref{lem:diagQ}]
For any pair $(k_1,k_2)\in\{1,\cdots,m\}$ and $k_1\neq k_2$, we denote a $m$-th order tensor $\calD_{\Q}^{(k_1,k_2)}$ whose entries are
$$
\calD_{\Q}^{(k_1,k_2)}(i_1,i_2,\cdots,i_m)=
\begin{cases}
0,&\textrm{ if } i_{k_1}\neq i_{k_2}\\
\Q(i_1,\cdots,i_m)&\textrm{ if } i_{k_1}=i_{k_2}.
\end{cases}
$$
The tensor $\calD_{\Q}^{k_1,k_2}$ represents the diagonal of $\Q$ restricted to $i_{k_1}=i_{k_2}$. 
We define the following $m$-th order tensor $\calD_{\Q}$ by
$$
\calD_{\Q}=\sum_{(k_1,k_2)\in [m]}\calD_{\Q}^{(k_1,k_2)}.
$$

We claim that $\|{\rm diag}(\Q)\|\leq \|\calD_{\Q}\|$. 
Recall that both ${\rm diag}(\Q)$ and $\calD_{\Q}$ have non-negative entries. It suffices to show that ${\rm diag}(\Q)\leq \calD_{\Q}$ elementwisely since the operator norm of a non-negative tensor can always be realized by the inner product with non-negative unit-norm vectors. Let's consider any $(i_1,\cdots,i_m)$-th entry. If $i_1\neq i_2\neq\cdots\neq i_m$, then, by the definitions of ${\rm diag}(\Q)$ and $\calD_{\Q}$, we have 
$$
{\rm diag}(\Q)(i_1,\cdots,i_m)=0\quad {\rm and}\quad \calD_{\Q}(i_1,\cdots,i_m)=0.
$$
On the other hand, if some indices in $i_1,\cdots,i_m$ are equal, then there exist $k_1,k_2\in[m]$ so that $i_{k_1}=i_{k_2}$. Then,
$$
{\rm diag}(\Q)(i_1,\cdots,i_m)=\calD_{\Q}^{(k_1,k_2)}(i_1,\cdots,i_m)
$$
implying that 
$$
{\rm diag}(\Q)(i_1,\cdots,i_m)\leq \calD_{\Q}(i_1,\cdots,i_m).
$$
Therefore, we conclude that ${\rm diag}(\Q)$ is dominated by $\calD_{\Q}$ elementwisely implying that $\|{\rm diag}(\Q)\|\leq \|\calD_{\Q}\|$. It suffices to prove the upper bound of $\|\calD_{\Q}\|$. Clearly, 
$$
\|\calD_{\Q}\|\leq \sum_{(k_1,k_2)\in[m]} \|\calD_{\Q}^{(k_1,k_2)}\|={m\choose 2}\|\calD_{\Q}^{(1,2)}\|
$$
where the last equality is due to the symmetricity of $\Q$.  Write 
$$
\calD_{\Q}^{(1,2)}=\sum_{i=1}^n(e_i\otimes e_i)\otimes \Q(i,i,:,:,\cdots,:)
$$
where $e_i\in\RR^n$ denotes the $i$-th canonical basis vector. For any unit-norm vectors $u_1,\cdots,u_m$, we have 
\begin{align*}
\big<\calD_{\Q}^{(1,2)}, u_1&\otimes\cdots\otimes u_m\big>=\sum_{i=1}^n u_1(i)u_2(i)\big<\Q(i,i,:,\cdots,:),u_3\otimes\cdots\otimes u_m\big>\\
\leq&\max_{i}\big|\big<\Q(i,i,:,\cdots,:),u_3\otimes\cdots\otimes u_m\big>\big|\cdot\sum_{i=1}^n|u_1(i)u_2(i)|\\
\leq&\max_{i}\big|\big<\Q(i,i,:,\cdots,:),u_3\otimes\cdots\otimes u_m\big>\big|\cdot\|u_1\|_{\ell_2}\|u_2\|_{\ell_2}\\
\leq& \max_{i}\big|\big<\Q(i,i,:,\cdots,:),u_3\otimes\cdots\otimes u_m\big>\big|\\
\leq&\max_{1\leq i\leq m}\big\|\Q(i,i,:,\cdots,:)\big\|.
\end{align*}
Recall the definition of $\Q$ and its Tucker decomposition $\Q=[\calC;\Xi,\cdots,\Xi]$, we write 
$$
\Q(i,i,:,\cdots,:)=\calC\times_1(e_i^{\top}\Xi)\times_2(e_i^{\top}\Xi)\times_3(\Xi)\times_4\cdots\times_m(\Xi).
$$
By definition of $\calC$, 
\begin{align*}
\big\|\Q(i,i,:,\cdots,:)\big\|\leq \|\calC\|\|e_i^{\top}\Xi\|^2\leq \kappa_1\|\theta\|^m\cdot \frac{\theta_i^2}{\|\theta\|^2}\cdot\frac{1}{\dmin^2}\leq \kappa_1\theta_i^2\|\theta\|^{m-2}\dmin^{-2}.
\end{align*}
Therefore, 
$$
\|\calD_{Q}^{(1,2)}\|\leq \kappa_1\thetamax^2\|\theta\|^{m-2}\dmin^{-2}
$$
and we conclude with
$$
\|{\rm diag}(\Q)\|\leq {m\choose 2}\kappa_1\thetamax^2\|\theta\|^{m-2}\dmin^{-2}.
$$
\end{proof}

\subsection{Warm Initializations by standard HOSVD}
To show the advantage of diagonal-removed HOSVD, we present the performance of spectral initialization by the vanilla HOSVD. 
\begin{lem}[HOSVD initialization]\label{lem:HOSVD_init}
Denote $\what\Xi^{(0)}$ the top-$K$ left singular vectors of $\calM_1(\A)$. Then, with probability at least $1-n^{-2}$, 
\begin{equation}\label{eq:HOSVD_init}
\|\what\Xi^{(0)}\what \Xi^{(0)\top}-\Xi\Xi\|\leq C_1(m!)\frac{\sqrt{\calP_{\submax}\thetamax\|\theta\|_1^{m-1}\log n}+\kappa_1\sqrt{K}\thetamax^2\|\theta\|^{m-2}\dmin^{-2}+\log n}{\kappa_K\|\theta\|^{m}}
\end{equation}
for some large enough constant $C_1>0$. 
\end{lem}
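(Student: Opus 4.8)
The plan is to reduce the claim to a standard matrix perturbation bound and then separately control the two sources of perturbation entering the mode-1 matricization. Since $\what\Xi^{(0)}$ collects the top-$K$ left singular vectors of $\calM_1(\A)$, while by Lemma~\ref{lem:oracle-HOSVD} the columns of $\Xi$ are the top-$K$ left singular vectors of $\calM_1(\Q)$ and $\calM_1(\Q)$ has rank exactly $K$ with smallest nonzero singular value $\sigma_K(\calM_1(\Q))=\kappa_K\|\theta\|^m$, Wedin's $\sin\Theta$ theorem yields
\[
\|\what\Xi^{(0)}\what\Xi^{(0)\top}-\Xi\Xi^\top\|\leq \frac{2\|\calM_1(\A)-\calM_1(\Q)\|}{\kappa_K\|\theta\|^m},
\]
which is valid as soon as the numerator is at most $\tfrac12\kappa_K\|\theta\|^m$ (if not, the claimed bound exceeds a constant and the inequality is trivial, the left side being at most $\sqrt2$). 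Using $\EE\A=\Q-{\rm diag}(\Q)$ and the triangle inequality, $\|\calM_1(\A)-\calM_1(\Q)\|\leq \|\calM_1(\A-\EE\A)\|+\|\calM_1({\rm diag}(\Q))\|$, so it remains to bound these two matrix operator norms and match them against the numerator in \eqref{eq:HOSVD_init}.

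For the diagonal term I would refine the argument of Lemma~\ref{lem:diagQ}, replacing the tensor operator norm by the matrix operator norm of the matricization, and split ${\rm diag}(\Q)$ by which pair of indices coincides, as in the tensors $\calD_\Q^{(k_1,k_2)}$. For a pair that involves mode $1$, say $\calD_\Q^{(1,2)}=\sum_i(e_i\otimes e_i)\otimes\Q(i,i,:,\ldots,:)$, the distinct rows of $\calM_1(\calD_\Q^{(1,2)})$ have disjoint column supports (row $i$ lives in the columns whose second index equals $i$), so its spectral norm equals $\max_i\|\Q(i,i,:,\ldots,:)\|_F$; writing the slice through the Tucker form $\Q=[\calC;\Xi,\ldots,\Xi]$ and using orthonormality of $\Xi$, this is at most $\|e_i^\top\Xi\|^2\|\calM_1(\calC)\|\leq \kappa_1\thetamax^2\|\theta\|^{m-2}\dmin^{-2}$, as in Lemma~\ref{lem:diagQ}. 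For a pair among the remaining modes the matricized block shares the same column support across rows, and I would bound its spectral norm by $\max_{\|u\|=1}\|\calD_\Q^{(k_1,k_2)}\times_1 u^\top\|_F$ through a Frobenius-norm estimate over the free (non-rank-one) column direction; this is precisely where the extra $\sqrt K$ factor over the pure tensor-operator-norm bound appears. Summing the $\binom{m}{2}$ pieces gives $\|\calM_1({\rm diag}(\Q))\|\lesssim \binom{m}{2}\sqrt K\,\kappa_1\thetamax^2\|\theta\|^{m-2}\dmin^{-2}$, which accounts for the middle term in the numerator.

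The noise term is the heart of the proof and the main obstacle: I must show, with probability at least $1-n^{-2}$, that $\|\calM_1(\A-\EE\A)\|\lesssim (m!)\bigl(\sqrt{\calP_{\submax}\thetamax\|\theta\|_1^{m-1}\log n}+\log n\bigr)$ for the matrix operator norm of the centered, sparse Bernoulli matricization. The two governing parameters are the maximal row-wise variance sum, $\max_{i_1}\sum_{i_2,\ldots,i_m}{\rm Var}\,\A(i_1,\ldots,i_m)\leq \calP_{\submax}\thetamax\|\theta\|_1^{m-1}$, which drives the sub-Gaussian term $\sqrt{\calP_{\submax}\thetamax\|\theta\|_1^{m-1}\log n}$, and the uniform entrywise bound (each centered entry lies in $[-1,1]$), which drives the Bernstein term $\log n$. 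The genuine difficulty is that the entries of $\calM_1(\A-\EE\A)$ are \emph{not} independent, since supersymmetry forces each free Bernoulli variable $\A(i_1,\ldots,i_m)$, $i_1<\cdots<i_m$, to reappear in $m!$ positions. I would resolve this by partitioning the matricized noise into $m!$ pieces according to the relative ordering (ranking) of the multi-index $(i_1,\ldots,i_m)$, so that within each piece every underlying edge variable appears at most once and the entries are independent, then apply a rectangular matrix Bernstein inequality (or an $\epsilon$-net over the unit sphere combined with a scalar Bernstein tail bound) to each piece with the variance and boundedness parameters above, and finally recombine the $m!$ pieces by the triangle inequality; this is the origin of the $(m!)$ prefactor. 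Substituting the diagonal and noise bounds into the Wedin inequality and using $\binom{m}{2}\leq m!$, the three resulting terms assemble into the stated numerator, and dividing by $\kappa_K\|\theta\|^m$ gives \eqref{eq:HOSVD_init}.
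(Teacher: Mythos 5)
Your proposal is correct and follows essentially the same route as the paper's proof: a Wedin/Davis--Kahan bound with denominator $\sigma_K(\calM_1(\Q))=\kappa_K\|\theta\|^m$, a triangle-inequality split of $\calM_1(\A)-\calM_1(\Q)$ into the matricized diagonal tensor (bounded via the pairwise-coincidence decomposition $\calD_{\Q}^{(k_1,k_2)}$, with the $\sqrt{K}$ arising from the rank-$K$ Frobenius-norm estimate of the slices $\Q(i,i,:,\ldots,:)$) and the centered Bernoulli matricization (bounded by matrix Bernstein with row-variance parameter $\calP_{\submax}\thetamax\|\theta\|_1^{m-1}$). The only cosmetic difference is your handling of the supersymmetry-induced dependence by partitioning into $m!$ independently ordered pieces, whereas the paper groups each edge variable's $m!$ appearances into a single bounded summand before applying Bernstein; both yield the same $(m!)$ prefactor.
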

\begin{proof}[Proof of Lemma~\ref{lem:HOSVD_init}]
Recall that 
$$
\calM_1(\A)=\calM_1(\Q)+\calM_1(\A-\EE\A)+\calM_1\big({\rm diag}(\Q)\big).
$$
We first prove the upper bound of $\big\|\calM_1\big({\rm diag}(\Q)\big)\big\|$. Similarly, as in the proof of Lemma~\ref{lem:diagQ},  
\begin{align*}
\big\|\calM_1({\rm diag}(\Q)) \big\|\leq \big\|\calM_1\big(\calD_{\Q}\big)\big\|\leq &\sum_{(k_1,k_2)\in[m]}\big\|\calM_1\big( \calD_{\Q}^{(k_1,k_2)}\big)\big\|\\
&\leq m^2\cdot\max_{(k_1,k_2)\in[m]}\big\|\calM_1\big(\calD_{\Q}^{(k_1,k_2)}\big)\big\|.
\end{align*}
W.L.O.G., we prove the bound for $\big\|\calM_1\big( \calD_{\Q}^{(1,2)}\big)\big\|$. Recall that 
$$
\calD_{\Q}^{(1,2)}=\sum_{i=1}^n (e_i\otimes e_i)\otimes \Q(i,i,:,\cdots,:).
$$
We write 
\begin{align*}
\calM_1\big(\calD_{\Q}^{(1,2)}\big)=\sum_{i=1}^n e_i \otimes \Big({\rm vec}\big(\Q(i,i,:,\cdots,:)\big)\tilde{\otimes} e_i\Big)
\end{align*}
where $\tilde{\otimes}$ denotes the Kronecker product. For $u\in\RR^n$ and $v\in\RR^{n^{m-1}}$, we have 
\begin{align*}
\big<\calM_1\big(\calD_{\Q}^{(1,2)}\big),& u\otimes v\big>=\sum_{i=1}^n \langle u,e_i \rangle\big<{\rm vec}\big(\Q(i,i,:,\cdots,:)\big)\tilde{\otimes} e_i, v\big>\\
\leq&\sqrt{\sum_{i=1}^n \langle u, e_i \rangle^2\sum_{i=1}^n\big<{\rm vec}\big(\Q(i,i,:,\cdots,:)\big)\tilde{\otimes} e_i, v\big>^2}\\
\leq&\|u\|\cdot \sqrt{\sum_{i=1}^n\big<{\rm vec}\big(\Q(i,i,:,\cdots,:)\big)\tilde{\otimes} e_i, v\big>^2}\\
\leq& \|u\|\|v\|\cdot \max_{1\leq i\leq n}\big\|\Q(i,i,:,\cdots,:) \big\|_{\rm F}\leq \thetamax^2\|u\|\|v\|\cdot\kappa_1\sqrt{K}\|\theta\|^{m-2}\dmin^{-2}
\end{align*}
where we used the fact ${\rm rank}\big(\calM_m\big(\Q(i,i,:,\cdots,:)\big)\big)\leq K$ and $\big\|\calM_m\big(\Q(i,i,:,\cdots,:)\big) \big\|\leq \kappa_1\thetamax^2\|\theta\|^{m-2}\dmin^{-2}$. Therefore, we conclude that 
$$
\big\|\calM_1\big(\calD_{\Q}^{(1,2)}\big) \big\|\leq \kappa_1\thetamax^2\sqrt{K}\|\theta\|^{m-2}\dmin^{-2}
$$
and so that 
$$
\big\|\calM_1\big({\rm diag}(\Q)\big) \big\|\leq \kappa_1m^2\sqrt{K}\cdot \thetamax^2\|\theta\|^{m-2}\dmin^{-2}.
$$
We then prove the upper bound of $\big\|\calM_1(\A)-\EE\calM_1(\A) \big\|$. Recall that 
\begin{align*}
\calM_1(\A)=\sum_{i_1\neq\cdots\neq i_m}^n\A(i_1,\cdots,i_m)\cdot \sum_{(j_1,\cdots,j_m)\in\mathfrak{P}(i_1,\cdots,i_m)}e_{j_1}\big(e_{j_2}\tilde\otimes\cdots\tilde\otimes e_{j_m}\big)^{\top}
\end{align*}
where $\mathfrak{P}(i_1,\cdots,i_m)$ denotes the set of all permutations of $(i_1,\cdots,i_m)$. For any $(i_1,\cdots,i_m)$, 
$$
\Big\|\A(i_1,\cdots,i_m)\cdot \sum_{(j_1,\cdots,j_m)\in\mathfrak{P}(i_1,\cdots,i_m)}e_{j_1}\big(e_{j_2}{\tilde\otimes}\cdots{\tilde\otimes}e_{j_m}\big)^{\top} \Big\|\leq m!.
$$
Meanwhile, 
\begin{align*}
\Big\|\sum_{i_1\neq\cdots\neq i_m}^n \EE\A(i_1,\cdots,i_m)^2\Big(\sum_{(j_1,\cdots,j_m)\in\mathfrak{P}(i_1,\cdots,i_m)}e_{j_1}&\big(e_{j_2}{\tilde\otimes}\cdots{\tilde\otimes}e_{j_m}\big)^{\top} \Big)\\
\cdot&\Big(\sum_{(j_1,\cdots,j_m)\in\mathfrak{P}(i_1,\cdots,i_m)}e_{j_1}\big(e_{j_2}{\tilde\otimes}\cdots{\tilde\otimes}e_{j_m}\big)^{\top} \Big)^{\top}\Big\|\\
\leq& (m!)\cdot\max_{i_1\in[n]} \sum_{i_2,\cdots,i_m\geq 1}^n \EE\A(i_1,\cdots,i_m).
\end{align*}
By the definition of $\Q$, we obtain 
\begin{align*}
\max_{i_1\in[n]} \sum_{i_2,\cdots,i_m\geq 1}^n \EE\A(i_1,\cdots,i_m)\leq& \max_{i_1\in[n]} \sum_{i_2,\cdots,i_m\geq 1}^n \Q(i_1,\cdots,i_m)\\
\leq&\calP_{\submax}\cdot\thetamax\cdot \|\theta\|_1^{m-1}
\end{align*}
where we used the fact $\Q(i_1,\cdots,i_m)\leq \calP_{\submax}\cdot\prod_{j=1}^m\theta_{i_j}$ and so that 
\begin{align*}
\sum_{i_2,\cdots,i_m\geq 1}^n \Q(i_1,\cdots,i_m)\leq \sum_{i_2,\cdots,i_m\geq 1}^n\calP_{\submax}\cdot \prod_{j=1}^m\theta_{i_j}\leq \calP_{\submax}\theta_{i_1}\cdot \|\theta\|_1^{m-1}. 
\end{align*}
By matrix Bernstein inequality (\cite{tropp2012user}), with probability at least $1-\frac{1}{n^2}$, we get 
\begin{align*}
\big\|\calM_1(\A)-\EE\calM_1(\A) \big\|\leq C_1(m!)\sqrt{\calP_{\submax}\cdot\thetamax\|\theta\|_1^{m-1}\log n}+C_2(m!)\log n
\end{align*}
for some absolute constants $C_1,C_2>0$. By Davis-Kahan Theorem, 
$$
\big\|\hat\Xi^{(0)}\big(\hat\Xi^{(0)}\big)^{\top}-\Xi\Xi^{\top} \big\|\leq \sqrt{2}\cdot\frac{\big\|\calM_1\big({\rm diag}(\Q)\big)\big\|+\|\calM_1(\A)-\EE\calM_1(\A)\|}{\sigma_K\big(\calM_1(\Q)\big)}.
$$
Together with $\sigma_K(\calM_1(\Q))=\kappa_K\|\theta\|^m$, with probability at least $1-n^{-1}$, we get 
\begin{align*}
\big\|\hat\Xi^{(0)}\big(\hat\Xi^{(0)}\big)^{\top}-\Xi\Xi^{\top} \big\|\leq \frac{C_1\kappa_1m^2\sqrt{K}\cdot \thetamax^2\|\theta\|^{m-2}\dmin^{-2}+C_2m!\sqrt{\calP_{\submax}\thetamax\|\theta\|_1^{m-1}\log n}+C_3m!\log n}{\kappa_K\|\theta\|^m}
\end{align*}
where $C_1,C_2,C_3>0$ are absolute constants. 
\end{proof}

\subsection{Proof of Lemma~\ref{lem:oracle-HOSVD}}
By hDCBM and the definition of Tucker decomposition, we have
\[
\Q = [\cP;\Theta\Pi,\ldots,\Theta\Pi]=\cP\times_1(\Theta\Pi)\times_2\cdots\times_m(\Theta\Pi). 
\]
Introduce the tensor $\cP_*=\cP\times_1D\times_2\cdots\times_mD$, where $D=\mathrm{diag}(d_1,\ldots,d_K)$. By elementary calculations, the above equation can be re-written as
\[
\Q = \cP_*\times_1(\Theta\Pi D^{-1})\times_2\cdots\times_m(\Theta\Pi D^{-1})
\]
The mode-1 matricization of $\Q$ thus has the form (see \cite[Page 462]{kolda2009tensor})
\beq \label{oracleHOSVD-1}
\M_1(\Q) = (\Theta\Pi D^{-1})\cdot \M_1(\cP_*)\cdot [\underbrace{(\Theta\Pi D^{-1})\tilde\otimes\cdots\tilde\otimes(\Theta\Pi D^{-1})}_{m-1}]^{\top}
\eeq
where $\tilde\otimes$ denotes the Kronecker product. 
Note that $\M_1(\cP_*)$ is the matrix $G$. Let $\kappa_k>0$ be the $k$-th largest singular value of $G$ and let $u_k\in\mathbb{R}^{K}$ and $v_k\in\mathbb{R}^{K^{m-1}}$ be the corresponding left and right singular vectors. Write $\Delta=\mathrm{diag}(\kappa_1,\ldots,\kappa_K)$, $U=[u_1,\ldots,u_K]$, and $V=[v_1,\ldots,v_K]$. Then,
\[
\M_1(\cP_*) = U\Delta V^{\top}. 
\]
We plug it into \eqref{oracleHOSVD-1} and find that
\beq \label{oracleHOSVD-2}
\M_1(\Q) = \underbrace{(\Theta\Pi D^{-1})U}_{\equiv S_1}\cdot \Delta\cdot \underbrace{V^{\top} [(\Theta\Pi D^{-1})\tilde\otimes\cdots\tilde\otimes(\Theta\Pi D^{-1})]^{\top}}_{\equiv S^{\top}_2}. 
\eeq
We study the matrices $S_1$ and $S_2$. Note that $d_k$ can be written as $\|\theta\|^{-1}\sqrt{\sum_{i=1}^n\theta_i^2\pi_i(k)}$. It follows that
\[
\|\theta\|^2 D^2 = \Pi^{\top}\Theta^2\Pi.
\] 
Combing it with $U^{\top}U=I_K$, we obtain  
\[
S_1^{\top}S_1 = U^{\top}(D^{-1}\Pi^{\top}\Theta^2\Pi D^{-1})U=\|\theta\|^2 U^{\top}U=\|\theta\|^2 I_K. 
\]
Moreover, since $V^{\top}V=I_K$, using the universal equalities of $(A\tilde\otimes B)(C\tilde\otimes D)=(AC)\tilde\otimes (BD)$ and $(A\tilde\otimes B)^{\top}=B^{\top}\tilde\otimes A^{\top}$, we can derive 
\[
S_2^{\top}S_2= V^{\top}[\underbrace{(D^{-1}\Pi^{\top}\Theta^2\Pi D^{-1})\tilde\otimes\cdots\tilde\otimes(D^{-1}\Pi^{\top}\Theta^2\Pi D^{-1})}_{m-1}] V= \|\theta\|^{2m-2}V^{\top}V=\|\theta\|^{2m-2}I_K. 
\]
Let $\Xi=\|\theta\|^{-1}S_1$ and $H=\|\theta\|^{-(m-1)}S_2$. We plug the above equations into \eqref{oracleHOSVD-2}. It gives
\beq\label{oracleHOSVD-3}
\M_1(\Q) = \Xi \cdot (\|\theta\|^m\Delta)\cdot H^{\top}, \qquad \mbox{where}\quad \Xi^{\top}\Xi=H^{\top}H=I_K. 
\eeq
It is clear that \eqref{oracleHOSVD-3} gives the SVD of $\M_1(\Q)$. Then, the diagonal matrix $\|\theta\|^m\Delta$ contains the singular values, and $\Xi=\|\theta\|^{-1}\Theta\Pi D^{-1}U$ contains the left singular vectors. It implies
\[
\lambda_k = \kappa_k\|\theta\|^m, \qquad \xi_k = \|\theta\|^{-1}\Theta\Pi D^{-1}u_k, \qquad 1\leq k\leq K. 
\]
This proves the claim.

\subsection{Proof of Lemma~\ref{lem:oracle-HOOI}}
Consider the first bullet point. 
Given an initialization $\widetilde{\Xi}$, the output of the first iteration, $\Xi^{(1)}$, contains left singular vectors of 
\[
\M_1(\Omega), \qquad \mbox{where}\quad \Omega \equiv \Q\times_2\widetilde{\Xi}^{\top}\times_3\cdots\times_m\widetilde{\Xi}^{\top}. \]
Note that $\Q=\cP\times_1(\Theta\Pi)\times_2\cdots\times_m(\Theta\Pi)$. Using the formula of $\X \times_j A\times_jB=\X\times_j(BA)$ and $\X\times_i A\times_j B=\X\times_jB\times_iA$ (see \cite[Page 461]{kolda2009tensor}), we have
\begin{align*}
\Omega &= \cP\times_1(\Theta\Pi)\times_2\cdots\times_m(\Theta\Pi)\times_2\widetilde{\Xi}^{\top}\times_3\cdots\times_m\widetilde{\Xi}^{\top}\cr
&=\cP\times_1(\Theta\Pi)\times_2(\widetilde{\Xi}^{\top}\Theta\Pi)\times_2\cdots\times_m(\widetilde{\Xi}^{\top}\Theta\Pi)\cr
&=\cP_*\times_1(\Theta\Pi D^{-1})\times_2(\widetilde{\Xi}^{\top}\Theta\Pi D^{-1})\times_2\cdots\times_m(\widetilde{\Xi}^{\top}\Theta\Pi D^{-1}), 
\end{align*}
where $D=\mathrm{diag}(d_1,\ldots,d_K)$ and $\cP_*=\cP\times_1D\times_2\cdots\times_mD$. By the matricization formula (\cite[Page 462]{kolda2009tensor}), 
\[
\M_1(\Omega)= \Theta\Pi D^{-1}\cdot \M_1(\cP_*)\cdot [\underbrace{(\widetilde{\Xi}^{\top}\Theta\Pi D^{-1})\tilde\otimes \cdots\tilde\otimes (\widetilde{\Xi}^{\top}\Theta\Pi D^{-1})}_{m-1}]^{\top}
\]
where $\tilde\otimes$ denotes the Kronecker product. 
Here, $\M_1(\cP_*)$ is the matrix $G$. Same as in the proof of Lemma~\ref{lem:oracle-HOSVD}, let $\M_1(\cP_*)=U\Delta V^{\top}$ be the SVD. It follows that
\beq \label{oracleHOOI-1}
\M_1(\Omega) = \underbrace{\Theta\Pi D^{-1}U}_{S_1}\cdot \Delta\cdot \underbrace{V^{\top}[(\widetilde{\Xi}^{\top}\Theta\Pi D^{-1})\tilde\otimes \cdots\tilde\otimes (\widetilde{\Xi}^{\top}\Theta\Pi D^{-1})]^{\top}}_{\equiv \widetilde{S}_2^{\top}}. 
\eeq
In the proof of Lemma~\ref{lem:oracle-HOSVD}, we have shown $S_1=\|\theta\|\Xi$, where $\Xi$ is the output of applying HOSVD to $\Q$. Since $U$ is an orthogonal matrix, we write $\widetilde{\Xi}^{\top}\Theta\Pi D^{-1}= \widetilde{\Xi}^{\top}\Theta\Pi D^{-1}UU^{\top}=\|\theta\|\widetilde{\Xi}^{\top}\Xi U^{\top}$. Plugging it into the definition of $\widetilde{S}_2$ and using the formula $(A\tilde\otimes B)(C\tilde\otimes D)=AC\tilde\otimes BD$ in a reverse way, we have  
\begin{align*}
\widetilde{S}^{\top}_2 &=\|\theta\|^{m-1}V^{\top}[(\widetilde{\Xi}^{\top}\Xi U^{\top})\tilde\otimes\cdots\tilde\otimes (\widetilde{\Xi}^{\top}\Xi U^{\top})]\cr
&=\|\theta\|^{m-1} V^{\top}[(\widetilde{\Xi}^{\top}\Xi)\tilde\otimes\cdots\tilde\otimes (\widetilde{\Xi}^{\top}\Xi)]\cdot [U^{\top}\tilde\otimes \cdots\tilde\otimes U^{\top}]\cr
&=\|\theta\|^{m-1} \Bigl( [U\tilde\otimes\cdots\tilde\otimes U]\cdot [(\Xi^{\top}\widetilde{\Xi})\tilde\otimes\cdots\tilde\otimes (\Xi^{\top}\widetilde{\Xi})]\cdot V\Bigr)^{\top}. 
\end{align*}
We plug in the expressions of $S_1$ and $\widetilde{S}_2$ into \eqref{oracleHOOI-1}. It gives
\beq \label{oracleHOOI-2}
\M_1(\Omega) = \|\theta\|^m\cdot \Xi\cdot \Delta\cdot \Bigl(\underbrace{[U\tilde\otimes\cdots\tilde\otimes U]\cdot [(\Xi^{\top}\widetilde{\Xi})\tilde\otimes\cdots\tilde\otimes (\Xi^{\top}\widetilde{\Xi})]\cdot V}_{\equiv\widetilde{H}}\Bigr)^{\top}. 
\eeq
First, the matrix $\Xi^{\top}\widetilde{\Xi}=I_K -(\widetilde{\Xi}^{\top}\widetilde{\Xi}-\Xi^{\top}\Xi)$. Our assumption $\|\Xi\Xi^{\top}-\widetilde{\Xi}\widetilde{\Xi}^{\top}\|<1$ guarantees that this matrix has a full rank $K$. By the universal equality that $\mathrm{rank}(A_1\tilde\otimes \cdots \tilde\otimes A_k)=\mathrm{rank}(A_1)\times \cdots\times \mathrm{rank}(A_k)$, the matrix $ [(\Xi^{\top}\widetilde{\Xi})\tilde\otimes\cdots\tilde\otimes (\Xi^{\top}\widetilde{\Xi})]$ has a full rank $K^{m-1}$. Second, it is known that the Kronecker product of orthogonal matrices is still an orthogonal matrix. Hence, the matrix $[U\tilde\otimes \cdots\tilde\otimes U]$ is an orthogonal matrix in dimension $K^{m-1}$. Third, the matrix $V$ has orthonormal columns. Combining the above, we conclude that the rank of $\widetilde{H}$ equals to $K^{m-1}$. 
It follows from \eqref{oracleHOOI-2} that the column space of the left singular vectors of $\M_1(\Omega)$ is the same as the column space of $\Xi$, i.e., 
\[
\Xi^{(1)} = \Xi O, \qquad \mbox{for an orthogonal matrix }O.
\]
This proves the first claim. 

Consider the second and third bullet points. We note that, if $\widetilde{\Xi}=\Xi O$ for an orthogonal matrix $O$, then $\Xi^{\top}\widetilde{\Xi}=\Xi^{\top}\Xi O=O$ is an orthogonal matrix. It follows that the matrix $[U\tilde\otimes\cdots\tilde\otimes U]\cdot [(\Xi^{\top}\widetilde{\Xi})\tilde\otimes\cdots\tilde\otimes (\Xi^{\top}\widetilde{\Xi})]$ is an orthogonal matrix in dimension $K^{m-1}$. Therefore, the matrix $\widetilde{H}$ in \eqref{oracleHOOI-2} contains orthonormal columns, and  \eqref{oracleHOOI-2} is indeed the SVD of $\M_1(\Omega)$. It implies that 
\[
\Xi^{(1)} = \Xi, \qquad \mbox{when }\widetilde{\Xi}=\Xi O \mbox{ for an orthogonal matrix }O. 
\]
As a result, if we start from $\widetilde{\Xi}$ that has the same column space of $\Xi$ and run one iteration of HOOI, then we will exactly obtain $\Xi$. The last two bullet points follow immediately.

\subsection{Proof of Theorem~\ref{thm:SCOREmain}}
In Tensor-SCORE, the estimated community labels are obtained from running a $k$-means clustering on rows of $\hat{R}$ (defined in \eqref{def-Rhat}). Below, we first study the matrix $\hat{R}$; next, we study its population counterpart $R$; last, we analyze the k-means clustering algorithm and bound the clustering error. For convenience, we introduce the notation
\[
\widetilde{err}_n = \frac{\sqrt{K^{m-2}\theta_{\max}^{m-1}\|\theta\|_1 \log(n)}+(\sqrt{K}\thetamax/\|\theta\|)^{m-1}\log(n)}{\|\theta\|^{m}}+\frac{K\theta^2_{\max}}{\|\theta\|^2}. 
\] 
It is seen that $err^2_n\asymp[\|\theta\|^2/(n\theta^2_{\min})]\cdot \widetilde{err}^2_n$.

In preparation, we state a useful result from Theorem~\ref{thm:spec_power_dcbm}. By condition \eqref{cond-eigenvalue}, $\kappa_K\geq\gamma_n$ and $\|\cP\|_{\max}=O(1)$; by condition \eqref{cond-balance}, $d_{\max}\asymp d_{\min}\asymp 1/\sqrt{K}$. It follows that $\beta(\Q)\leq \|\cP\|_{\max}\theta_{\max}^{m-1}\|\theta\|_1\leq C\theta_{\max}^{m-1}\|\theta\|_1$ and $\kappa_1\leq \sqrt{K^m}\|\cP\|_{\max}\cdot \|D\|^m=O(1)$. We plug them into Theorem~\ref{thm:spec_power_dcbm} and find that, with probability $1-n^{-2}$,  
\begin{align*}
\|\hat{\Xi}\hat{\Xi}^{\top}-\Xi\Xi^{\top}\|\leq& \frac{C\sqrt{K^{m-2}\theta_{\max}^{m-1}\|\theta\|_1 \log(n)}+C(\sqrt{K}\thetamax/\|\theta\|)^{m-1}\log n+C K\theta_{\max}^2\|\theta\|^{m-2}}{\gamma_n\|\theta\|^m}\\
&\leq C\gamma_n^{-1}\widetilde{err}_n
\end{align*}
where $\hat\Xi$ denotes the output of regularized-HOOI algorithm. 
Write $\Xi_0=[\xi_2,\ldots,\xi_K]$ and $\hat{\Xi}_0=[\hat{\xi}_2,\ldots,\hat{\xi}_K]$. Note that condition \eqref{cond-eigenvalue} says that $\kappa_1-\kappa_2\geq \gamma_n$. By using a similar proof to that of Theorem~\ref{thm:spec_power_dcbm}, we can get a stronger result, that is
\[
\|\hat{\xi}_1\hat{\xi}_1^{\top}-\xi_1\xi_1^{\top}\|\leq C\gamma_n^{-1}\widetilde{err}_n, \qquad \|\hat{\Xi}_0\hat{\Xi}_0^{\top}-\Xi_0\Xi_0^{\top}\|\leq C\gamma_n^{-1}\widetilde{err}_n. 
\]
By elementary linear algebra, the above imply that there exists $\omega\in\{\pm 1\}$ and an orthogonal matrix $O\in\mathbb{R}^{(K-1)\times (K-1)}$, such that
\[
\|\hat{\xi}_1 -\omega \xi_1\|\leq C\gamma_n^{-1}\widetilde{err}_n, \qquad \|\hat{\Xi}_0-\Xi_0O^{\top}\|\leq C\gamma_n^{-1}\widetilde{err}_n. 
\]
The rank of the matrix $(\hat{\Xi}_0- \Xi_0O^{\top})$ is at most $2(K-1)$. It follows that $\|\hat{\Xi}_0-\Xi_0O^{\top}\|_F\leq \sqrt{2(K-1)}\|\hat{\Xi}_0-\Xi_0O^{\top}\|\leq C\sqrt{K}\gamma_n^{-1}\widetilde{err}_n$. 
We immediate have
\beq  \label{thm-main-1}
\|\hat{\xi}_1-\omega\xi_1\|^2\leq C\gamma^{-2}_n\widetilde{err}_n^2,\qquad  \|\hat{\Xi}_0-\Xi_0O^{\top}\|_F^2 \leq CK\gamma_n^{-2}\widetilde{err}_n^2. 
\eeq

We now prove the claim. First, we study the matrix $\hat{R}$. Its population counterpart $R$ is defined in \eqref{def-R}. Introduce $\hat{R}^*\in\mathbb{R}^{n\times (K-1)}$ by
\[
\hat{R}^*(i,k) =\hat{\xi}_{k+1}(i)/\hat{\xi}_1(i), \qquad 1\leq i\leq n, 1\leq k\leq K-1.  
\] 
Denote by $r_i^{\top}$, $\hat{r}_i^{\top}$ and $(\hat{r}_i^*)^{\top}$ the $i$-th row of $R$, $\hat{R}$ and $\hat{R}^*$, respectively. Then, $\hat{r}_i$ is obtained by truncating large entries of $\hat{r}_i^*$ when $\|\hat{r}_i^*\|>\sqrt{K\log(n)}$. Let $\Xi_{0,i}^{\top}$ and $\hat{\Xi}_{0,i}^{\top}$ be the $i$-th row of $\Xi_0$ and $\hat{\Xi}_0$, respectively. By definition,
\[
\hat{r}^*_i = \frac{1}{\hat{\xi}_1(i)}\hat{\Xi}_{0,i}, \qquad r_i =\frac{1}{\xi_1(i)}\Xi_{0,i}.  
\]
Let $H=\omega^{-1} O$. It follows that
\begin{align} \label{thm-main-2}
\hat{r}_i^* - Hr_i &= \frac{1}{\hat{\xi}_1(i)}\hat{\Xi}_{0,i}- \frac{1}{\omega \xi_1(i)}O\Xi_{0,i}\cr
&= \frac{1}{\omega \xi_1(i)}(\hat{\Xi}_{0,i}-O\Xi_{0,i}) +\Bigl[ \frac{1}{\hat{\xi}_1(i)}-\frac{1}{\omega \xi_1(i)} \Bigr]\hat{\Xi}_{0,i}\cr
&= \frac{1}{\omega \xi_1(i)}(\hat{\Xi}_{0,i}-O\Xi_{0,i}) - \frac{\hat{\xi}_1(i)-\omega\xi_1(i)}{\omega\xi_1(i)} \hat{r}^*_i
\end{align}
By Lemma~\ref{lem:oracle-HOSVD}, $\xi_1(i) = \|\theta\|^{-1}\theta_i\cdot \pi_i^{\top}D^{-1}u_1$. First, by condition \eqref{cond-eigenvector}, $u_1$ is a positive vector whose entries are all at the order of $1/\sqrt{K}$. Second, by condition \eqref{cond-balance}, $D^{-1}$ is a diagonal matrix whose diagonal entries are all at the order of $\sqrt{K}$. Third, $\pi_i$ is a nonnegative vector that sums to 1. Combining the above, we obtain that
\beq \label{thm-main-3}
C^{-1}\|\theta\|^{-1}\theta_i\leq \xi_1(i)\leq C\|\theta\|^{-1}\theta_i, \qquad 1\leq i\leq n. 
\eeq
Similarly, by Lemma~\ref{lem:oracle-HOSVD}, $\Xi_{0,i}=\|\theta\|^{-1}\theta_i\cdot \pi_i^{\top}D^{-1}U_{2:K}$, where $U_{2:K}$ contains the second to $K$-th columns of $U$. It follows that $\|\Xi_{0,i}\|\leq C\|\theta\|^{-1}\theta_i\cdot \|\pi_i\|\|D^{-1}\|\|U_{2:K}\|$, where $\|\pi_i\|\leq 1$, $\|D^{-1}\|\leq C\sqrt{K}$, and $\|U_{2:K}\|\leq 1$. We thus have
\beq \label{thm-main-4}
\|\Xi_{0,i}\|\leq C\sqrt{K}\|\theta\|^{-1}\theta_i, \qquad 1\leq i\leq n.  
\eeq
We plug \eqref{thm-main-3}-\eqref{thm-main-4} into \eqref{thm-main-2}. It yields that
\begin{align}  \label{thm-main-5}
\|\hat{r}_i^*-Hr_i\| & \leq \frac{1}{\xi_1(i)}\|\hat{\Xi}_{0,i}-O\Xi_{0,i}\| + \frac{\|\hat{r}_i^*\|}{\xi_1(i)}|\hat{\xi}_1(i)-\omega\xi_1(i)|\cr
&\leq \frac{C\|\theta\|}{\theta_i}\Bigl(\|\hat{\Xi}_{0,i}-O\Xi_{0,i}\| + \|\hat{r}_i^*\|\cdot|\hat{\xi}_1(i)-\omega\xi_1(i)|\Bigr). 
\end{align}
Introduce a set
\[
J = \bigl\{1\leq i\leq n: |\hat{\xi}_1(i)-\omega \xi_1(i)|\leq \xi_1(i)/3, \;\|\hat{\Xi}_{0,i}-O\Xi_{0,i}\|\leq \sqrt{K}\xi_1(i) \bigr\}. 
\]
For any $i\in J$, $\|\hat{\Xi}_{i,0}\|\leq \|\Xi_{0,i}\|+\|\hat{\Xi}_{0,i}-O\Xi_{0,i}\|\leq \|\Xi_{0,i}\|+\sqrt{K}\xi_1(i)$. Combining this with \eqref{thm-main-4}-\eqref{thm-main-5}, we find out that $\|\hat{\Xi}_{i,0}\|\leq C\sqrt{K}\xi_1(i)$. At the same time, $|\hat{\xi}_1(i)|\geq 2\xi_1(i)/3$. It follows that
\[
\|\hat{r}_i^*\|\leq C\sqrt{K}, \qquad \mbox{for all }i\in J. 
\] 
In particular, such $\hat{r}_i^*$ will not be truncated, i.e., $\hat{r}_i=\hat{r}_i^*$ for $i\in J$. 
Plugging it into \eqref{thm-main-5} and using $\theta_i\geq \theta_{\min}$, we find that
\[
\|\hat{r}_i-Hr_i\|=\|\hat{r}_i^*-Hr_i\|\leq \frac{C\|\theta\|}{\theta_{\min}}\Bigl(\|\hat{\Xi}_{0,i}-O\Xi_{0,i}\| + \sqrt{K} |\hat{\xi}_1(i)-\omega\xi_1(i)|\Bigr), \quad\mbox{for }i\in J. 
\]
We take the sum of squares over $i$ on both sides and use the universal inequality $(a+b)^2\leq 2a^2+2b^2$. It follows that
\begin{align} \label{thm-main-6}
\sum_{i\in J}\|\hat{r}_i-Hr_i\|^2 &\leq \frac{C\|\theta\|^2}{\theta^2_{\min}} \bigl( \|\hat{\Xi}_{0}-\Xi_{0}O^{\top}\|_F^2  +K\|\hat{\xi}_1-\omega\xi_1\|^2\bigr)\cr
&\leq C(\|\theta\|^2/\theta_{\min}^2)\cdot K\gamma_n^{-2}\widetilde{err}_n^2\cr
&\leq CnK\gamma_n^{-2}err_n^2, 
\end{align}
where the second line is from \eqref{thm-main-1} and the last line is from the connection between $err_n$ and $\widetilde{err}_n$. Furthermore, we consider $i\notin J$. Let
\[
I_1 = \bigl\{1\leq i\leq n: |\hat{\xi}_1(i)-\omega \xi_1(i)|> \xi_1(i)/3\bigr\}, \qquad I_2 =\{1\leq i\leq n: \|\hat{\Xi}_{0,i}-O\Xi_{0,i}\|> \sqrt{K}\xi_1(i)\bigr\}.
\]
It is seen that $J^c\subset I_1\cup I_2$. 
By \eqref{thm-main-3}-\eqref{thm-main-4}, for $i\in I_1$, $|\hat{\xi}_1(i)-\omega \xi_1(i)|> C\theta_{\min}/\|\theta\|$, and for $i\in I_2$, $\|\hat{\Xi}_{0,i}-O\Xi_{0,i}\|>C\sqrt{K}\theta_{\min}/\|\theta\|$.  
It follows that 
\[
\frac{C\theta^2_{\min}}{\|\theta\|^2}|I_1|\leq \sum_{i\in I_1}|\hat{\xi}_1(i)-\omega\xi_1(i)|^2\leq \|\hat{\xi}_1-\omega\xi_1\|^2, \quad \frac{K\theta_{\min}^2}{\|\theta\|^2}|I_2|\leq \sum_{i\in I_2} \|\hat{\Xi}_{0,i}-O\Xi_{0,i}\|^2 \leq \|\hat{\Xi}_0-\Xi O^{\top}\|_F^2. 
\]
Combining them with \eqref{thm-main-1}, we immediately have
\[
|I_1|\leq \frac{C\|\theta\|^2}{\theta_{\min}^2} \|\hat{\xi}_1-\omega \xi_1\|^2 \leq Cn\gamma_n^{-2}err_n^2, \qquad |I_2|\leq \frac{C\|\theta\|^2}{K\theta_{\min}^2}\|\hat{\Xi}_0-\hat{\Xi}O^{\top}\|\leq Cn\gamma_n^{-2}err_n^2. 
\]
It follows that
\beq \label{thm-main-7}
|J^c|\leq |I_1|+|I_2|\leq Cn\gamma_n^{-2}err_n^2. 
\eeq
So far, we have obtained two useful claims, \eqref{thm-main-6} and \eqref{thm-main-7}, about the matrix $\hat{R}$.

Next, we study the matrix $R$. Let the matrix $U$ be the same as in Lemma~\ref{lem:oracle-HOSVD}. Define $V_*=[\mathrm{diag}(u_1)]^{-1}U_{2:K}$, where $u_1$ is the first column of $U$ and $U_{2:K}$ contains the second to $K$th column of $U$. By Lemma~\ref{lem:oracle-HOSVD},
\[
\xi_1(i) = \|\theta\|^{-1}\theta_i \cdot d_k^{-1}u_1(k), \qquad \Xi_{0,i} = \|\theta\|^{-1}\theta_i \cdot d_k^{-1} e_k^{\top}U_{2:K}, \qquad\mbox{for all }i\in V_k. 
\]
It follows that 
\[
r_i = \frac{1}{\xi_1(i)} \Xi_{0,i}=\frac{1}{u_1(k)}\cdot e_k^{\top}U_{2:K} = e_k^{\top}V_*, \qquad\mbox{for all }i\in V_k.  
\]
Write $v_k^*=V_*^{\top}e_k\in\mathbb{R}^{K-1}$, for $1\leq k\leq K$. The above imply that
\beq  \label{thm-main-8}
\begin{array}{l}
\mbox{$\{r_i\}_{i=1}^n$ only take $K$ distinct values $v_1^*,v_2^*,\ldots,v_K^*$,}\\
\mbox{and $r_i=v_k^*$ for all nodes in community $k$, for $1\leq k\leq K$}. 
 \end{array}
\eeq
Furthermore, observing that $[{\bf 1}_K, V_*]=[\mathrm{diag}(u_1)]^{-1}U$, we have
\[
\begin{bmatrix}0\\v_k^*-v_\ell^*\end{bmatrix} =\begin{bmatrix}{\bf 1}_K^{\top}(e_k-e_\ell )\\V^{\top}_*(e_k-e_\ell) \end{bmatrix}= \begin{bmatrix}{\bf 1}^{\top}_K\\V_*^{\top}\end{bmatrix}(e_k-e_\ell) = U^{\top}[\mathrm{diag}(u_1)]^{-1}(e_k-e_\ell). 
\]
Here, $U$ is an orthogonal matrix. Additionally, 
by condition \eqref{cond-eigenvector}, all the entries of $u_1$ are at the order of $1/\sqrt{K}$. It follows that $\|v_k^*-v^*_\ell \|=\|[\mathrm{diag}(u_1)]^{-1}(e_k-e_\ell)\|\geq C \sqrt{K} \|e_k-e_\ell\|\geq C\sqrt{2K}$, when $k\neq \ell$. It follows that
\beq  \label{thm-main-9}
\min_{1\leq k\neq \ell\leq K}\|v_k^*-v^*_\ell\|\geq c_0\sqrt{K}, \qquad \mbox{for a constant }c_0>0. 
\eeq 

Last, we analyze the output of applying k-means clustering on $\{\hat{r}_i\}_{i=1}^n$. Let $H$ be the orthogonal matrix in \eqref{thm-main-1}. In preparation, we consider placing the $K$ cluster centers at $Hv_1^*,\ldots,Hv_K^*$ and assign all nodes in community $k$ to the cluster associated with $v_k^*$. Let $RSS^*$ denote the k-means objective value for this solution. Using \eqref{thm-main-8}, we have
\begin{align*}
RSS^*= \sum_{k=1}^K \sum_{i\in V_k}\|\hat{r}_i - Hv_k^*\|^2=\sum_{i=1}^n \|\hat{r}_i - Hr_i\|^2= \sum_{i\in J}\|\hat{r}_i - Hr_i\|^2 + \sum_{i\notin J}\|\hat{r}_i-Hr_i\|^2. 
\end{align*}
By definition of $\hat{r}_i$, $\|\hat{r}_i\|\leq\sqrt{K\log(n)}$. At the same time, by \eqref{thm-main-3}-\eqref{thm-main-4}, $\|Hr_i\|=\|r_i\|\leq C\sqrt{K}$. It follows that $\|\hat{r}_i-Hr_i\|\leq C\sqrt{K\log(n)}$ for all $i$. As a result,
\beq \label{thm-main-10}
RSS^*\leq \sum_{i\in J}\|\hat{r}_i - Hr_i\|^2+|J^c|\cdot CK\log(n)\leq CnK\gamma_n^{-2}err_n^2\log(n),
\eeq
where we have used \eqref{thm-main-6} and \eqref{thm-main-7} in the last inequality.

We then consider the solution that minimizes the k-means objective, where the attained objective value is denoted by $\widehat{RSS}$. Let $c_0$ be the same as in \eqref{thm-main-9}. Define
\[
J^* = \bigl\{i\in J: \|\hat{r}_i - H r_i\| \leq c_0\sqrt{K}/9\bigr\}, 
\]
The definition of $J^*$ implies that $|J\setminus J^*|\cdot c_0^2K/9^2\leq \sum_{i\in J\setminus J^*}\|\hat{r}_i-Hr_i\|^2\leq \sum_{i\in J}\|\hat{r}_i-Hr_i\|^2$, where by \eqref{thm-main-6}, $\sum_{i\in J}\|\hat{r}_i-Hr_i\|^2\leq CnK\gamma_n^{-2}err_n^2$. It follows that
\beq \label{thm-main-11}
|J\setminus J^*|\leq Cn\gamma_n^{-2}err_n^2. 
\eeq
Additionally, by definition of $J^*$ again, 
\beq \label{thm-main-12}
\|\hat{r}_i - Hv_k^*\|= \|\hat{r}_i-Hr_i\|\leq c_0\sqrt{K}/9, \qquad \mbox{for all }i\in V_k\cap J^*. 
\eeq
Now, suppose for some $k$, the k-means algorithm places no cluster center within a distance of $c_0\sqrt{K}/3$ to $Hv_k^*$. First, by \eqref{thm-main-12}, for any $i\in V_k\cap J^*$, the distance from $\hat{r}_i$ to the closest cluster center is $\geq c_0\sqrt{K}/3-c_0\sqrt{K}/9=2c_0\sqrt{K}/9$. Second, by \eqref{thm-main-7} and \eqref{thm-main-11}, $|V_k\setminus J^*|\leq Cn\gamma_n^{-2}err_n^2$. Since condition \eqref{cond-balance} implies $|V_k|\asymp n/K$, we must have $|V_k\cap J^*|\geq C n/K$. Combining the above, 
\[
\widehat{RSS}\geq C(n/K)\cdot (2c_0\sqrt{K}/9)^2\geq Cn. 
\]
Comparing it with \eqref{thm-main-10} and noting that $K\gamma_n^{-2}err_n^2\log(n)=o(1)$ by \eqref{cond-sparsity}, we obtain a contradiction. This means,
\beq \label{thm-main-13}
\mbox{for each $k$, there is a cluster center within a distance $c_0\sqrt{K}/3$ to $Hv_k^*$}. 
\eeq
By \eqref{thm-main-9}, the distance between two distinct $v_k^*$ and $v_\ell^*$ is at least $c_0\sqrt{K}$. Hence, one cluster center cannot be simultaneously within a distance $c_0\sqrt{K}/3$ to two distinct $v_k^*$. Then, for each $k$, the cluster center that satisfies \eqref{thm-main-13} is unique. Denote this unique cluster center by $\hat{v}_k^*$. For $i\in V_k\cap J^*$, by \eqref{thm-main-12},
\[
\|\hat{r}_i - \hat{v}_k^*\|\leq \|\hat{r}_i-Hv_k^*\| + \|Hv_k^*-\hat{v}_k^*\|\leq c_0\sqrt{K}/9+c_0\sqrt{K}/3\leq 4c_0\sqrt{K}/9. 
\]
At the same time, for $\ell\neq k$, by \eqref{thm-main-9}, the distance between $\hat{v}^*_\ell$ and $Hr_i=Hv^*_k$  is at least $c_0\sqrt{K}-c_0\sqrt{K}/3=2c_0\sqrt{K}/3$. It follows that
\[
\|\hat{r}_i - \hat{v}_\ell^*\|\geq  \|Hr_i-\hat{v}_\ell^*\|- \|\hat{r}_i-Hr_i\|\geq 2c_0\sqrt{K}/3-c_0\sqrt{K}/9\geq 5c_0\sqrt{K}/9. 
\]
By the nature of k-means, this node $i$ has to be assigned to $\hat{v}^*_k$, but not other $\hat{v}^*_\ell$. We have proved that
\beq  \label{thm-main-14}
\mbox{for each $k$, all nodes in $V_k\cap J^*$ are assigned to the cluster center $\hat{v}_k^*$}. 
\eeq
Therefore, the clustering errors can only occur for nodes not in $J^*$. Using \eqref{thm-main-7} and \eqref{thm-main-11}, we conclude that the number of clustering errors is bounded by
\[
|(J_*)^c|\leq |J^c|+|J\setminus J^*|\leq Cn\gamma_n^{-2}err_n^2. 
\] 
This proves the claim.

\subsection{Proof of Corollary~\ref{cor:hSBM-symmetric}}
We study the singular values and singular vectors of $G$. In this setting, the diagonal matrix $D$ equals to $(1/\sqrt{K})I_K$, hence, $G=K^{-m/2}\M_1(\cP)$. The core tensor can be written as
\[
\cP =  \sum_{k=1}^K (1-b) \times_1e_k\times_2\cdots\times_m e_k +  b\times_1{\bf 1}_K\times_2\cdots\times_m {\bf 1}_K. 
\]
It follows from the matricization formula (\cite[Page 462]{kolda2009tensor}) that
\[
\M_1(\cP) = (1-b) \sum_{k=1}^K e_k (\underbrace{e_k\tilde\otimes\cdots\tilde\otimes e_k}_{m-1})^{\top} + b\, {\bf 1}_K (\underbrace{{\bf 1}_K\tilde\otimes\cdots\tilde\otimes {\bf 1}_K}_{m-1})^{\top}
\]
where $\tilde\otimes$ denotes the Kronecker product. 
Note that ${\bf 1}_K\tilde\otimes\cdots \tilde\otimes{\bf 1}_K={\bf 1}_{K^{m-1}}$ and $e_k\tilde\otimes\cdots\tilde\otimes e_k$ has only one nonzero entry. It follows that $(e_k\tilde\otimes\cdots\tilde\otimes e_k)^{\top}(e_\ell\tilde\otimes\cdots\tilde\otimes e_\ell)=1\{k=\ell\}$ and $(e_k\tilde\otimes\cdots\tilde\otimes e_k)^{\top}({\bf 1}_K\tilde\otimes\cdots\tilde\otimes {\bf 1}_K)=1$. As a result,
\begin{eqnarray} \label{hSBM-symm-1}
\M_1(\cP)[\M_1(\cP)]^{\top}& =&(1-b)^2\sum_{k=1}^K e_ke_k^{\top} + b(1-b)\sum_{k=1}^K (e_k{\bf 1}_K^{\top}+{\bf 1}_Ke_k^{\top})+ b^2K^{m-1}{\bf 1}_K{\bf 1}^{\top}_K\cr
&=& (1-b)^2\sum_{k=1}^K e_ke_k^{\top} + b(1-b) ({\bf 1}_K{\bf 1}_K^{\top}+{\bf 1}_K{\bf 1}_K^{\top})+ b^2K^{m-1}{\bf 1}_K{\bf 1}^{\top}_K\cr
&=& (1-b)^2\sum_{k=1}^K e_ke_k^{\top} + [b^2K^{m-1}+2b(1-b)]\,{\bf 1}_K{\bf 1}_K^{\top}. 
\end{eqnarray}
For any $y,z>0$, the eigenvalues of the matrix $y\sum_{k=1}^Ke_ke_k^{\top}+z{\bf 1}_K{\bf 1}_K^{\top}$ are $\{y+Kz, y,\ldots, y\}$. Therefore, the eigenvalues of $\M_1(\cP)[\M_1(\cP)]^{\top}$ are 
\[
(1-b)^2+2Kb(1-b)+b^2K^m,\; (1-b)^2,\;\ldots\;, (1-b)^2. 
\] 
Note that $\kappa_k$'s are the singular values of $K^{-m/2}\M_1(P)$. It follows that
\beq \label{hSBM-symm-2}
\kappa_1 = \sqrt{b^2+\tfrac{2}{K^{m-1}}b(1-b)+\tfrac{1}{K^{m}}(1-b)^2}, \qquad \kappa_2=\cdots=\kappa_K=\tfrac{1}{\sqrt{K^{m}}}(1-b). 
\eeq
Furthermore, by \eqref{hSBM-symm-1}, the leading eigenvector of $\M_1(\cP)[\M_1(\cP)]^{\top}$ is $(1/\sqrt{K}){\bf 1}_K$. Hence, the leading left singular vector of $G$ is $(1/\sqrt{K}){\bf 1}_K$. We immediately see that 
the conditions \eqref{cond-balance}-\eqref{cond-eigenvector} are all satisfied, with $\gamma_n=K^{-m/2}(1-b)$. The claim follows immediately from Theorem~\ref{thm:SCOREmain}.

\subsection{Proof of Theorem~\ref{thm:init1}}
By the definition of $\A$, we write 
\begin{align*}
\Big[\calM_1(\A)\calM_1^{\top}(\A)-{\rm diag}\big(\calM_1(\A)&\calM_1^{\top}(\A)\big)\Big]_{j_1j_2}=\big[\calM_1(\Q)\calM_1(\Q)\big]_{j_1j_2}\\
+&\big[\calM_1^{\top}(\A)\calM_1(\A)-\calM_1(\Q)\calM_1^{\top}(\Q)\big]_{j_1j_2}
\end{align*}
for $1\leq j_1\neq j_2\leq n$. Denote $\Delta=\calM_1(\A)\calM_1^{\top}(\A)-\calM_1(\Q)\calM_1^{\top}(\Q)$. Then,
\begin{align*}
\calM_1(\A)\calM_1^{\top}(\A)-&{\rm diag}\big(\calM_1(\A)\calM_1^{\top}(\A)\big)\\
=&\calM_1(\Q)\calM_1^{\top}(\Q)-{\rm diag}\big(\calM_1(\Q)\calM_1^{\top}(\Q)\big)+\big(\Delta-{\rm diag}(\Delta)\big).
\end{align*}
Clearly, 
$$
\sigma_K\big(\calM_1(\Q)\calM_1^{\top}(\Q)\big)\geq \kappa_K^2\|\theta\|^{2m}
$$
and 
$$
\big\|{\rm diag}\big(\calM_1(\Q)\calM_1^{\top}(\Q)\big) \big\|\leq \calP_{\submax}^2\thetamax^2\|\theta\|^{2(m-1)}. 
$$
It suffices to prove the bound for $\|\Delta-{\rm diag}(\Delta)\|$. For $1\leq j_2\leq j_3\leq \cdots\leq j_m\leq n$, denote
$$
\xi_{j_2\cdots j_m}=\big(\calA(1,j_2,\cdots,j_m)-\Q(1,j_2,\cdots,j_m), \cdots, \calA(n,j_2,\cdots,j_m)-\Q(n,j_2,\cdots,j_m)\big)^{\top}
$$
which has independent entries with zero-means. Denote also
$$
q_{j_2\cdots j_m}=\big(\Q(1,j_2,\cdots,j_m),\Q(2,j_2,\cdots,j_m), \cdots,\Q(n,j_2,\cdots,j_m)\big)^{\top}.
$$
Due to symmetricity, we have 
\begin{align*}
\|\Delta-{\rm diag}(\Delta)\|\leq (m!)\Big\|\sum_{1\leq j_2\leq \cdots\leq j_m\leq n}\xi_{j_2\cdots j_m}\xi_{j_2\cdots j_m}^{\top}-{\rm diag}\big(\xi_{j_2\cdots j_m}\xi_{j_2\cdots j_m}^{\top}\big)  \Big\|\\
+2(m!)\Big\|\sum_{1\leq j_1\leq\cdots\leq j_m\leq n}\xi_{j_2\cdots j_m}q_{j_2\cdots j_m}^{\top}- {\rm diag}\big(\xi_{j_2\cdots j_m}q_{j_2\cdots j_m}^{\top}\big)\Big\|
\end{align*}
where the first term usually dominates. W.L.O.G., we only prove the upper bound for the first term. 
Denote the matrix $Z_{j_2\cdots j_m}=\xi_{j_2\cdots j_m}\xi_{j_2\cdots j_m}^{\top}-{\rm diag}\big(\xi_{j_2\cdots j_m}\xi_{j_2\cdots j_m}^{\top}\big)$. Clearly, the matrices $Z_{j_2\cdots j_m}$ are independent with all $1\leq j_2\leq \cdots\leq j_m\leq n$. It is easy to check that $\EE Z_{j_2\cdots j_m}Z_{j_2\cdots j_m}$ is a diagonal matrix. Meanwhile, for $i\in[n]$, 
\begin{align*}
\big(\EE Z_{j_2\cdots j_m}&Z_{j_2\cdots j_m}\big)_{ii}
\leq \EE \|\xi_{j_2\cdots j_m}\|^2\xi_{j_2\cdots j_m}(i)^2-\EE \xi_{j_2\cdots j_m}(i)^4=\EE \xi_{j_2\cdots j_m}(i)^2\sum_{j_1\neq i}^n\xi_{j_2\cdots j_m}(j_1)^2\\
\leq& \Q(i,j_2,j_3,\cdots,j_m)\beta(\Q).
\end{align*}
As a result, 
$$
\sum_{1\leq j_2\leq \cdots\leq j_m\leq n}\big(\EE Z_{j_2\cdots j_m}Z_{j_2\cdots j_m}\big)_{ii}\leq \beta(\Q)\cdot \calP_{\submax}\thetamax\|\theta\|_1^{m-1}.
$$
By Bernstein inequality, for all $t>0$, 
$$
\PP\Big(\|\xi_{j_2\cdots j_m}\|^2\geq C_1\beta(\Q)\sqrt{t}+C_2t\Big)\leq e^{-t}
$$
for some absolute constants $C_1,C_2>0$. It implies that the $\psi_1$-norm of $\|Z_{j_2\cdots j_m}\|$ is bounded by 
$$
\big\|\|Z_{j_1\cdots j_m}\| \big\|_{\psi_1}\leq C_1\beta(\Q)+C_2
$$
for some absolute constants $C_1,C_2>0$. 
By matrix Bernstein inequality (\cite{koltchinskii2011neumann}), with probability at least $1-n^{-2}$, 
\begin{align*}
 \|\Delta-{\rm diag}(\Delta)\|\leq C_1(m!)\cdot \big(\sqrt{ \beta(\Q)\calP_{\submax}\thetamax\|\theta\|_1^{m-1}\log n}+\log n\big)
\end{align*}
As a result, by Davis-Kahan theorem, with probability at least $1-n^{-2}$,
\begin{align*}
\|\what\Xi^{(0)}\what \Xi^{(0)\top}-\Xi\Xi\|\leq C_1(m!)\frac{\sqrt{ \beta(\Q)\calP_{\submax}\thetamax\|\theta\|_1^{m-1}\log n}+\log n}{\kappa_K^2\|\theta\|^{2m}}+\frac{C_2\calP_{\submax}^2\thetamax^2}{\kappa_K^2\|\theta\|^2}
\end{align*}
for some absolute constants $C_1,C_2>0$.

\subsection{Proof of Proposition~\ref{prop:graph-eigenvalue}}
Fix $v\in\mathbb{R}^K$. Write $B(v)=\cP_*\times_3v^{\top}\times_4\cdots\times_m v^{\top}$, which is a $K\times K $ matrix. Suppose  
\[
c=\widetilde{\lambda}_{\min}(\cP_*,v)\equiv \lambda_{\min}(B(v))/\|\underbrace{v\otimes v\otimes\cdots\otimes v}_{m-2}\|
\] 
It suffices to show that 
\beq \label{graph-eigenval-1}
\|x^{\top}\M_1(\calP_*)\|\geq c\cdot\|x\|, \qquad \mbox{for all $x\in\mathbb{R}^K$ with $\|x\|\neq 0$}. 
\eeq

Below, we show \eqref{graph-eigenval-1}. Introduce a collection of $K\times K$ matrices
\[
\cP_*^{(s_3,\ldots,s_m)} = \cP_*\times_3e_{s_3}\times_4\cdots\times_me_{s_m}, \qquad \mbox{for all }1\leq e_3,\ldots,e_m\leq K. 
\]
We can write 
\[
\M_1(\cP^*) = \bigl[ \cP_*^{(1,1,\ldots,1)},\;  \cP_*^{(1,2,\ldots,1)},\;\; \ldots,\;\; \cP_*^{(K,K,\ldots,K)}\bigr]. 
\]
It follows that
\begin{eqnarray} \label{graph-eigenval-2}
\|x^{\top}\M_1(\cP^*)\|^2 &=& \bigl\| \bigl[ x^{\top}\cP_*^{(1,1,\ldots,1)},\;  x^{\top}\cP_*^{(1,2,\ldots,1)},\;\; \ldots,\;\; x^{\top}\cP_*^{(K,K,\ldots,K)}\bigr]\bigr\|^2\cr
&=& \sum_{1\leq s_3,\ldots,s_m\leq K}\bigl\|x^{\top}\cP_*^{(s_3,\ldots,s_m)}\bigr\|^2. 
\end{eqnarray}
At the same time, we can write
\begin{eqnarray*}
B(v) &= & \cP_*\times_3\Bigl(\sum_{k=1}^K v_ke_k^{\top}\Bigr)\times_4\cdots\times_m\Bigl(\sum_{k=1}^K v_ke_k^{\top}\Bigr)\cr
&=&\sum_{1\leq s_3,\ldots,s_m\leq K} \cP_*\times_3 (v_{s_3} e^{\top}_{s_3})\times_4\cdots\times_m (v_{s_m} e^{\top}_{s_m})\cr
&=&\sum_{1\leq s_3,\ldots,s_m\leq K} (v_{s_3}\cdots v_{s_m}) \cdot \cP_*^{(s_3,\ldots,s_m)}. 
\end{eqnarray*}
It follows that
\begin{eqnarray}  \label{graph-eigenval-3} 
\|B(v)x\|^2 &=& \Bigl\| \sum_{1\leq s_3,\ldots,s_m\leq K} (v_{s_3}\cdots v_{s_m}) \cdot (\cP_*^{(s_3,\ldots,s_m)}x)\Bigr\|^2\cr
&\leq & \Bigl( \sum_{1\leq s_3,\ldots,s_m\leq K} |v_{s_3}\cdots v_{s_m}| \cdot \bigl\| \cP_*^{(s_3,\ldots,s_m)}x\bigr\|\Bigr)^2\cr
&\leq & \Bigl( \sum_{1\leq s_3,\ldots,s_m\leq K} |v_{s_3}\cdots v_{s_m}|^2\Bigr)\cdot \Bigl( \sum_{1\leq s_3,\ldots,s_m\leq K}\bigl\| \cP_*^{(s_3,\ldots,s_m)}x\bigr\|^2 \Bigr)\cr
&\leq &\|\underbrace{v\tilde\otimes v\tilde\otimes\cdots\tilde\otimes v}_{m-2}\|^2\cdot \Bigl( \sum_{1\leq s_3,\ldots,s_m\leq K}\bigl\| \cP_*^{(s_3,\ldots,s_m)}x\bigr\|^2 \Bigr) 
\end{eqnarray}
where $\tilde\otimes$ denotes the Kronecker product. 
Combining \eqref{graph-eigenval-2} and \eqref{graph-eigenval-3}, we have
\[
 \|B(v)x\|^2\leq \|x^{\top}\M_1(\cP_*)\|^2\cdot \|\underbrace{v\tilde\otimes v\tilde\otimes\cdots\tilde\otimes v}_{m-2}\|^2. 
\]
Additionally, by definition of eigenvalues, 
\[
\|B(v)x\|^2 \geq \lambda^2_{\min}(B(v)) \cdot \|x\|^2 = c^2 \|x\|^2\cdot \|\underbrace{v\tilde\otimes v\tilde\otimes\cdots\tilde\otimes v}_{m-2}\|^2. 
\]
Then, \eqref{graph-eigenval-1} follows immediately.

\subsection{Proof of Theorem~\ref{thm:init2}}
Let  $w=(w_1,\cdots,w_n)=D^{-1}\Pi^{\top}\Theta\eta\in S_K(\eps)$ where $\eta$'s entries follows uniform distribution in $[1-\eps, 1+\eps]$.  We write
\begin{align*}
\calA\times_3 \eta^{\top}&\times_4\cdots\times_m \eta^{\top}\\
=&\Q\times_3 \eta^{\top}\times_4\cdots\times_m \eta^{\top}+(\A-\EE\calA)\times_3 \eta^{\top}\times_4\cdots\times_m \eta^{\top}\\
&\hspace{2cm}-{\rm diag}(\Q)\times_3 \eta^{\top}\times_4\cdots\times_m \eta^{\top}.
\end{align*}
We denote $\tilde \omega=D^{-1}\Pi^{\top}\Theta\tilde\eta\in S_K(\eps)$ so that $\tilde\lambda_{\min}(\calP_{\ast},\tilde\omega)\geq \tilde\gamma_n$. Then, by the definition of $\calP_{\ast}$, we obtain
\begin{align*}
\sigma_K\big(\calQ\times_3 \eta^{\top}\times_4\cdots\times_m\eta^{\top}\big)=\sigma_K\big(\calP_{\ast}\times_1(D^{-1}\Pi^{\top}\Theta)^{\top}\times_2(D^{-1}\Pi^{\top}\Theta)^{\top}\times_3 \omega^{\top}\times_4\cdots\times_m \omega^{\top}\big)\\
\geq \sigma_K\big(\mathcal{P}_{\ast}\times_1(D^{-1}\Pi^{\top}\Theta)^{\top}\times_2(D^{-1}\Pi^{\top}\Theta)^{\top}\times_3\tilde\omega^{\top}\times_4\cdots\times_m\tilde\omega^{\top}\big)\\
-\sigma_1\big(\mathcal{P}_{\ast}\times_1(D^{-1}\Pi^{\top}\Theta)^{\top}\times_2(D^{-1}\Pi^{\top}\Theta)^{\top}(\bigtimes_3^m\omega^{\top}-\bigtimes_{3}^m \tilde\omega^{\top})\big).
\end{align*}
By the definition of $\widetilde{\gamma}_n$, we get that 
\begin{align*}
\sigma_K\big(&\Q\times_3\eta^{\top}\times_4\cdots\times_m \eta^{\top}\big) \geq \|\theta\|^2\widetilde{\gamma}_n\|\tilde\omega\|^{m-2}-m\cdot \|\calM_1(\calP_{\ast})\|\|\theta\|^2\|\omega-\tilde\omega\|\cdot (\|\omega\|\vee \|\tilde\omega\|)^{m-3}.
\end{align*}
Recall that $\omega-\tilde\omega=D^{-1}\Pi^{\top}\Theta(\eta-\tilde\eta)$ implying that $\|\omega-\tilde\omega\|\leq \|\theta\|\|\eta-\tilde\eta\|\leq \|\theta\|2\eps\sqrt{n}$. Then, we obtain 
\begin{align*}
\sigma_K\big(\Q\times_3\eta^{\top}\times_4\cdots\times_m \eta^{\top}\big) \geq \|\theta\|^2\widetilde{\gamma}_n\|\tilde\omega\|^{m-2}-2m\sqrt{n}\|\theta\|^3\eps\|\calM_1(\calP_{\ast})\|\cdot (\|\omega\|\vee \|\tilde\omega\|)^{m-3}\\
\geq  \|\theta\|^2\widetilde{\gamma}_n\|\tilde\omega\|^{m-2}\cdot\Big(1-\frac{2m\sqrt{n}\|\theta\|\eps \|\calM_1(\calP_{\star})\|\cdot  (\|\omega\|\vee \|\tilde\omega\|)^{m-3}}{\widetilde{\gamma}_n\|\tilde\omega\|^{m-2}}\Big)\\
\geq \|\theta\|^2\widetilde{\gamma}_n\|\tilde\omega\|^{m-2}\cdot\Big(1-\frac{2m\eps\|\calM_1(\calP_{\star})\|}{\widetilde\gamma_n(1-\eps)}\cdot \Big(\frac{1+\eps}{1-\eps}\Big)^{m-3}\Big)
\end{align*}
where the last inequality is due to the definition of $\omega=D^{-1}\Pi^{\top}\Theta\eta$ and $\tilde\omega=D^{-1}\Pi^{\top}\Theta\tilde\eta$. Therefore, if 
$$
\frac{1}{2}\geq \frac{2m\eps\|\calM_1(\calP_{\star})\|}{\widetilde\gamma_n(1-\eps)}\cdot \Big(\frac{1+\eps}{1-\eps}\Big)^{m-3}
$$
, then $\sigma_K\big(\Q\times_3\eta^{\top}\times_4\cdots\times_m \eta^{\top}\big) \geq  \|\theta\|^2\widetilde{\gamma}_n\|\tilde\omega\|^{m-2}/2$.
If $\dmin\asymp\dmax\asymp 1/\sqrt{K}$, since $\theta$ and $\tilde\eta$ are non-negative, then
$$
\|\tilde\omega\|^2=\|D^{-1}\Pi^{\top}\Theta \tilde\eta\|^2\gtrsim K\sum_{k=1}^K\langle\theta_{V_k},\tilde\eta_{V_k} \rangle^2\geq \langle\theta,\tilde\eta \rangle^2
$$
and as a result 
$$
\sigma_K\big(\Q\times_3\eta^{\top}\times_4\cdots\times_m \eta^{\top}\big)\gtrsim \widetilde{\gamma}_n\cdot \|\theta\|^2\langle\theta,\tilde\eta \rangle^{m-2}.
$$
We first bound the diagonal part. Following the same argument as the proof of Lemma~\ref{lem:diagQ},
\begin{align*}
\big\|{\rm diag}(\Q)\times_3\eta^{\top}\times_4\cdots&\times_m \eta^{\top} \big\|\leq m^2\big\|\calD_{\Q}^{(1,2)}\times_3\eta^{\top}\times_4\cdots\times_m \eta^{\top}\big\|\\
\leq&m^2\cdot \max_{i} \Q(i,i,:,\cdots,:)\times_3 \eta^{\top}\times_4\cdots\times_m \eta^{\top}\\
\leq& m^2\calP_{\submax}\thetamax^2\langle \theta,\eta\rangle^{m-2}.
\end{align*}
Now, we bound (conditioned on $\eta$)
$
\big\| (\A-\EE\calA)\times_3 \eta^{\top}\times_4\cdots\times_m \eta^{\top}\big\|.
$
To cope with the complicated dependence of entries due to the symmetricity of $\calA$, we write 
$$
\calA=\sum_{\pi\in \textrm{ permutation of } [m]}\calA_{\pi}
$$
where $\calA_{\pi}=\sum_{i_{\pi(1)}\geq i_{\pi(2)}\geq\cdots\geq i_{\pi(m)}} \calA(i_1,\cdots,i_m)$. By definition, the entries of $\calA_{\pi}$ are independent. 
For any permutation $\pi$, denote
\begin{align*}
\Delta_{\pi}=(\A_{\pi}-&\EE\calA_{\pi})\times_3 w^{\top}\times_4\cdots\times_m w^{\top}\\
=&\sum_{\substack{i_1,i_2\geq 1\\ i_{\pi(1)}\geq \cdots\geq i_{\pi(m)}}}e_{i_1}e_{i_2}^{\top}\cdot \Delta_{\pi,i_1,i_2}\\
=&\sum_{\substack{i_1,i_2\geq 1\\ i_{\pi(1)}\geq \cdots\geq i_{\pi(m)}}}e_{i_1}e_{i_2}^{\top}\cdot\underbrace{\sum_{\substack{i_3,\cdots,i_m \\ i_{\pi(1)}\geq \cdots\geq i_{\pi(m)}}}\big(\A(i_1,\cdots,i_m)-\EE\A(i_1,\cdots,i_m)\big)\eta_{i_3}\cdots \eta_{i_m}}_{\Delta_{\pi,i_1,i_2}}.
\end{align*}
By Bernstein inequality, for all $t>0$, 
$$
\PP\big(|\Delta_{\pi,i_1,i_2}|\geq C_1\sqrt{\calP_{\submax}\thetamax^2\big<\theta,\eta^2\big>^{m-2}t}+C_2t\big)\leq e^{-t}
$$
for some absolute constant $C_1,C_2>0$ and $\eta^2$ denotes the entry-wise square of $\eta$. Therefore, $\Delta_{\pi,i_1,i_2}$ is centered sub-exponential with $\psi_1$-norm bounded as 
\begin{align*}
\big\|\Delta_{\pi, i_1,i_2}\big\|_{\psi_1}\leq C_1\calP_{\submax}\thetamax^2\langle\theta,\eta \rangle^{m-2}+C_2. 
\end{align*}
Meanwhile,  for any $i_1\in[n]$,
\begin{align*}
\sum_{i_2}\EE\Delta_{\pi, i_1,i_2}^2\leq \sum_{i_2,\cdots,i_m}Q(i_1,\cdots,i_m)\eta_{i_3}^2\cdots \eta_{i_m}^2\\
\leq \Q(i_1,:,\cdots,:)\times_2 {\bf 1}_n^{\top}\times_3 \eta^{2\top}\times_4\cdots\times_m \eta^{2\top}\leq \calP_{\submax}\thetamax\|\theta\|_1\langle\theta,\eta^2 \rangle^{m-2}.
\end{align*}
Therefore, 
\begin{align*}
\max\big\{\max_{i_2}\sum_{i_1}\EE\Delta_{\pi,i_1,i_2}^2, \max_{i_1}\sum_{i_2}\EE\Delta_{\pi, i_1,i_2}^2\big\}\leq C_1\calP_{\submax}\thetamax\|\theta\|_1\langle\theta,\eta^2 \rangle^{m-2}.
\end{align*}
By matrix Bernstein inequality (\cite{tropp2012user,koltchinskii2011neumann,koltchinskii2015optimal}) and the fact $\calA=\sum_{\pi}\calA_{\pi}$, with probability at least $1-n^{-2}$ , 
\begin{align*}
\big\| (\A-\EE\calA)\times_3 \eta^{\top}\times_4\cdots\times_m \eta^{\top}\big\|\leq  C_1(m!)\sqrt{\calP_{\submax}\thetamax\|\theta\|_1\langle\theta,\eta^2 \rangle^{m-2}\log n}+C_2(m!)\log n
\end{align*}
Then, by Davis Kahan Theorem, for any $\eta$ and $\tilde\eta$, with probability at least $1-n^{-2}$, 
\begin{align*}
\|\what\Xi^{(0)}&\what\Xi^{(0)\top}-\Xi\Xi^{\top}\|
\leq C_1(m!)\frac{\sqrt{\calP_{\submax}\thetamax\|\theta\|_1\langle\theta,\eta^2 \rangle^{m-2}\log n}+\calP_{\submax}\thetamax^2\langle\theta,\eta \rangle^{m-2}+\log n}{\widetilde{\gamma}_n\|\theta\|^2\|\langle\theta,\tilde\eta \rangle^{m-2}}.
\end{align*}
By the definition of $\eta$ and $\tilde\eta$, we obtain $(1-\eps)\|\theta\|_1\leq \langle \theta,\eta\rangle \leq (1+\eps)\|\theta\|_1$ and $\langle\theta,\eta^2 \rangle\leq (1+\eps)^2\|\theta\|_1$. 
As a result, with probability at least $1-n^{-2}$, 
$$
\|\what\Xi^{(0)}\what\Xi^{(0)\top}-\Xi\Xi^{\top}\|\leq C_1\Big(\frac{1+\eps}{1-\eps}\Big)^m(m!)\frac{\sqrt{\thetamax\|\theta\|_1^{m-1}\log n}+\log n+\thetamax^2\|\theta\|_1^{m-2}}{\widetilde{\gamma}_n\|\theta\|^2\|\theta\|_1^{m-2}}
$$
for some absolute constant $C_1>0$.

\section{Proofs of tensor concentration and power iterations}


\subsection{Proof of Theorem~\ref{thm:tensor_concentration}}
Let $\calR\in \{+1,-1\}^{n\times\cdots\times n}$ be a $m$-th order symmetric random tensor where each entry is a Rademacher random variable, i.e.,  
$$
\PP(\calR(i_1,\cdots,i_m)=+1)=\PP(\calR(i_1,\cdots,i_m)=-1)=1/2. 
$$
Let $\odot $ be the Hadamard product of two tensors, i.e., 
$$
(\calR\odot \calA)(i_1,\cdots,i_m)=\calR(i_1,\cdots,i_m)\calA(i_1,\cdots,i_m),\quad \forall\ i_j\in[n], j\in[m].
$$ 
Recall the definition of incoherent tensor operator norm:
$$
\|\calA-\EE\calA\|_{\delta}=\sup_{\calX\in\calU(\delta)} \big<\calX, \calA-\EE\calA\big>.
$$
By a standard symmetrization argument (see, e.g., \cite{yuan2016tensor} \cite{gine1984some}), we obtain
\begin{align*}
\PP\Big\{\|\calA-\EE\calA\|_{\delta}\geq 3t \Big\}\leq \max_{u_1\otimes\cdots\otimes u_m\in\calU(\delta)}\PP\Big\{\big<\calA-\EE\calA, u_1\otimes\cdots\otimes u_m\big>\geq t\Big\}+4\PP\Big\{\big\| \calR\odot \calA\big\|_{\delta}\geq t \Big\}
\end{align*}
for any $t>0$. We begin with the upper bound for $\big<\calA-\EE\calA, u_1\otimes\cdots\otimes u_m\big>$. For any $u_1\otimes\cdots\otimes u_m\in\calU(\delta)$,  write 
\begin{align*}
\big<\calA, u_1\otimes\cdots\otimes u_m\big>=\sum_{(i_1,\cdots,i_m)\in\mathfrak{P}(n,m)}\ \calA(i_1,\cdots,i_m) \sum_{(i_1',\cdots,i_m')\in\mathfrak{P}(i_1,\cdots,i_m)}u_1(i_1')u_2(i_2')\cdots u_m(i_m')
\end{align*}
where the set $\mathfrak{P}(n,m)$ is the collection of all subsets of $[n]$ with cardinality $m$ 
so that $\big|\mathfrak{P}(n,m) \big|={n\choose m}$ and $\mathfrak{P}(i_1,\cdot,i_m)$ denotes the set of all the permutations of $(i_1,\cdots,i_m)$, i.e.,
$$
\mathfrak{P}(i_1,\cdots,i_m)=\big\{(i_{\pi(1)},i_{\pi(2)},\cdots,i_{\pi(m)}): \pi \textrm{ is a permutation of } (1,\cdots,m)\big\}.
$$
Since $u_1\otimes\cdots\otimes u_m\in\calU(\delta)$, we get
\begin{align*}
\Big| \sum_{(i_1',\cdots,i_m')\in\mathfrak{P}(i_1,\cdots,i_m)}u_1(i_1')u_2(i_2')\cdots u_m(i_m')\Big|\leq (m!)\cdot \delta^{m-2}.
\end{align*}
For $u_1\otimes\cdots\otimes u_m\in \calU_{1,2}(\delta)$, if $m\geq 3$, then 
\begin{align*}
{\rm Var}\big(\big<&\calA, u_1\otimes\cdots\otimes u_m\big>\big)\\
&\leq \sum_{(i_1,\cdots,i_m)\in\mathfrak{P}(n,m)}\ \EE \calA(i_1,\cdots,i_m)\Big( \sum_{(i_1',\cdots,i_m')\in\mathfrak{P}(i_1,\cdots,i_m)}u_1(i_1')u_2(i_2')\cdots u_m(i_m')\Big)^2\\
&\leq (m!)\cdot  \sum_{(i_1,\cdots,i_m)\in\mathfrak{P}(n,m)} \EE \calA(i_1,\cdots,i_m)\sum_{(i_1',\cdots,i_m')\in\mathfrak{P}(i_1,\cdots,i_m)}u_1(i_1')^2u_2(i_2')^2\cdots u_m(i_m')^2\\
&\leq (m!) \cdot  \sum_{i_1\neq\cdots\neq i_m\geq 1}^n \EE \calA(i_1,\cdots,i_m)u_1(i_1)^2u_2(i_2)^2\cdots u_m(i_m)^2\\\
&\leq  (m!)^2\cdot\sum_{i_m=1}^n u_m^2(i_m)\sum_{i_1\geq\cdots\geq i_{m-1}\geq 1}^n\EE \calA(i_1,\cdots,i_m) u_1(i_1)^2\cdots u_{m-1}(i_{m-1})^2\\
&\leq (m!)^2\Big[\max_{i_1\geq\cdots\geq i_{m-1}\in[n]} \sum_{i_m=1}^n u_m(i_m)^2\EE\calA(i_1,\cdots,i_m)\Big]\cdot \sum_{i_1,\cdots,i_{m-1}\geq 1}^n u_1(i_1)^2\cdots u_{m-1}(i_{m-1})^2\\
&\leq (m!)^2\delta^2\beta(\Q)
\end{align*}
where $\beta(\Q)$ is defined by 
$$
\beta(\Q):=\max_{i_1\geq \cdots\geq i_{m-1}\geq 1}\sum_{i_m=1}^n\Q(i_1,\cdots,i_m).
$$
By Bernstein inequality,  for any $t>0$,
\begin{align}
\max_{u_1\otimes\cdots\otimes u_m\in\calU(\delta)}&\PP\Big\{\big<\calA-\EE\calA, u_1\otimes\cdots\otimes u_k\big>\geq t\Big\}
\leq\exp\left(\frac{-t^2}{4(m!)^2\delta^2\beta(\Q)}\right)+\exp\left(\frac{-3t}{(m!)\cdot \delta^{m-2}}\right). \label{eq:maxP}
\end{align}

We then prove the bound for $\PP\big\{\|\calR\odot \calA\|_{\delta}\geq t\big\}$. Recall the definition of  $\|\cdot\|_{\delta}$ and the symmetricity of $\calA$, we write
$$
\|\calR\odot\calA\|_{\delta}=\sup_{u_1\otimes\cdots\otimes u_m\in \calU_{1,2}(\delta)} \big<\calR\odot \calA, u_1\otimes\cdots\otimes u_m\big>.
$$
Therefore,
\begin{align*}
\PP\big\{\|\calR\odot \calA\|_{\delta}\geq t\big\}=\PP\Big\{&\sup_{u_1\otimes\cdots\otimes u_m\in \calU_{1,2}(\delta)} \big<\calR\odot \calA, u_1\otimes\cdots\otimes u_m\big>\geq t\Big\}.
\end{align*}
We bound
$$
\PP\Big\{\sup_{u_1\otimes\cdots\otimes u_m\in \calU_{1,2}(\delta)} \big<\calR\odot \calA, u_1\otimes\cdots\otimes u_m\big>\geq t\Big\}.
$$
We now discretize $\calU_{1,2}(\delta)$ as 
\begin{align*}
\overline{\calU}_{1,2}(\delta):=\Big\{u_1\otimes\cdots\otimes u_m\in\calU_{1,2}(\delta): \|u_j\|\leq 1, u_j\in\big\{\pm 2^{i_j/2}/2^{\ceil{\log\sqrt{2n}}}, i_j=0,\cdots, p_j\big\}^{n}\Big\}
\end{align*}
where $p_j=p^{\star}=\ceil{\log(n)-1}$ for $j=1,2$ and $p_j=p_{\star}=\ceil{\log(\delta^2 n)-1}$ for $j=3,\cdots,m$. For any $u_1\otimes\cdots\otimes u_m\in\overline{\calU}_{1,2}(\delta)$, we have 
$$
u_1,u_2\in\big\{\pm 1, \pm 2^{-1/2},\cdots,\pm 2^{-p^{\star}/2}\big\}^n
$$
and also 
$$
u_3,u_4,\cdots,u_m\in\big\{\pm 2^{-\ceil{\log(\delta^{-2})}/2}, \pm 2^{-(1+\ceil{\log(\delta^{-2})})/2},\cdots,\pm 2^{-p^{\star}/2}\big\}.
$$
By \cite[Lemma~1]{yuan2017incoherent}, we have 
\begin{align*}
\PP\Big\{\sup_{u_1\otimes\cdots\otimes u_m\in \calU_{1,2}(\delta)}& \big<\calR\odot \calA, u_1\otimes\cdots\otimes u_m\big>\geq t\Big\}\\
\leq&\PP\Big\{\sup_{u_1\otimes\cdots\otimes u_m\in \overline{\calU}_{1,2}(\delta)} \big<\calR\odot \calA, u_1\otimes\cdots\otimes u_m\big>\geq 2^{-m-1}\cdot t\Big\}
\end{align*}
where the entropy of $\overline{U}_{1,2}(\delta)$ plays the key role. The main idea of proof id entropy and variance trade-off. 
 A simple fact (\cite{xia2017statistically}) is 
$$
{\rm Card}\big(\overline{\calU}_{1,2}(\delta)\big)\leq \exp(C_2mn)
$$
for some absolute constant $C_2>0$. Finer characterization of $\overline{U}_{1,2}(\delta)$ is needed. The idea is similar to \cite{yuan2017incoherent}. For any $U=u_1\otimes\cdots\otimes u_m\in\overline{U}_{1,2}(\delta)$, define 
$$
A_p(U)=\big\{(i_1,i_2): |u_1(i_1)u_2(i_2)|=2^{-p/2} \big\},\quad \forall\ p=0,1,\cdots,2p^{\star}
$$
$$
B_{p,q}(U)=\big\{(i_3,i_4,\cdots,i_m): (i_1,i_2)\in A_p(U), (i_3,\cdots,i_m)\in\Omega, |u_3(i_3)\cdots u_m(i_m)=2^{-q/2}| \big\}
$$
where $\Omega=\big\{(i_1,\cdots,i_m): \calA(i_1,\cdots,i_m)=1\big\}$, i.e., the position of all observed edges. The positive integer $q=(m-2)\ceil{\log\delta^{-2}},\cdots,(m-2)p^{\star}$. 

The main strategy is to exploit $\Omega$'s sparsity and investigate the effective entropy of $\overline{U}_{1,2}(\delta)$. For notational simplicity, we now write 
$$
U_{1,2}=u_1\otimes u_2\quad {\rm and}\quad U_{3,\cdots,m}=u_3\otimes\cdots\otimes u_m.
$$
Let $p_{1,2}\geq 0$ be an integer to be determined later. Then, for $U\in\overline{U}_{1,2}(\delta)$, we write 
\begin{align*}
\langle\calR\odot \calA, U \rangle=\big<\calR\odot \calA, &\calP_{S_{1,2}}(U_{1,2})\otimes U_{3,\cdots,m}\big>\\
&+\sum_{0\leq p\leq p_{1,2}}\sum_{q=(m-2)\ceil{\log\delta^{-2}}}^{(m-2)p^{\star}}\big<\calR\odot\calA, (\calP_{A_p}U_{1,2})\otimes (\calP_{B_{p,q}}U_{3,\cdots,m})\big>
\end{align*}
where we omit the dependence of $A_p$ and $B_{p,q}$ on $U$. 
Here the set $S_{1,2}$ is defined by (for each $U\in\overline{U}_{1,2}(\delta)$)
$$
S_{1,2}=\big\{(i_1,i_2): |U_{1,2}(i_1,i_2)|\leq 2^{-p_{1,2}/2-1/2}\big\}. 
$$
Clearly, $p_{1,2}mp^{\star}\leq m^2\log^2n$, then
\begin{align*}
\PP\bigg(\sum_{0\leq p\leq p_{1,2}}\sum_{q=(m-2)\ceil{\log\delta^{-2}}}^{(m-2)p^{\star}}\max_{u_1\otimes\cdots u_m\in\overline{U}_{1,2}(\delta)}\big<\calR\odot\calA, (\calP_{A_p}U_{1,2})\otimes (\calP_{B_{p,q}}U_{3,\cdots,m})\big>\geq 2^{-(m+1)}t/2\bigg)\\
\leq \sum_{0\leq p\leq p_{1,2}}\sum_{q=(m-2)\ceil{\log\delta^{-2}}}^{(m-2)p^{\star}}\PP\bigg(\max_{u_1\otimes\cdots u_m\in\overline{U}_{1,2}(\delta)}\big<\calR\odot\calA, (\calP_{A_p}U_{1,2})\otimes (\calP_{B_{p,q}}U_{3,\cdots,m})\big>\geq 2^{-(m+1)}t/(2m^2\log^2n)\bigg)\\
\leq (m-2)p^{\star} p_{1,2}\max_{0\leq p\leq p_{1,2}, (m-2)\ceil{\log\delta^{-2}}\leq q\leq (m-2)p^{\star}}\\
\times\PP\bigg(\max_{u_1\otimes\cdots u_m\in\overline{U}_{1,2}(\delta)}\big<\calR\odot\calA, (\calP_{A_p}U_{1,2})\otimes (\calP_{B_{p,q}}U_{3,\cdots,m})\big>\geq 2^{-(m+1)}t/(2m^2\log^2n)\bigg)
\end{align*}
Define the aspect ratio of $\Omega$ with respect to $(1,2)$-dimension as 
$$
\nu_{1,2}(\Omega)=\max_{i_1\in[n],i_2\in[n]}{\rm Card}\big(\big\{(i_1,\cdots,i_m)\in\Omega: i_3,\cdots,i_m\in[n]\big\}\big).
$$
By Chernoff bound, we get 
\begin{equation}\label{eq:nu12}
\PP\Big(\nu_{1,2}(\Omega)\geq 13\big(\calP_{\submax}\cdot \thetamax^2\|\theta\|_1^{m-2}+\log n\big)\Big)\leq n^{-2}
\end{equation}
where we used the fact that each $\calA(i_1,\cdots,i_m)$ is a Bernoulli random variable and 
$$
\PP\big(\calA(i_1,\cdots,i_m)=1\big)\leq \calP_{\submax}\theta_{i_1}\theta_{i_2}\cdots\theta_{i_m}. 
$$
Denote
$$
\nu_{1,2}^{\star}=13\big(\calP_{\submax}\thetamax^2\|\theta\|_1^{m-2}+\log n\big).
$$
Denote $\calE_0$ the event of (\ref{eq:nu12}) under which $\nu_{1,2}(\Omega)\leq \nu_{1,2}^{\star}$ and $|B_{p,q}|\leq \nu_{1,2}^{\star}|A_p|$. By the definitions of $A_p$, we have 
$$
|A_p|\leq 2^p.
$$
Meanwhile, $\|U_{3,\cdots,m}\|_{\submax}\leq 2^{-q/2}\leq \delta^{m-2}$. Regarding to the sparsity levels of the hypergraph networks, we consider two cases.

{\it Case 1}: $\calP_{\submax}\thetamax^{2}\|\theta\|_1^{m-2}\geq \log n$ so that $\nu^{\star}_{1,2}=26\calP_{\submax}\thetamax^2\|\theta\|_1^{m-2}$, i.e., the hypergraph is not extremely sparse. This scenario is easy. We can simply decouple the dependence between $B_{p,q}$ and $A_p$ (ignore $\Omega$) and define the set 
\begin{align*}
\mathfrak{B}_{p}&(\ell):=\\
&\big\{V=(\calP_{A_p}U_{1,2})\otimes(\calP_B U_{3,\cdots,m}): |A_p|\leq 2^{p-\ell}, |B|\leq \nu_{1,2}^{\star}|A_p|, U_{1,2}\otimes U_{3,\cdots,m}\in\overline{U}_{1,2}(\delta) \big\}
\end{align*}
which does not depend on $q$. 
Clearly, for any $U\in\overline{U}_{1,2}(\delta)$ and on event $\calE_0$, 
$$
(\calP_{A_p}U_{1,2})\otimes (\calP_{B_{p,q}}U_{3,\cdots,m})\in \mathfrak{B}_{p}(\ell).
$$
For any $p,q$, we have 
\begin{align*}
\PP\bigg(\max_{u_1\otimes\cdots u_m\in\overline{U}_{1,2}(\delta)}\big<\calR\odot\calA, (\calP_{A_p}U_{1,2})\otimes (\calP_{B_{p,q}}U_{3,\cdots,m})\big>\geq 2^{-(m+1)}t/(2m^2\log^2n)\bigg)\\
\leq \max_{0\leq \ell\leq p}\PP\bigg(\max_{V\in\mathfrak{B}_{p}(\ell)}\big<\calR\odot\calA, V\big>\geq 2^{-(m+1)}t/(2m^2\log^2n)\bigg).
\end{align*}
As shown in \cite{yuan2017incoherent},  $\log\Big|\mathfrak{B}_{p}(\ell) \Big|\leq 2^{(p-\ell)/2}m\sqrt{4\nu_{1,2}^{\star}n\log n}$.

To prove the bound for $\PP\big\{\max_{V\in\mathfrak{B}_{p}(\ell)} \big<\calR\odot \calA, V\big>\geq 2^{-(m+1)/2}t/2\big\}$. Write
\begin{align*}
\big<\calR\odot \calA, V\big>=\sum_{(i_1,\cdots,i_m)\in\mathfrak{P}(n,m)}\calR(i_1,\cdots,i_m)\calA(i_1,\cdots,i_m)\sum_{(i_1',\cdots,i_m')\in\mathfrak{P}(i_1,\cdots,i_m)} V(i_1',\cdots,i_m')
\end{align*}
which is a sum of zero-mean random variables. For $V\in\mathfrak{B}_{p,q}(\ell)$, 
\begin{align*}
\Big|\calR(i_1,\cdots,i_m)\calA(i_1,\cdots,i_m)\sum_{(i_1',\cdots,i_m')\in\mathfrak{P}(i_1,\cdots,i_m)} V(i_1',\cdots,i_m') \Big|\\
\leq (m!) \|V\|_{\submax}\leq (m!)2^{-p/2}\delta^{m-2}.
\end{align*}
Moreover,  if $m\geq 3$ and $V=V_{1,\cdots,m-1}\otimes V_m$, then
\begin{align*}
\sum_{(i_1,\cdots,i_m)\in\mathfrak{P}(n,m)} \EE\calA(i_1,\cdots,i_m)\bigg(\sum_{(i_1',\cdots,i_m')\in\mathfrak{P}(i_1,\cdots,i_m)} V(i_1',\cdots,i_m')\bigg)^2\\
\leq (m!)\sum_{(i_1,\cdots,i_m)\in\mathfrak{P}(n,m)} \EE\calA(i_1,\cdots,i_m)\sum_{(i_1',\cdots,i_m')\in\mathfrak{P}(i_1,\cdots,i_m)} V(i_1',\cdots,i_m')^2\\
\leq (m!)\sum_{i_1\neq\cdots\neq i_m}\EE\calA(i_1,\cdots,i_m)V(i_1,\cdots,i_m)^2\\
\leq (m!)^2\sum_{i_m=1}^n V_m(i_m)^2\sum_{i_1\geq\cdots\geq i_{m-1}\geq 1}\EE\calA(i_1,\cdots,i_m)V_{1,\cdots,m-1}(i_1,\cdots,i_{m-1})^2\\
\leq (m!)^2\delta^2 \beta(\Q)\|V_{1,\cdots,m-1}\|_{\rm F}^2\leq 2^{-\ell} (m!)^2\delta^2 \beta(\Q).
\end{align*}
By Bernstein inequality,  $\log\Big|\mathfrak{B}_{p}(\ell) \Big|\leq \min\big\{2^{(p-\ell)/2}m\sqrt{4\nu_{1,2}^{\star}n\log n}, 4mn\big\} $ and the union bound, we get 
\begin{align*}
\PP\Big(\max_{V\in\mathfrak{B}_{p}(\ell)} \big<\calR\odot \calA, V\big>\geq 2^{-(m+1)}t/2\Big)\leq \exp\Big(4mn-\frac{2^{\ell-2m-2}t^2}{4(m!)^2\delta^2\beta(\Q)m^4\log^4n}\Big)\\
+\exp\Big(2^{(p-\ell)/2}m\sqrt{4\nu_{1,2}^{\star}n\log n}-\frac{(3/4)2^{(p-2m-2)/2}t}{2(m!)\delta^{m-2}m^2\log^2n}\Big).
\end{align*}
If
$$
t\geq 16m^32^m(m!)\sqrt{n\delta^2\beta(\Q)}\log^2n
$$
, then $2^{\ell-2m-2}t^2/ (4m!\delta^2\beta(Q)m^4\log^4n)\geq 8mn$ and as a result
$$
\exp\Big(4mn-\frac{2^{\ell-2m-2}t^2}{4(m!)\delta^2\beta(\Q)m^4\log^4n}\Big)\leq \exp\Big(-\frac{2^{\ell-2m-2}t^2}{8(m!)\delta^2\beta(\Q)m^4\log^4n}\Big)\leq \exp\Big(-\frac{2^{-2m-2}t^2}{8(m!)\delta^2\beta(\Q)m^4\log^4n}\Big). 
$$ 
Similarly, if 
$$
t\geq \frac{32m^3}{3}(m!)2^m\delta^{m-2}\sqrt{\nu_{1,2}^{\star}n\log n}\log^2n
$$
, then $3\cdot2^{-(m+1)}t/(8m!\delta^{m-2}m^2\log^2n)\geq 4m\sqrt{\nu_{1,2}^{\star}n\log n}$ and as a result, 
\begin{align*}
\exp\Big(2^{(p-\ell)/2}m\sqrt{4(\nu_{1,2}^{\star}\wedge n)n\log n}-\frac{(3/4)2^{(p-2m-2)/2}t}{2(m!)\delta^{m-2}m^2\log^2n}\Big)\leq \exp\Big(-\frac{(3/4)2^{(p-2m-2)/2}t}{4(m!)\delta^{m-2}m^2\log^2n}\Big) \\
\leq\exp\Big(-\frac{3}{16}\cdot \frac{2^{-(m+1)}t}{(m!)\delta^{m-2}m^2\log^2n}\Big).
\end{align*}
Therefore, we conclude that if (recall that $\calP_{\submax}\cdot\thetamax^2\|\theta\|_1^{m-2}\geq \log n$)
$$
t\geq \max\Big\{16m^32^m\sqrt{ (m!)n\delta^2\beta(\Q) }\log^2n,  \frac{128m^3}{3}(m!)2^m\delta^{m-2}\sqrt{\calP_{\submax}\thetamax^2\|\theta\|_1^{m-2}n\log^5 n}\Big\}
$$
, then for any $\ell$, 
\begin{align*}
\PP\Big(\max_{\bV\in\mathfrak{B}_{p}(\ell)} \big<\calR\odot \calA, \bV\big>\geq 2^{-(m+1)}t/(2m^2\log^2n)\Big)\leq \exp\Big(-\frac{2^{-2m-2}t^2}{8(m!)\delta^2\beta(\Q)m^4\log^4n}\Big)\\
+\exp\Big(-\frac{3}{16}\cdot \frac{2^{-(m+1)}t}{(m!)\delta^{m-2}m^2\log^2n}\Big)
\end{align*}
which proves the desired bound in {\it Case 1}. 

{\it Case 2}:  $\calP_{\submax}\thetamax^{2}\|\theta\|_1^{m-2}< \log n$ so that $\nu^{\star}_{1,2}=26\log n$, the hypergraph is extremely sparse. In this case, the above entropy control is not sharp enough to control 
$$
\PP\bigg(\max_{u_1\otimes\cdots u_m\in\overline{U}_{1,2}(\delta)}\big<\calR\odot\calA, (\calP_{A_p}U_{1,2})\otimes (\calP_{B_{p,q}}U_{3,\cdots,m})\big>\geq 2^{-(m+1)}t/(2m^2\log^2n)\bigg).
$$
To this end, for $0\leq \ell\leq p$, we define ($\Omega$ related) 
\begin{align*}
\mathfrak{B}_{\Omega,p,q}(\ell)=\Big\{V=\calP_{A_p}(U_{1,2})\otimes \big(\calP_{B_{p,q}}U_{3,\cdots,m}\big): |A_p|\leq 2^{p-\ell}, |B_{p,q}|\leq \nu_{1,2}(\Omega)|A_p|\Big\}.
\end{align*}
To this end, we need to define the aspect ratio on each fiber (see, e.g., \cite{xia2017statistically}) of $\calP_{A_p}(U_{1,2})$ and $\calP_{B_{p,q}}U_{3,\cdots,m}$.  More specifically, define 
$$
\nu_1(\Omega):=\max_{i_1\in[n]} {\rm Card}\big(\big\{(i_1,\cdots,i_m)\}\in\Omega: i_2,\cdots, i_m\in[n]\big\}\big).
$$
Similarly, by Chernoff bound, we have 
$$
\PP\Big(\nu_1(\Omega)\geq 13(\calP_{\submax}\thetamax\|\theta\|_1^{m-1}+\log n\big)\Big)\leq n^{-2}.
$$
We then denote $\nu_1^{\star}=26\calP_{\submax}\thetamax\|\theta\|_1^{m-1}$ since we assume $\calP_{\submax}\thetamax\|\theta\|_1^{m-1}\geq \log n$. 
We  denote the above event by $\calE_1$ so that $\PP(\calE_1)\geq 1-n^{-2}$. Basically, under the event $\calE_0\cap\calE_1$, then for any $U\in\mathfrak{B}_{\Omega,p,q}(\ell)$, we have $|A_p|\leq 2^{p-\ell}$ and $|B_{p,q}|\leq 26\log n\cdot |A_p|$. In addition, it suffices to consider those $A_p$ and $B_{p,q}$ satisfying the aspect ratio $\nu_1^{\star}$, i.e., 
$$
\nu(\calP_{A_p}U_{1,2}):=\max_{k=1,2}\max_{i_k\in[n]} \big\|(\calP_{A_p}U_{1,2})_{i_k,:} \big\|_{\ell_0}\leq \nu_1^{\star}
$$
and similarly,
$$
\nu(\calP_{B_{p,q}}U_{3,\cdots,m}):=\max_{k=3,\cdots,m}\max_{i_j\in[n], 3\leq j\leq m, j\neq k}\big\|\big(\calP_{B_{p,q}}U_{3,\cdots,m}\big)_{i_j, j\neq k,:}\big\|_{\ell_0}\leq \nu_1^{\star}
$$
which allows us to obtain sharper characterization of the entropy of $\mathfrak{B}_{\Omega,p,q}(\ell)$. Indeed, we define (which overrides the previous definition of $\mathfrak{B}_{\Omega,p,q}(\ell)$)
\begin{align*}
\mathfrak{B}_{\Omega,p,q}(\ell)=\Big\{V=\calP_{A_p}(U_{1,2})\otimes \big(\calP_{B_{p,q}}U_{3,\cdots,m}\big): |A_p|\leq 2^{p-\ell}, |B_{p,q}|\leq \nu_{1,2}(\Omega)|A_p|\\
, \nu(\calP_{A_p}U_{1,2})\leq \nu_1(\Omega), \nu(\calP_{B_{p,q}}U_{3,\cdots,m})\leq \nu_1(\Omega)\Big\}.
\end{align*}
and
$$
\mathfrak{B}_{\Omega,p}^{(1,2)}(\ell):=\big\{\calP_{A_p}(U_{1,2}): |A_p|\leq 2^{p-\ell}, U\in\overline{U}_{1,2}(\delta), \nu(\calP_{A_p}U_{1,2})\leq \nu_1(\Omega)\big\}
$$
and 
$$
\mathfrak{B}^{(3,\cdots,m)}_{\Omega,p,q}(\ell):=\big\{\calP_{B_{p,q}}(U_{3,\cdots,m}): |B_{p,q}|\leq \nu_{1,2}(\Omega)\cdot 2^{p-\ell}, U\in\overline{U}_{1,2}(\delta), \nu(\calP_{B_{p,q}}U_{3,\cdots,m})\leq \nu_1(\Omega)\big\}.
$$
Then, $\mathfrak{B}_{\Omega,p,q}(\ell)\subset \mathfrak{B}_{\Omega,p}^{(1,2)}(\ell)\otimes \mathfrak{B}_{\Omega,p,q}^{(3,\cdots,m)}(\ell)$ implying that 
$$
\mathfrak{B}^{\star}_{p,q}(\ell):=\bigcup_{\substack{\Omega: \nu_1(\Omega)\leq \nu_1^{\star}\\ \nu_{1,2}(\Omega)\leq \nu_{1,2}^{\star}}}\mathfrak{B}_{\Omega,p,q}(\ell) \subset \underbrace{\bigcup_{\substack{\Omega: \nu_1(\Omega)\leq \nu_1^{\star}\\ \nu_{1,2}(\Omega)\leq \nu_{1,2}^{\star}}}\mathfrak{B}_{\Omega,p}^{(1,2)}(\ell)}_{\mathfrak{B}^{(1,2)\star}_p}\bigotimes \underbrace{\bigcup_{\substack{\Omega: \nu_1(\Omega)\leq \nu_1^{\star}\\ \nu_{1,2}(\Omega)\leq \nu_{1,2}^{\star}}}\mathfrak{B}_{\Omega,p,q}^{(3,\cdots,m)}(\ell)}_{\mathfrak{B}_{p,q}^{(3,\cdots,m)\star}}.
$$
As shown in \cite{yuan2016tensor} and \cite{xia2017statistically}, for any $p,q$ and under event $\calE_0\cap \calE_1$, it holds that 
\begin{align*}
\max_{U\in\overline{U}_{1,2}(\delta)}\big<\calR\odot\calA, (\calP_{A_p}U_{1,2})\otimes (\calP_{B_{p,q}}U_{3,\cdots,m})\big>\leq \max_{\ell\leq p} \max_{\bV\in \mathfrak{B}^{\star}_{p,q}(\ell)}\big<\calR\odot\calA, \bV\big>.
\end{align*}
It suffices to investigate the entropy of $\mathfrak{B}^{\star}_{p,q}(\ell)$. Clearly, 
$$
\log{\rm Card}\big(\mathfrak{B}^{\star}_{p,q}(\ell)\big)\leq \log{\rm Card}\big(\mathfrak{B}^{(1,2)\star}_p\big)+\log{\rm Card}\big(\mathfrak{B}_{p,q}^{(3,\cdots,m)\star}\big).
$$
Basically, $\mathfrak{B}_p^{(1,2)\star}$ is the set that all non-zero entries are $2^{-p/2}$ and there are $2^{p-\ell}$ non-zero entries and its aspect ratio is bounded by $\nu_1^{\star}$. By \cite[Lemma~1]{xia2017statistically}, 
$$
 \log{\rm Card}\big(\mathfrak{B}^{(1,2)\star}_p\big)\leq (p-\ell)p^m\log4+2m^3 p^m\sqrt{\nu_1^{\star}2^{p-\ell}}\log n
$$
and similarly,
$$
\log{\rm Card}\big(\mathfrak{B}_{p,q}^{(3,\cdots,m)\star}\big)\leq  (p-\ell) q^m\log n+48m^3 q^m\sqrt{\nu_1^{\star} 2^{p-\ell}\log n}\log n
$$
where we used the fact that each element in $\mathfrak{B}^{(3,\cdots,m)\star}_{p,q}$ has at most $2^{p-\ell}\cdot 26\log n$ non-zero entries and its aspect ratio is bounded by $\nu_1^{\star}$. Since $p, q\leq m\log n$, we conclude with 
$$
\log{\rm Card}\big(\mathfrak{B}^{\star}_{p,q}(\ell)\big)\leq C_1 (m\log n)^{m+1}+C_2\sqrt{\nu_1^{\star}2^{p-\ell}}(m\log n)^{m+3}
$$
for some absolute constants $C_1,C_2>0$. 
Now, we can prove the bound for $\PP\big(\max_{V\in \mathfrak{B}^{\star}_{p,q}(\ell)}\big<\calR\odot\calA, V\big>\geq 2^{-(m+1)}t/2\big)$. By Bernstein inequality, 
\begin{align*}
\PP\Big(\max_{V\in \mathfrak{B}^{\star}_{p,q}(\ell)}\big<\calR\odot\calA, V\big>\geq 2^{-(m+1)}t/(2m^2\log^2n)\Big)\leq \exp\Big(4mn-\frac{2^{\ell-2m-2}t^2}{2(m!)\delta^2\beta(\Q)m^4\log^4n}\Big)\\
+\exp\Big(C_1(m\log n)^{m+1}+C_2\sqrt{\nu_1^{\star}2^{p-\ell}(m\log n)^{m+3}}-\frac{(3/4)2^{(p+q)/2}2^{-m-1}t}{2(m!)m^2\log^2n}\Big).
\end{align*}
Since $q\geq (m-2)\ceil{\log\delta^{-2}}$, we have $2^{-q/2}\leq \delta^{m-2}$. By choosing (recall that $\nu_1^{\star}\geq \log n$)
$$
t\geq \max\Big\{C_12^{m+3}\sqrt{m^5(m!)n\delta^2\beta(\Q)}\log^2n, \ \frac{C_2(m!)}{3}(4m\log n)^{\frac{m+7}{2}}\sqrt{\nu_1^{\star}}\delta^{m-2}\Big\}
$$
for some absolute constants $C_1,C_2>0$, then 
\begin{align*}
\PP\Big(\max_{V\in \mathfrak{B}^{\star}_{p,q}(\ell)}\big<\calR\odot\calA, V\big>&\geq 2^{-(m+1)}t/(2m^2\log^2n)\Big)\\
&\leq \exp\Big(-\frac{2^{-2m-2}t^2}{4(m!)\delta^2\beta(\Q)m^4\log^4n}\Big)
+\exp\Big(-\frac{3}{16}\cdot\frac{2^{-m-1}t}{2(m!)\delta^{m-2}m^2\log^2n}\Big).
\end{align*}
By combining {\it Case 1} and {\it Case 2} and observing $n\thetamax\geq \|\theta\|_1$, if we choose 
\begin{align*}
t\geq (m!)2^m\cdot \max\Big\{C_1\sqrt{m^5n\delta^2\beta(\Q)}\log^2n,\  C_2m(m\log n)^{(m+7)/2}\delta^{m-2}\sqrt{\calP_{\submax}\thetamax^2\|\theta\|_1^{m-2}n\log n} \Big\}
\end{align*}
, then we get (union bound for all $0\leq \ell\leq p\leq 3\log n$)
\begin{align*}
\PP\bigg(\max_{u_1\otimes\cdots u_m\in\overline{U}_{1,2}(\delta)}&\big<\calR\odot\calA, (\calP_{A_p}U_{1,2})\otimes (\calP_{B_{p,q}}U_{3,\cdots,m})\big>\geq 2^{-(m+1)}t/(2m^2\log^2n)\bigg)\\
&\leq  (3\log n)\exp\Big(-\frac{2^{-2m-2}t^2}{8(m!)\delta^2\beta(\Q)m^4\log^4n}\Big)+(3\log n)\exp\Big(-\frac{3}{16}\cdot \frac{2^{-(m+1)}t}{2(m!)\delta^{m-2}m^2\log^2n}\Big).
\end{align*}
Now, by making the bound uniform over all pairs of $(p,q)$, we obtain (since $p_{1,2}\leq 3\log n$)
\begin{align*}
\PP\bigg(\sum_{0\leq p\leq p_{1,2}}\sum_{q=(m-2)\ceil{\log\delta^{-2}}}^{(m-2)p^{\star}}\max_{u_1\otimes\cdots u_m\in\overline{U}_{1,2}(\delta)}\big<\calR\odot\calA, (\calP_{A_p}u_{1,2})\otimes (\calP_{B_{p,q}}u_{3,\cdots,m})\big>\geq 2^{-(m+1)}t/2\bigg)\\
\leq  (9m\log^2 n)\exp\Big(-\frac{2^{-2m-2}t^2}{8(m!)\delta^2\beta(\Q)m^4\log^4n}\Big)+(9m\log^2 n)\exp\Big(-\frac{3}{16}\cdot \frac{2^{-(m+1)}t}{2(m!)\delta^{m-2}m^2\log^2n}\Big).
\end{align*}

\paragraph*{Bounding $\big<\calR\odot \calA, \calP_{S_{1,2}}(U_{1,2})\otimes U_{3,\cdots, m}\big>$.} Clearly, for $V\in  \calP_{S_{1,2}}(U_{1,2})\otimes U_{3,\cdots, m}$,
\begin{align*}
\Big|\calR(i_1,\cdots,i_m)\calA(i_1,\cdots,i_m)\sum_{(i_1',\cdots,i_m')\in\mathfrak{P}(i_1,\cdots,i_m)}V(i_1',\cdots,i_m')\Big|\\
\leq (m!)\|V\|_{\submax}\leq (m!)2^{-(1+p_{1,2})/2}\delta^{m-2}. 
\end{align*}
Similarly, if $m\geq 3$, then 
\begin{align*}
\sum_{(i_1,\cdots,i_m)\in\mathfrak{P}(n,m)}\EE \calA(i_1,\cdots,i_m)\Big(\sum_{(i_1',\cdots,i_m')\in\mathfrak{P}(i_1,\cdots,i_m)}V(i_1',\cdots,i_m')\Big)^2
\leq (m!)\delta^2\beta(\Q).
\end{align*}
Then, by Bernstein inequality, 
\begin{align*}
\PP\Big(\max_{U\in\overline{U}_{1,2}(\delta)}\big|\big<\calR\odot \calA, \calP_{S_{1,2}}(U_{1,2})\otimes U_{3,\cdots, m}\big> \big|\geq 2^{-(m+1)}t/2\Big)\leq \exp\Big(4mn-\frac{2^{-2m-2}t^2}{4(m!)\delta^2\beta(\Q)}\Big)\\
+\exp\Big(4mn-\frac{(3/4)2^{-m-1}t}{2(m!)\delta^{m-2}2^{-(1+p_{1,2})/2}}\Big).
\end{align*}
By choosing $p_{1,2}=\ceil{2\log n}$ so that $2^{-p_{1,2}/2}\leq n^{-1}$, then if 
$$
t\geq \max\Big\{6\cdot 2^{m+1}\sqrt{m(m!)\delta^2 n\beta(\Q)}, \frac{64m}{3}2^{m+1}\delta^{m-2}\Big\}
$$
, we get 
\begin{align*}
\PP\Big(\max_{U\in\overline{U}_{1,2}(\delta)}\big|\big<\calR\odot \calA, \calP_{S_{1,2}}(U_{1,2})\otimes U_{3,\cdots, m}\big> \big|\geq 2^{-(m+1)}t/2\Big)\leq \exp\Big(-\frac{2^{-2m-2}t^2}{8(m!)\delta^2\beta(\Q)}\Big)\\
+\exp\Big(-\frac{3}{16}\cdot \frac{2^{-m-1}nt}{(m!)\delta^{m-2}}\Big).
\end{align*}

\paragraph*{Putting them together.} By choosing (recall that $\calP_{\submax}\thetamax\|\theta\|_1\geq \log n$)
\begin{align}
t\geq (m!)2^m\cdot \max\Big\{C_1\sqrt{m^5n\delta^2\beta(\Q)}\log^2n,\  C_2m(m\log n)^{(m+7)/2}\delta^{m-2}\sqrt{\calP_{\submax}\thetamax^2\|\theta\|_1^{m-2}n\log n} \Big\}\label{eq:tbound}
\end{align}
for some absolute constants $C_1,C_2>0$, we get (recall the event $\calE_0$ and $\calE_1$)
\begin{align*}
\PP&\Big\{\sup_{u_1\otimes\cdots\otimes u_m\in\calU_{1,2}(\delta)}\big<\calR\odot \calA, u_1\otimes\cdots\otimes u_m\big>\geq t\Big\}\\
\leq&\PP\Big\{\sup_{u_1\otimes\cdots\otimes u_m\in\overline{\calU}_{1,2}(\delta)}\big<\calR\odot \calA, u_1\otimes\cdots\otimes u_m\big>\geq 2^{-m-1}\cdot t\Big\}\leq 2n^{-2}\\
+&(10m\log^2 n)\exp\Big(-\frac{2^{-2m-2}t^2}{8(m!)\delta^2\beta(\Q)m^4\log^4n}\Big)+(10m\log^2 n)\exp\Big(-\frac{3}{16}\cdot \frac{2^{-(m+1)}t}{2(m!)\delta^{m-2}m^2\log^2n}\Big).
\end{align*}
Finally, we conclude that if $t$ is chosen as (\ref{eq:tbound})
\begin{align*}
&\PP\big\{\|\calA-\EE\calA\|\geq 3t \big\}\leq 2n^{-2}\\
+& (10m^3\log^2 n)\exp\Big(-\frac{2^{-2m-2}t^2}{8(m!)\delta^2\beta(\Q)m^4\log^4n}\Big)+(10m^3\log^2 n)\exp\Big(-\frac{3}{16}\cdot \frac{2^{-(m+1)}t}{2(m!)\delta^{m-2}m^2\log^2n}\Big)
\end{align*}
which concludes the proof.

\subsection{Proof of Theorem~\ref{thm:spec_power_dcbm}}
For every $t\geq 0$, we define
$$
E_t:= \big\| \what\Xi^{(t)}\Big(\what\Xi^{(t)}\Big)^{\top}-\Xi\Xi^{\top}\big\|.
$$
As shown in the proof of Remark~6.2 of \cite{keshavan2010matrix}, the regularization procedure produces an output $\what\Xi^{(t)}_{\star}$ so that if $E_t<1/3$, then 
$$
\|\what\Xi^{(t)}_{\star}\what\Xi^{(t)\top}_{\star}-\Xi\Xi^{\top}\|\leq \sqrt{\frac{1+3E_t}{1-3E_t}}\cdot E_t
$$
and $\|\what\Xi^{(t)}_{\star}\|_{\submaxx}\leq \delta\sqrt{1+3E_t}\leq 2\delta$ where $\delta$ is the regularization parameter chosen in algorithm. In addition, the warm initialization guarantees that $E_0\leq \eps_0<1/4$. 

By the definition of HOOI, we write
$$
\hat\Xi^{(t+1)}={\rm SVD}_{K}\bigg(\calM_1\Big(\A\bigtimes_{k=2}^m \big(\what\Xi^{(t)}_{\star}\big)^{\top}\Big)\bigg)
$$
where $\calM_1(\cdot)$ denotes the matricization such that $\calM_1(\A)\in\RR^{n\times n^{m-1}}$ and ${\rm SVD}_K$ returns the top-$K$ left singular vectors. Recall that
$$
\A=\Q+\A-\Q\quad {\rm and}\quad \A-\Q=(\A-\EE\A)-{\rm diag}(\Q)
$$
We write
\begin{eqnarray*}
\calM_1\Big(\A\bigtimes_{k=2}^m \big(\what\Xi^{(t)}_{\star}\big)^{\top}\Big)=\calM_1(\Q)\Big(\widetilde\bigotimes_{k=2}^m\what\Xi^{(t)}_{\star}\Big)
+\calM_1\Big(\big(\A-\Q\big)\bigtimes_{k=2}^m \big(\what\Xi^{(t)}_{\star}\big)^{\top}\Big)
\end{eqnarray*}
where $\widetilde\bigotimes$ denotes the Kronecker product. 
Observe that top-$K$ left singular vectors of $\calM_1(\Q)\big(\widetilde\bigotimes_{k=2}^m\what\Xi^{(t)}_{\star}\big)$ span the same column space as $\Xi$. Clearly, 
$$
\calM_1(\Q)\Big(\widetilde\bigotimes_{k=2}^m\what\Xi^{(t)}_{\star}\Big)=\Xi\calM_1(\calC)\Big(\widetilde\bigotimes_{k=2}^m\big(\Xi^{\top}\what\Xi^{(t)}_{\star}\big)\Big).
$$
Therefore, 
\begin{eqnarray*}
\sigma_{K}\Big(\calM_1(\Q)\Big(\widetilde\bigotimes_{k=2}^m\what\Xi^{(t)}_{\star}\Big)\Big)\geq \sigma_{K}\big(\calM_1(\calC)\big)\sigma_{\min}^{m-1}\big(\Xi^{\top}\what\Xi^{(t)}_{\star}\big)\\
\geq\sigma_{\min}^{m-1}\big(\Xi^{\top}\what\Xi^{(t)}_{\star}\big)\cdot \kappa_K\|\theta\|^m\\
\geq \big(1-\sqrt{(1+3E_t)/(1-3E_t)}E_t\big)^{(m-1)/2}\cdot  \kappa_K\|\theta\|^m
\end{eqnarray*}
where we used the fact
\begin{eqnarray*}
\sigma_{\min}\big(\Xi^\top \what\Xi^{(t)}_{\star}\big)=\sqrt{\sigma_{\min}\big(\Xi^{\top}\what\Xi^{(t)}_{\star}\big(\what\Xi^{(t)}_{\star}\big)^{\top}\Xi\big)}\\
=\sqrt{\sigma_{\min}\Big(I+\Xi^{\top}\Big(\what\Xi^{(t)}_{\star}\big(\what\Xi^{(t)}_{\star}\big)^{\top}-\Xi\Xi^{\top}\Big)\Xi\Big)}\geq \sqrt{1-\big\|\what\Xi^{(t)}_{\star}\big(\what\Xi^{(t)}_{\star}\big)^{\top}-\Xi\Xi^{\top}\big\|}\\
\geq \sqrt{1-\sqrt{(1+3E_t)/(1-3E_t)}E_t}.
\end{eqnarray*}
If $E_t\leq 1/4$, then 
$$
\sigma_K\Big(\calM_1(\Q)\Big(\widetilde\bigotimes_{k=2}^m\what\Xi^{(t)}_{\star}\Big)\Big)\geq (1-3\eps_0)^{(m-1)/2}\kappa_K\|\theta\|^m.
$$
As the proof of Lemma~\ref{lem:diagQ} shows, the effect of diagonal tensor is bounded as 
$$
\big\|\calM_1\big({\rm diag}(\Q)\big) \big\|\leq m^2\kappa_1\dmin^{-2}\cdot\thetamax^2\|\theta\|^{m-2}.
$$
Recall that 
\begin{align}
\calM_1\Big(\big(\A-\Q\big)\bigtimes_{k=2}^m \big(\what\Xi^{(t)}_{\star}\big)^{\top}\Big)=\calM_1\Big(\big(\A-\EE\A\big)&\bigtimes_{k=2}^m \big(\what\Xi^{(t)}_{\star}\big)^{\top}\Big)\nonumber\\
+&\calM_1\Big(\big({\rm diag}(\Q)\big)\bigtimes_{k=2}^m \big(\what\Xi^{(t)}_{\star}\big)^{\top}\Big).\label{eq:A-QXi}
\end{align}
We begin with bounding the first term on the right hand side of (\ref{eq:A-QXi}). 
Observe that 
\begin{align*}
\bigtimes_{k=2}^m \big(\what\Xi^{(t)}_{\star}\big)=&\sum_{j=2}^m\Big(\bigtimes_{k=2}^{j-1}\calP_{\Xi}\what\Xi^{(t)}_{\star}\Big)\times_{j}\calP_{\Xi}^{\perp}\what\Xi^{(t)}_{\star}\Big(\bigtimes_{k=j+1}^m\what\Xi^{(t)}_{\star}\Big)\\
&+\bigtimes_{k=2}^m \calP_{\Xi}\what\Xi^{(t)}_{\star}
\end{align*}
where $\calP_{\Xi}$ denotes the projection onto $\Xi$, i.e., $\calP_{\Xi}=\Xi\Xi^{\top}$. Then, $\calP_{\Xi}^{\perp}=\calI-\calP_{\Xi}$. We then write 
\begin{align*}
\calM_1\Big(\big(\A-\EE\A\big)&\bigtimes_{k=2}^m \big(\what\Xi^{(t)}_{\star}\big)^{\top}\Big)\\
=&\underbrace{\calM_1\Big(\big(\A-\EE\A\big)\bigtimes_{k=2}^m (\calP_{\Xi}\what\Xi^{(t)}_{\star})^{\top}\Big)}_{\calR_1}\\
+&\underbrace{\sum_{j=2}^m\calM_1\Big(\big(\A-\EE\A\big)\Big(\bigtimes_{k=2}^{j-1}\calP_{\Xi}\what\Xi^{(t)}_{\star}\Big)^{\top}\times_{j}(\calP_{\Xi}^{\perp}\what\Xi^{(t)}_{\star})^{\top}\Big(\bigtimes_{k=j+1}^m\what\Xi^{(t)}_{\star}\Big)^{\top}\Big)}_{\calR_2}.
\end{align*}
The following lemma establishes a simple connection between the tensor spectral norm and matrix spectral norm.
\begin{lemma}\cite[Lemma~6]{xia2017statistically}\label{lemma:MjA-norm}
For a $k$-th order tensor $\A\in\RR^{n_1\times \cdots\times n_k}$ with multilinear ranks $\leq (r,\cdots,r)$, the following bound holds for all $j=1,\cdots,k$,
$$
\big\|\calM_j(\A)\big\|\leq \|\A\|\cdot r^{(m-2)/2}.
$$
\end{lemma}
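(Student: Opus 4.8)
The plan is to reduce the matrix spectral norm of the matricization to a maximum of Frobenius norms over one-mode contractions, and then to compare Frobenius and operator norms for lower-order tensors by an induction on the tensor order. Throughout I write $m = k$ for the order, and I denote by $\|\cdot\|$ the operator norm (of a matrix or a tensor, according to context) and by $\|\cdot\|_{\rm F}$ the Frobenius norm.

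First I would record two elementary ``max-over-contraction'' identities. For a unit vector $u\in\RR^{n_j}$ in the $j$-th mode, the mode-$j$ matricization satisfies $\|\calM_j(\A)\| = \max_{\|u\|=1}\|\A\times_j u^{\top}\|_{\rm F}$: for fixed $u$ one has $\max_{\|v\|=1} u^{\top}\calM_j(\A)v = \|(\calM_j(\A))^{\top}u\|_2$, and the vector $(\calM_j(\A))^{\top}u$ is precisely the vectorization of the $(k-1)$-way tensor $\A\times_j u^{\top}$, so its Euclidean length is $\|\A\times_j u^{\top}\|_{\rm F}$. An entirely analogous computation for the tensor operator norm gives $\|\A\| = \max_{\|u\|=1}\|\A\times_j u^{\top}\|$, where on the right the operator norm is now that of a $(k-1)$-way tensor obtained by freezing the $j$-th argument to $u$ and optimizing over the remaining unit directions.

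The heart of the argument is a Frobenius-versus-operator comparison, which I would establish by induction on the order $d$: any $d$-way tensor $T$ with multilinear ranks $\leq (r,\ldots,r)$ obeys $\|T\|_{\rm F}\leq r^{(d-1)/2}\|T\|$. The base case $d=1$ is trivial since then $\|T\|_{\rm F}=\|T\|$ and $r^{0}=1$. For the inductive step, use $\|T\|_{\rm F} = \|\calM_1(T)\|_{\rm F} \leq \sqrt{\mathrm{rank}(\calM_1(T))}\,\|\calM_1(T)\| \leq \sqrt{r}\,\|\calM_1(T)\|$, the last step because the mode-$1$ rank is at most $r$; then apply the first identity $\|\calM_1(T)\| = \max_{\|u\|=1}\|T\times_1 u^{\top}\|_{\rm F}$, invoke the inductive hypothesis on the $(d-1)$-way tensor $T\times_1 u^{\top}$ (whose multilinear ranks remain $\leq r$), and finally use the operator-norm identity to recognize $\max_{\|u\|=1}\|T\times_1 u^{\top}\| = \|T\|$. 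This chains to $\|T\|_{\rm F}\leq \sqrt{r}\cdot r^{(d-2)/2}\|T\| = r^{(d-1)/2}\|T\|$.

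Combining the pieces finishes the proof. Applying the comparison lemma with $d = k-1$ to each contracted tensor $\A\times_j u^{\top}$ (which is $(k-1)$-way with multilinear ranks $\leq r$) gives $\|\A\times_j u^{\top}\|_{\rm F}\leq r^{(k-2)/2}\|\A\times_j u^{\top}\|$; taking the maximum over unit $u$ and using the two identities yields $\|\calM_j(\A)\| = \max_{\|u\|=1}\|\A\times_j u^{\top}\|_{\rm F} \leq r^{(k-2)/2}\max_{\|u\|=1}\|\A\times_j u^{\top}\| = r^{(k-2)/2}\|\A\|$, which is the claim with $m=k$. The only genuinely delicate point is the bookkeeping: carefully verifying the reshaping that underlies the two max-over-contraction identities (the index conventions for matricization and for the Kronecker factorization of $v$), and confirming that contraction against a unit vector does not inflate the multilinear ranks, so that the induction applies cleanly at each order.
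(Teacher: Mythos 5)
Your proof is correct. Note that the paper itself offers no proof of this lemma --- it is imported verbatim as Lemma~6 of the cited reference \cite{xia2017statistically} --- so there is nothing internal to compare against; your argument is essentially the standard one underlying that citation. The two contraction identities $\|\calM_j(\A)\|=\max_{\|u\|=1}\|\A\times_j u^{\top}\|_{\rm F}$ and $\|\A\|=\max_{\|u\|=1}\|\A\times_j u^{\top}\|$ are both right, the induction $\|T\|_{\rm F}\leq r^{(d-1)/2}\|T\|$ for a $d$-way tensor of multilinear ranks $\leq(r,\ldots,r)$ is sound (the only point worth stating explicitly, which you do flag, is that contracting one mode against a unit vector maps each remaining mode's fiber space into a subspace of the original one, so the rank hypothesis is preserved), and applying it at order $d=k-1$ produces exactly the factor $r^{(k-2)/2}$ claimed (the lemma's ``$m$'' is a typo for $k$).
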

We next bound $\|\calR_1\|$ and $\|\calR_2\|$.  Clearly, 
$$
{\rm rank}\big(\calP_{\Xi}\what\Xi^{(t)}_{\star}\big)\leq K,\quad {\rm and}\quad {\rm rank}\big(\calP_{\Xi}^{\perp}\what\Xi^{(t)}_{\star}\big)\leq K.
$$
By Lemma~\ref{lemma:MjA-norm}, 
$$
\|\calR_2\|\leq K^{(m-2)/2}\cdot \Big\|\big(\A-\EE\A\big)\Big(\bigtimes_{k=2}^{j-1}\calP_{\Xi}\what\Xi^{(t)}_{\star}\Big)^{\top}\times_{j}(\calP_{\Xi}^{\perp}\what\Xi^{(t)}_{\star})^{\top}\Big(\bigtimes_{k=j+1}^m\what\Xi^{(t)}_{\star}\Big)^{\top}\Big\|.
$$
Note the following fact  
$$
\big\|\calP_{\Xi}^{\perp}\what\Xi^{(t)}_{\star} \big\|=\Big\|\Big(\what\Xi^{(t)}_{\star}\big(\what\Xi^{(t)}_{\star}\big)^{\top}-\Xi\Xi^{\top}\Big) \what\Xi^{(t)}_{\star}\Big\|\leq \sqrt{\frac{1+3E_t}{1-3E_t}}E_t.
$$
Recall that $\|\what\Xi^{(t)}_{\star}\|_{\submaxx}\leq 2\delta$ and $\|\Xi\|_{\submaxx}\leq \delta$. By the definition of incoherent norm, we have 
\begin{align*}
\Big\|\big(\A-\EE\A\big)\Big(\bigtimes_{k=2}^{j-1}\calP_{\Xi}\hat\Xi^{(t)}_{\star}\Big)^{\top}\times_{j}(\calP_{\Xi}^{\perp}\hat\Xi^{(t)}_{\star})^{\top}\Big(\bigtimes_{k=j+1}^m\hat\Xi^{(t)}_{\star}\Big)^{\top}\Big\|\leq \|\calA-\EE\calA\|_{2\delta}\cdot \sqrt{\frac{1+3E_t}{1-3E_t}}E_t.
\end{align*}
By Theorem~\ref{thm:tensor_concentration}, 
\begin{align}
&\frac{\|\calR_2\|}{E_t\cdot K^{(m-2)/2}}\leq 3\|\calA-\EE\calA\|_{ 2\delta}\nonumber\\
&\leq C_1(m!)2^{m}\sqrt{mn \delta^2\beta(\Q)}+C_2(m!)2^m(m\log n)^{(m+3)/2} \delta^{m-2}\sqrt{\calP_{\submax}\thetamax^2\|\theta\|_1^{m-2}n\log n}\label{eq:R2_opt}
\end{align}
which holds with probability at least $1-n^{-2}$.

Next, we bound $\|\calR_1\|=\big\|\calM_1\big((\calA-\EE\calA)\times_{k=2}^ m(\calP_{\Xi}\what\Xi^{(t)}_{\star})^{\top}\big)\big\|$. The following fact is obvious.
$$
\big\|\calM_1\big((\calA-\EE\calA)\times_{k=2}^ m(\calP_{\Xi}\what\Xi^{(t)}_{\star})^{\top}\big)\big\|\leq \big\|\calM_1\big((\A-\EE\A)\times_2\Xi^{\top}\times_3\cdots\times_m \Xi^{\top}\big) \big\|.
$$ 
\begin{lemma}\label{lem:R1bound}
The following bound holds with probability at least $1-n^{-2}$,
\begin{align*}
\big\|\calM_1\big((\A-\EE\A)\bigtimes_{k=2}^m\Xi^{\top}\big)\big\|\leq C_1\sqrt{(m!)K^{m-2}\beta(\Q)\log n}+C_2(m!)\delta^{m-1}\log n.
\end{align*}
\end{lemma}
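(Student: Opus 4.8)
Write $M=\calM_1\big((\calA-\EE\calA)\times_2\Xi^\top\times_3\cdots\times_m\Xi^\top\big)\in\RR^{n\times K^{m-1}}$. Using the matricization identity for mode products (\cite[Page 462]{kolda2009tensor}), $M=\calM_1(\calA-\EE\calA)\cdot(\Xi\tilde\otimes\cdots\tilde\otimes\Xi)$, where $\tilde\otimes$ is the Kronecker product and there are $m-1$ copies of $\Xi$. Since self-hyperedges are excluded, $(\calA-\EE\calA)$ is supported on entries with distinct indices, over which $\EE\calA=\Q$, and the independent randomness is indexed by unordered $m$-subsets. Following the grouping used in the proof of Lemma~\ref{lem:HOSVD_init}, the plan is to write $M=\sum_{i_1<\cdots<i_m}X_{i_1\cdots i_m}$ with
\[
X_{i_1\cdots i_m}=\big(\calA(i_1,\cdots,i_m)-\Q(i_1,\cdots,i_m)\big)\sum_{(j_1,\cdots,j_m)\in\mathfrak{P}(i_1,\cdots,i_m)}e_{j_1}\,v_{j_2\cdots j_m}^\top,
\]
where $v_{j_2\cdots j_m}:=(e_{j_2}^\top\Xi)\tilde\otimes\cdots\tilde\otimes(e_{j_m}^\top\Xi)\in\RR^{K^{m-1}}$. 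The $X_{i_1\cdots i_m}$ are independent, mean-zero $n\times K^{m-1}$ matrices, each a sum of $m!$ rank-one terms, so I would conclude by the matrix Bernstein inequality (\cite{tropp2012user,koltchinskii2011neumann}), exactly as in the earlier initialization proofs.

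The two Bernstein ingredients come from the incoherence of $\Xi$ and from $\beta(\Q)$. First, $\|\Xi\|_{\submaxx}\le\delta$ gives $\|v_{j_2\cdots j_m}\|=\prod_{l=2}^m\|e_{j_l}^\top\Xi\|\le\delta^{m-1}$, so each rank-one summand has operator norm at most $\delta^{m-1}$ and hence $\|X_{i_1\cdots i_m}\|\le(m!)\delta^{m-1}$; this produces the second term $C_2(m!)\delta^{m-1}\log n$. For the variance, I would bound, for a unit vector $u$, $u^\top(\sum\EE XX^\top)u=\sum_{i_1<\cdots<i_m}\EE[(\calA-\Q)^2]\,\big\|\sum_{\mathfrak{P}}u_{j_1}v_{j_2\cdots j_m}\big\|^2$; Cauchy--Schwarz over the $m!$ permutations together with $\EE[(\calA-\Q)^2]\le\Q$ reduces this to $(m!)\sum_{j_1\neq\cdots\neq j_m}\Q(j_1,\cdots,j_m)\,u_{j_1}^2\prod_{l=2}^m\|e_{j_l}^\top\Xi\|^2$. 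Peeling the last index against $\beta(\Q)=\max_{i_1,\cdots,i_{m-1}}\sum_{i_m}\Q(i_1,\cdots,i_m)$ (using $\|e_{j_m}^\top\Xi\|^2\le\delta^2$) and summing the remaining $m-2$ weighted indices via $\sum_j\|e_j^\top\Xi\|^2=\|\Xi\|_{\rm F}^2=K$ and $\sum_{j_1}u_{j_1}^2=1$, I expect $\|\sum\EE XX^\top\|\le(m!)\delta^2 K^{m-2}\beta(\Q)\le(m!)K^{m-2}\beta(\Q)$, since $\delta\le 1$. An analogous, slightly more combinatorial, estimate of $\|\sum\EE X^\top X\|$ should follow the same route.

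Feeding the variance proxy $v\le(m!)K^{m-2}\beta(\Q)$ and the uniform bound $L=(m!)\delta^{m-1}$ into matrix Bernstein yields, with probability $1-n^{-2}$, $\|M\|\lesssim\sqrt{v\log n}+L\log n$, which is exactly the claimed bound $C_1\sqrt{(m!)K^{m-2}\beta(\Q)\log n}+C_2(m!)\delta^{m-1}\log n$. The main obstacle is the symmetry bookkeeping: organizing $\calM_1(\calA-\EE\calA)$ into independent subset-indexed blocks and tracking the $m!$ permutation factors so that the Cauchy--Schwarz losses and the telescoping of $\sum_j\|e_j^\top\Xi\|^2=K$ against $\beta(\Q)$ land precisely on the factor $(m!)K^{m-2}\beta(\Q)$; the one additional check is to verify that the $X^\top X$ variance proxy does not dominate the $XX^\top$ one.
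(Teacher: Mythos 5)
Your proposal follows essentially the same route as the paper's proof: decompose $\calM_1\big((\A-\EE\A)\bigtimes_{k=2}^m\Xi^{\top}\big)$ into independent, subset-indexed blocks that are sums of $m!$ rank-one terms, bound each block's operator norm by $(m!)\delta^{m-1}$ via the incoherence of $\Xi$, control the variance proxy by peeling one index against $\beta(\Q)$ and the remaining ones against $\sum_j\|e_j^{\top}\Xi\|^2=K$, and conclude by matrix Bernstein. Your variance computation is, if anything, slightly cleaner (it retains an extra factor $\delta^2$ that the paper discards), and your flagged check that the $\sum\EE X^{\top}X$ proxy does not dominate corresponds to the paper's own ``in a similar fashion'' step, so there is no gap relative to the published argument.
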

By Davis-Kahan Theorem,  then
\begin{eqnarray*}
\Big\|\hat\Xi^{(t+1)}\Big(\hat\Xi^{(t+1)}\Big)^{\top}-\Xi\Xi^{\top}\Big\|\leq \frac{2\|\calR_1\|+2\|{\rm diag}(\Q)\|+2E_tK^{(m-2)/2}\big\|\A-\EE\A\big\|_{2\delta}}{(1-3\eps_0)^{(m-1)/2}\kappa_K\|\theta\|^m}\\
\leq \frac{2^m(\|\calR_1\|+\|{\rm diag}(\Q)\|)}{\kappa_K\|\theta\|^m}+2E_tK^{(m-2)/2}\frac{\|\A-\EE\A\|_{2\delta}}{(1-3\eps_0)^{(m-1)/2}\kappa_K\|\theta\|^m}
\end{eqnarray*}
Therefore, we obtain
\begin{eqnarray*}
E_{t+1}\leq \frac{2^m\|\calR_1\|+2^m\|{\rm diag}(\Q)\|}{\kappa_K\|\theta\|^m}+2E_tK^{(m-2)/2}\frac{\|\A-\EE\A\|_{2\delta}}{(1-3\eps_0)^{(m-1)/2}\kappa_K\|\theta\|^m}.
\end{eqnarray*}
To ensure the contraction property, it requires 
\begin{equation}\label{eq:dcbm_con}
(1-3\eps_0)^{(m-1)/2}\kappa_K\|\theta\|^m\geq 4 K^{(m-2)/2}\|\A-\EE\A\|_{\delta}.
\end{equation}
By Theorem~\ref{thm:tensor_concentration}, it requires that 
\begin{align*}
(1-&3\eps_0)^{(m-1)/2}\kappa_K\|\theta\|^m\\
&\geq C_1(m!)(4K)^{m/2}\max\bigg\{\sqrt{mn\delta^2\beta(\Q)}, (m\log n)^{(m+3)/2}\delta^{m-2}\sqrt{\calP_{\submax}\thetamax^2\|\theta\|_1^{m-2}n\log n}\bigg\}
\end{align*}
for some large enough absolute constant $C_1>0$. Then, by Lemma~\ref{lem:R1bound}, 
\begin{align*}
E_{t+1}\leq &2^m\frac{\|\calR_1\|+\|\calM_1({\rm diag}(\Q))\|}{\kappa_K\|\theta\|^m}+\frac{1}{2}E_t\\
\leq& \frac{E_t}{2}+C_12^m \frac{\sqrt{(m!)K^{m-2}\beta(\Q)\log n}+(m!)\delta^{m-1}\log n+\kappa_1(m!)\|\theta\|^{m-2}\thetamax^2/\dmin^2}{\kappa_K\|\theta\|^m}
\end{align*}
which proves the first claim.

Then, after at most 
$$
T=O\Big(1\vee m \log \frac{n\|\theta\|}{\thetamax}\Big)
$$
iterations, then with probability at least $1-2n^{-2}$, 
\begin{align*}
E_{T}\leq C_42^m\frac{\sqrt{(m!)K^{m-2}\beta(\Q)\log n}+(m!)\delta^{m-1}\log n+m^2\kappa_1\thetamax^2\|\theta\|^{m-2}/\dmin^2}{\kappa_K\|\theta\|^m}
\end{align*}
which proves the theorem.

\subsection{Proof of Lemma~\ref{lem:R1bound}}
We write
\begin{align*}
\calM_1(\A)\bigtimes_{k=2}^m\Xi^{\top}=\sum_{(i_1,\cdots,i_m)\in\mathfrak{P}(n,m)}\A(i_1,\cdots,i_m)\sum_{(i_1',\cdots,i_m')\in\mathfrak{P}(i_1,\cdots,i_m)}(e_{i_1'})\big(\widetilde\bigotimes_{k=2}^m e_{i_k'}\big)^{\top}(\tilde\otimes_{k=2}^m\Xi)^{\top}
\end{align*}
where $e_i$ denotes the $i$-th canonical basis vector in $\RR^n$. For $(i_1,\cdots,i_m)\in\mathfrak{P}(n,m)$, denote a matrix 
$$
Z_{i_1,\cdots,i_m}=\big(\A(i_1,\cdots,i_m)-\EE\A(i_1,\cdots,i_m)\big)\sum_{(i_1',\cdots,i_m')\in\mathfrak{P}(i_1,\cdots,i_m)}(e_{i_1'})\big(\widetilde\bigotimes_{k=2}^m e_{i_k'}\big)^{\top}(\widetilde\bigotimes_{k=2}^m\Xi)^{\top}
$$
so that
$$
\calM_1\big((\A-\EE\A)\bigtimes_{k=2}^m\Xi^{\top}\big)=\sum_{(i_1,\cdots,i_m)\in\mathfrak{P}(n,m)}Z_{i_1,\cdots,i_m}
$$
which is a sum of independent random matrices. Clearly, 
$
\big\|Z_{i_1,\cdots,i_m}\big\|\leq (m!)\delta^{m-1} 
$
since $\max_i\|e_i^{\top}\Xi\|\leq \delta$. 
In addition, we need the upper bound of $\|W\|$ where the symmetric matrix $W$ is defined by 
$$
W=\sum_{(i_1,\cdots,i_m)\in\mathfrak{P}(n,m)}\EE Z_{i_1,\cdots,i_m}Z_{i_1,\cdots,i_m}^{\top}.
$$
For any $u\in\RR^{n}$ with $\|u\|\leq 1$, observe that 
\begin{align*}
\big<\EE Z_{i_1,\cdots,i_m}Z_{i_1,\cdots,i_m}^{\top}, u\otimes u\big>\leq \Q(i_1,\cdots,i_m)\Big\|\sum_{(i_1',\cdots,i_m')}u_{i_1'}^2\big(\Xi(i_2',:)\tilde\otimes\cdots\tilde\otimes \Xi(i_m',:)\big)\Big\|^2\\
\leq (m!)\Q(i_1,\cdots,i_m)\sum_{(i_1',\cdots,i_m')\in\mathfrak{P}(i_1,\cdots,i_m)}u_{i_1'}^2\|\Xi(i_2',:)\|^2\cdots\|\Xi(i_m',:)\|^2\\
\leq (m!)\Q(i_i,\cdots,i_m)u_{i_1}^2\frac{\theta_{i_2}^2}{\|\theta\|^2}\cdots \frac{\theta_{i_m}^2}{\|\theta\|^2}.
\end{align*}
Then, we get 
$$
\big<W,u\otimes u\big>\leq (m!)K^{m-1} \Q_{\submax}. 
$$
implying that $\|W\|\leq (m!)K^{m-1}\Q_{\submax}$. In a similar fashion, we can show that 
\begin{align*}
\Big\|\sum_{(i_1,\cdots,i_m)\in\mathfrak{P}(n,m)}\EE Z_{i_1,\cdots,i_m}^{\top} Z_{i_1,\cdots,i_m}\Big\|\leq (m!)K^{m-2}\beta(\Q).
\end{align*}
By matrix Bernstein inequality (\cite{tropp2012user}), with probability at least $1-n^{-2}$,
\begin{align*}
\big\|\calM_1\big((\A-\EE\A)\bigtimes_{k=2}^m\Xi^{\top}\big)\big\|\leq C_1\sqrt{(m!)K^{m-2}\beta(\Q)\log n}+C_2(m!)\delta^{m-1}\log n
\end{align*}
which proves the lemma.

\end{document}